\definecolor{DarkGreen}{rgb}{0.1,0.5,0.1}
\renewcommand*{\backref}[1]{}
\renewcommand*{\backrefalt}[4]{%
    \ifcase #1 (Not cited.)%
    \or        (Cited on page~#2)%
    \else      (Cited on pages~#2)%
    \fi}
\Crefname{property}{Property}{Properties}
\Crefname{example}{Example}{Examples}
\Crefname{table}{Table}{Tables}
\tikzset{snake it/.style={decorate, decoration=snake}}
\newcommand*{\tikzmk}[1]{\tikz[remember picture,overlay,] \node (#1) {};\ignorespaces}
\newcommand{\boxit}[1]{\tikz[remember picture,overlay]{\node[yshift=3pt,xshift=4pt,fill=#1,opacity=.25,fit={(A)($(B)+(1.0\linewidth,.8\baselineskip)$)}] {};}\ignorespaces}
\colorlet{mygray}{gray!40}
\let\oldnl\nl
\newcommand{\nonl}{\renewcommand{\nl}{\let\nl\oldnl}}
\theoremstyle{definition}
\newtheorem{assumption}{Assumption}
\newenvironment{example}{\pushQED{\qed}\examplex}{\popQED\endexamplex}
\theoremstyle{remark}
\newtheorem{remark}{Remark}
\Crefname{claim}{Claim}{Claims}
\newtheorem{claim}{Claim}
\newcommand{\AlgEQonePO}{\textup{\textsc{Alg-eq\oldstylenums{1}+po}}}
\newcommand{\DEQ}[1]{\ifstrempty{#1}{\textrm{\textup{DEQ}}}
{\textrm{\textup{DEQ{#1}}}}}
\newcommand{\DEQX}{\textrm{\textup{DEQX}}}
\newcommand{\DEQXzero}{\textrm{\textup{$\DEQX^{0}$}}}
\newcommand{\e}{\mathbf{e}}
\newcommand{\EF}[1]{\ifstrempty{#1}{\textrm{\textup{EF}}}{\textrm{\textup{EF{#1}}}}}
\newcommand{\EFoneone}{${\textrm{\textup{EF}}}^1_1$}
\newcommand{\EFX}{\textrm{\textup{EFX}}}
\newcommand{\EFXzero}{\textrm{\textup{$\EFX^{0}$}}}
\newcommand{\eps}{{\varepsilon}}
\newcommand{\EQ}[1]{\ifstrempty{#1}{\textrm{\textup{EQ}}}
{\textrm{\textup{EQ{#1}}}}}
\newcommand{\EQX}{\textrm{\textup{EQX}}}
\newcommand{\EQXzero}{\textrm{\textup{$\EQX^{0}$}}}
\newcommand{\fPO}{\textrm{\textup{fPO}}}
\newcommand{\I}{\mathcal{I}}
\newcommand{\level}{{\mathrm{level}}}
\newcommand{\Leximin}{\textrm{\textup{Leximin}}}
\newcommand{\M}{{\mathcal{M}}}
\newcommand{\MBB}{\textrm{\textup{MBB}}}
\newcommand{\MMS}{\textrm{\textup{MMS}}}
\newcommand{\N}{{\mathbb{N}}}
\newcommand{\NPC}{\textrm{\textup{NP-complete}}}
\newcommand{\NPH}{\textrm{\textup{NP-hard}}}
\newcommand{\NPhard}{\textup{NP-hard}}
\renewcommand{\O}{{\mathcal{O}}}
\newcommand{\p}{\mathbf p}
\newcommand{\poly}{\textrm{\textup{poly}}}
\newcommand{\PO}{\textrm{\textup{PO}}}
\newcommand{\Prop}[1]{\ifstrempty{#1}{\textrm{\textup{Prop}}}
{\textrm{\textup{Prop{#1}}}}}
\newcommand{\R}{\mathcal{R}}
\newcommand{\ThreePartition}{\textrm{\textsc{3-Partition}}}
\newcommand{\V}{\mathcal{V}}
\newcommand{\VertexCover}{\textrm{\textsc{Vertex Cover}}}
\newcommand{\W}{\mathcal{W}}
\newcommand{\x}{{\mathrm{x}}}
\newcommand{\Z}{\mathbb{Z}}
\let\displaystyle\textstyle
\begin{document}

\title{Equitable Allocations of Indivisible Chores}

\author[1]{Rupert Freeman}
\author[2]{Sujoy Sikdar}
\author[3]{Rohit Vaish}
\author[4]{Lirong Xia}
\affil[1]{Microsoft Research New York City\\
	{\small\texttt{rupert.freeman@microsoft.com}}}
\affil[2]{University of Washington in St. Louis\\
	{\small\texttt{sujoyks@gmail.com}}}
\affil[3]{Rensselaer Polytechnic Institute\\ 
	{\small\texttt{vaishr2@rpi.edu}}}
\affil[4]{Rensselaer Polytechnic Institute\\
	{\small\texttt{xial@cs.rpi.edu}}}

\maketitle

\begin{abstract}
We study fair allocation of indivisible \emph{chores} (i.e., items with non-positive value) among agents with additive valuations. An allocation is deemed \emph{fair} if it is (approximately) \emph{equitable}, which means that the disutilities of the agents are (approximately) equal. Our main theoretical contribution is to show that there always exists an allocation that is simultaneously \emph{equitable up to one chore} (\EQ{1}) and \emph{Pareto optimal} (\PO{}), and to provide a pseudopolynomial-time algorithm for computing such an allocation. In addition, we observe that the Leximin solution---which is known to satisfy a strong form of approximate equitability in the goods setting---fails to satisfy even \EQ{1} for chores. It does, however, satisfy a novel fairness notion that we call \emph{equitability up to any duplicated chore}. Our experiments on synthetic as well as real-world data obtained from the \emph{Spliddit} website reveal that the algorithms considered in our work satisfy approximate fairness and efficiency properties significantly more often than the algorithm currently deployed on Spliddit.
\end{abstract}

\section{Introduction}
\label{sec:Introduction}

Imagine a group of \emph{agents} who must collectively complete a set of undesirable or costly tasks, also known as \emph{chores}. For example, household chores such as cooking, cleaning, and maintenance need to be distributed among the members of the household. As another example, consider the allocation of global climate change responsibilities among the member nations in a treaty~\citep{T02fair}. These responsibilities could entail producing more clean energy, reducing overall emissions, research and development, etc. In both of these cases, it is important that the allocation of chores is \emph{fair} and that it takes advantage of heterogeneity in agents' preferences. For instance, someone might prefer to cook than to clean, while someone else might have the opposite preference. Likewise, different countries might have competitive advantages in different areas.

Problems of this nature can be modeled mathematically as \emph{chore division} problems, first introduced by \citet{G78aha}. Each agent incurs a non-positive utility, or cost (in terms of money, time, or general dissatisfaction), from completing each chore that she reports to a central mechanism. In this paper, our focus is on designing mechanisms to divide the chores among the agents \emph{equitably}. An allocation of chores is equitable if all agents get exactly the same (dis)utility from their allocated chores. Other fairness properties can, of course, be considered too---for instance, \emph{envy-freeness} dictates that no agent should prefer another agent's assigned chores to her own. While this is not the main focus of our work, we do consider (approximate) envy-freeness in conjunction with (approximate) equitability.

Equitable allocations have been studied extensively in the context of allocating \emph{goods} (i.e., items with non-negative value). When the goods are divisible (or, even more generally, in the \emph{cake-cutting} setting), perfectly equitable allocations are guaranteed to exist~\citep{DS61cut,A87splitting}. For indivisible goods, though, perfect equitability might not be possible, but approximate versions can still be achieved~\citep{GMT14near,FSV+19equitable}.

At first glance, the problem of chore division appears similar to the goods division problem. However, there are subtle technical differences between the two settings. In the context of (approximate) envy-freeness, this contrast has been noted in several works~\citep{PS09nperson,BMS+18dividing,BMS+17competitive,BS19algorithms}. To take one example, it is known that an allocation of goods that is both \emph{envy-free up to one good} and Pareto optimal can be found by allocating goods so that the \emph{product} of the agents' utilities---the Nash social welfare---is maximized~\citep{CKM+19unreasonable}. However, maximizing the product of utilities is not sensible when valuations are negative, and no analogous procedure is known for the case of chores.

\begin{table}
	\footnotesize
	\centering
	\begin{tabular}{
>{\centering}m{0.1\textwidth}|
>{\centering}m{0.1\textwidth}|
>{\centering}m{0.15\textwidth}
>{\centering}m{0.15\textwidth}
>{\centering}m{0.15\textwidth}
>{\centering\arraybackslash}m{0.15\textwidth}}
		\toprule
		\multicolumn{2}{c}{} & \textbf{\EQ{1}} & \textbf{\EQX{}} & \textbf{\DEQ{1}} & \textbf{\DEQX{}}\\
		\cmidrule{1-6}
		\scriptsize
		\multirow{4}{*}{\textbf{Without \PO{}}} & \multirow{2}{*}{Existence} & Always exists & Always exists & Always exists & Always exists\\
		& & (\Cref{prop:EQX_Existence_Computation}) & (\Cref{prop:EQX_Existence_Computation}) & (\Cref{prop:DEQX+PO})& (\Cref{prop:DEQX+PO})\\
		\cmidrule{2-6}
		& \multirow{2}{*}{Computation} & Poly time & Poly time & Poly time & \multirow{2}{*}{?} \\
		& & (\Cref{prop:EQX_Existence_Computation}) & (\Cref{prop:EQX_Existence_Computation}) & (\Cref{prop:dEQ1_Computation}) &\\
		\cmidrule{1-6}
		\cmidrule{1-6}
		\cmidrule{1-6}
		\multirow{4}{*}{\textbf{With \PO{}}} & \multirow{2}{*}{Existence} & Always exists & Might not exist & Always exists & Always exists\\
		& & (\Cref{thm:EQ1+PO_pseudopolynomial}) & (\Cref{eg:Nonexistence_EQx+PO_Two_Agents}) & (\Cref{prop:DEQX+PO})& (\Cref{prop:DEQX+PO})\\
		\cmidrule{2-6}
		& \multirow{2}{*}{Computation} & Pseudopoly time & Strongly \NPH{} & \multirow{2}{*}{?} & \multirow{2}{*}{?} \\
		& & (\Cref{thm:EQ1+PO_pseudopolynomial}) & (\Cref{thm:EQX_PO_Strong_NP-hardness}) & &\\
		\bottomrule
	\end{tabular}
	\caption{Summary of our theoretical results. Each cell contains the existence/computation results for various combinations of fairness and efficiency properties. Open questions are marked with `?'.}
	\label{tab:Results}
\end{table}

In this paper, we demonstrate a similar set of differences between the goods and chores settings in the context of equitability. Our focus is on \emph{equitability up to one/any chore} (\EQ{1}/\EQX{}) which requires that pairwise violations of equitability can be eliminated by removing some/any chore from the bundle of the less happy agent.

For goods division, \citet{FSV+19equitable} showed that equitability up to any good and Pareto optimality are achieved simultaneously by the Leximin algorithm.\footnote{The Leximin algorithm maximizes the utility of the least well-off agent, and subject to that maximizes the utility of the second-least, and so on.} However, we show that in the chores setting, \emph{Leximin does not even guarantee equitability up to one chore} (\EQ{1}) (\Cref{eg:Leximin_fails_EQl_and_EF1}). Further, while we are able to give \emph{an algorithm that outputs an \EQ{1} and \PO{} allocation in pseudopolynomial time} (\Cref{thm:EQ1+PO_pseudopolynomial}), modifying a similar algorithm of~\citet{FSV+19equitable}, we show that \emph{an allocation satisfying \EQX{} and \PO{} may not exist}, in contrast to the goods setting (\Cref{eg:Nonexistence_EQx+PO_Two_Agents}). 

The fact that \EQX{}+\PO{} could fail to exist and that the Leximin allocation may not be \EQ{1} leads us to consider other relaxations of perfect equitability. To this end, we define the \emph{equitability up to one/any duplicated chore} (\DEQ{1}/\DEQX{}) properties. These properties require that pairwise equitability can be restored by duplicating a chore from the less happy agent's bundle and adding it to the more happy agent's bundle, rather than removing a chore from the less happy agent's bundle. Interestingly, we find that the ``duplicate'' relaxations are satisfied by the Leximin allocation (\Cref{prop:DEQX+PO}), restoring a formal justification for that algorithm even in the chores setting. \Cref{tab:Results} summarizes our results.

Finally, we complement our theoretical results with extensive simulations on both simulated data and data gathered from the popular fair division website \emph{Spliddit} ~\citep{GP15spliddit}.\footnote{\texttt{\url{http://www.spliddit.org/apps/tasks}}} We find that on a large fraction of instances ($>80\%$), Leximin satisfies all of the approximate properties that we consider, in addition to Pareto optimality. We therefore consider it to be the best choice for use in practice, matching the observation of~\citet{FSV+19equitable} in the case of goods. 
When the runtime of the Leximin algorithm is prohibitive (computing the Leximin allocation is \NPhard), our simulations reveal that our pseudopolynomial algorithm for achieving \EQ{1} and \PO{} is a reasonable choice for achieving these as well as other properties on a large fraction of instances.

\subsection{Related Work}
\label{subsec:RelatedWork}

Fair division of indivisible chores has received considerable interest in recent years. \citet{ARS+17algorithms}, \citet{HL19algorithmic}, \citet{ACL19weighted}, and \citet{ALW19strategyproof} study approximation algorithms for max-min fair share (\MMS{}) allocation of chores.
\citet{BS19algorithms} show that an allocation that is envy-free up to the removal of two chores (\EFoneone{}) and Pareto optimal (\PO{}) always exists and can be computed in polynomial time if the number of agents is fixed. \citet{SH18competitive} has studied competitive equilibria in the allocation of indivisible chores with unequal budgets.

Several papers study a model with \emph{mixed} items, wherein an item can be a good for one agent and a chore for another. \citet{BMS+17competitive} examine this model for \emph{divisible} items and show that unlike the goods-only case, the set of competitive utility profiles~\citep{V74equity,EG59consensus} can be multivalued; for the chores-only problem, the multiplicity can be exponential in the number of agents/items~\citep{BMS+18dividing}. \citet{S18fairly} and \citet{MZ19envy} consider a generalization of the cake-cutting problem to the mixed utilities setting, and study envy-free divisions with connected pieces. \citet{ACI+19fair} study \emph{indivisible} mixed items and provide a polynomial-time algorithm for computing \EF{1} allocations even for non-additive valuations. For the same model, \citet{AMS19polynomial} provide a polynomial-time algorithm for computing allocations that are Pareto optimal (\PO{}) and proportional up to one item (\Prop{1}). \citet{SS19fair} consider envy-free/proportional and Pareto optimal divisions that minimize the number of fractionally assigned items. Notably, none of this work examines equitability for indivisible items.

Equitability for indivisible chores has been studied by \citet{BCL19chore} in a model where the items constitute the vertices of a graph, and each agent should be assigned a connected subgraph. This work does not consider Pareto optimality, and the space of permissible allocations in this model is different from ours, making the two sets of results incomparable. \citet{CKK+12efficiency} study the worst-case welfare loss due to equitability (i.e., `price of equitability') for indivisible chores, but do not consider approximate fairness.

For goods, equitability as a fairness notion has been studied extensively, mostly in the context of cake-cutting~\citep{DS61cut,BJK06better,CP12computability,BFL+12maxsum,CDP13existence,BJK13n,AD15efficiency,PW17lower,C17existence}. Our work bears most similarity to the work of \citet{GMT14near} and \citet{FSV+19equitable}, who define the notions of \EQX{} and \EQ{1}, respectively.
\section{Preliminaries}
\label{sec:Preliminaries}

\paragraph{Problem instance}
An \emph{instance} $\langle [n], [m], \V \rangle$ of the fair division problem is defined by a set of $n \in \N$ \emph{agents} $[n] = \{1,2,\dots,n\}$, a set of $m \in \N$ \emph{chores} $[m] = \{1,2,\dots,m\}$, and a \emph{valuation profile} $\V = \{v_1,v_2,\dots,v_n\}$ that specifies the preferences of every agent $i \in [n]$ over each subset of the chores in $[m]$ via a \emph{valuation function} $v_i: 2^{[m]} \rightarrow \Z_{\leq 0}$. Note that we assume that the valuations are non-positive integers; most of our results hold without this assumption but \Cref{thm:EQ1+PO_pseudopolynomial} requires it.

We will also assume that the valuation functions are \emph{additive}, i.e., for any agent $i \in [n]$ and any set of chores $S \subseteq [m]$, $v_i(S) \coloneqq \sum_{j \in S} v_i(\{j\})$, where $v_i(\emptyset) = 0$. For a singleton chore $j \in [m]$, we will write $v_{i,j}$ instead of $v_i(\{j\})$. The valuation functions are said to be \emph{normalized} if for all agents $i,j \in [n]$, we have $v_i([m]) = v_j([m])$. We will assume throughout, without loss
of generality, that for each chore $j \in [m]$, there exists some agent $i \in [n]$ with a non-zero valuation for
it (i.e., $v_{i,j} < 0$), and for each agent $i \in [n]$, there exists a chore $j \in [m]$ that it has non-zero value for.

\paragraph{Allocation}
An \emph{allocation} $A \coloneqq (A_1,\dots,A_n)$ is an $n$-partition of the set of chores $[m]$, where $A_i \subseteq [m]$ is the \emph{bundle} allocated to the agent $i$ (note that $A_i$ can be empty). Given an allocation $A$, the \emph{utility} of agent $i \in [n]$ for the bundle $A_i$ is $v_i(A_i) = \sum_{j \in A_i} v_{i,j}$.

\paragraph{Equitability}
An allocation $A$ is said to be (a) \emph{equitable} (\EQ) if for every pair of agents $i,k \in [n]$, we have $v_i(A_i) = v_k(A_k)$; (b) \emph{equitable up to one chore} (\EQ1) if for every pair of agents $i,k \in [n]$ such that $A_i \neq \emptyset$, there exists a chore $j \in A_i$ such that $v_i(A_i \setminus \{j\}) \geq v_k(A_k)$, and (c) \emph{equitable up to any chore} (\EQX{}) if for every pair of agents $i,k \in [n]$ such that $A_i \neq \emptyset$ and for every chore $j \in A_i$ such that $v_{i,j} < 0$, we have $v_i(A_i \setminus \{j\}) \geq v_k(A_k)$. These notions have been previously studied for goods by \citet{GMT14near} and \citet{FSV+19equitable}. Our presentation of the notions of (approximate) equitability for chores---in particular, the idea of removing a chore from the less-happy agent's bundle---follows the formulation used by \citet{ACI+19fair} and \citet{A18almost} in defining the analogous relaxations of envy-freeness (see below).

\paragraph{Envy-freeness}
An allocation $A$ is said to be (a) \emph{envy-free} (\EF{}) if for every pair of agents $i,k \in [n]$, we have $v_i(A_i) \geq v_i(A_k)$; (b) \emph{envy-free up to one chore} (\EF{1}) if for every pair of agents $i,k \in [n]$ such that $A_i \neq \emptyset$, there exists a chore $j \in A_i$ such that $v_i(A_i \setminus \{j\}) \geq v_i(A_k)$, and (c) \emph{envy-free up to any chore} (\EFX{}) if for every pair of agents $i,k \in [n]$ such that $A_i \neq \emptyset$ and for every chore $j \in A_i$ such that $v_{i,j} < 0$, we have $v_i(A_i \setminus \{j\}) \geq v_i(A_k)$. The notions of \EF{}, \EF{1}, and \EFX{} were proposed in the context of goods allocation by \citet{F67resource}, \citet{B11combinatorial}, and \citet{CKM+19unreasonable}, respectively.\footnote{An earlier work by \citet{LMM+04approximately} studied a weaker approximation of envy-freeness for goods, but their algorithm is known to compute an \EF{1} allocation.}

It is easy to see that envy-freeness and equitability (and their corresponding relaxations) become equivalent when the valuations are \emph{identical}, i.e., for every $j \in [m]$, $v_{i,j} = v_{k,j}$ for all $i,k \in [n]$.

\begin{restatable}{prop}{IdenticalValuations}
 \label{prop:IdenticalValuations}
When agents have identical valuations, an allocation satisfies \EF{}/\EFX{}/\EF{1} if and only if it satisfies \EQ{}/\EQX{}/\EQ{1}.
\end{restatable}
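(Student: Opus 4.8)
The plan is to exploit the single structural fact that identical valuations collapse the two reference quantities appearing in the definitions. Write $v \coloneqq v_1 = \dots = v_n$ for the common valuation function, so that $v_i(S) = v(S)$ for every agent $i$ and every bundle $S \subseteq [m]$. The crucial observation is that for any two agents $i,k$ we then have $v_i(A_k) = v(A_k) = v_k(A_k)$: the utility agent $i$ assigns to $k$'s bundle (the quantity compared against in envy-freeness) coincides with the utility $k$ derives from her own bundle (the quantity compared against in equitability).

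With this identity in hand, the two relaxed notions are immediate. For $\EF{1}$ versus $\EQ{1}$, each definition requires, for every pair $i,k$ with $A_i \neq \emptyset$, the existence of a chore $j \in A_i$ satisfying $v_i(A_i \setminus \{j\}) \geq v_i(A_k)$ (for $\EF{1}$) or $v_i(A_i \setminus \{j\}) \geq v_k(A_k)$ (for $\EQ{1}$). Substituting the identity above turns both into the single condition $v(A_i \setminus \{j\}) \geq v(A_k)$, so the two properties hold on exactly the same allocations. The argument for $\EFX{}$ versus $\EQX{}$ is verbatim the same, now quantified over all $j \in A_i$ with $v_{i,j} < 0$; I would note that the exception set $\{j \in A_i : v_{i,j} < 0\}$ is itself the same for every agent under identical valuations, so even the range of the quantifier matches.

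The only case requiring a second's thought is $\EF{}$ versus $\EQ{}$, since envy-freeness is a one-sided inequality whereas equitability is an equality. Here I would argue as follows: $\EF{}$ asserts $v(A_i) \geq v(A_k)$ for every \emph{ordered} pair $(i,k)$; applying this to both $(i,k)$ and $(k,i)$ forces $v(A_i) = v(A_k)$, which is exactly $\EQ{}$. Conversely, $\EQ{}$ gives $v(A_i) = v(A_k)$, which in particular yields $v_i(A_i) = v(A_i) \geq v(A_k) = v_i(A_k)$, i.e.\ $\EF{}$.

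There is no genuine obstacle here: the entire content of the statement is packaged into the opening identity $v_i(A_k) = v_k(A_k)$, after which the relaxed equivalences are syntactic. The lone subtlety is the asymmetry between the one-sided envy condition and the two-sided equitability condition in the exact case, and this is resolved simply by invoking envy-freeness symmetrically over both orderings of each pair.
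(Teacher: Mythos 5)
Your proof is correct and takes exactly the route the paper intends: the paper offers no written proof for this proposition (it is prefaced only by "it is easy to see"), and your argument---reducing every comparison to the identity $v_i(A_k) = v(A_k) = v_k(A_k)$ under identical valuations, then handling the exact case by applying \EF{} to both orderings of each pair to recover the equality demanded by \EQ{}---is precisely the omitted argument. Nothing is missing; in fact your remark about the exception set in \EFX{}/\EQX{} is not even needed, since the quantifier ``$j \in A_i$ with $v_{i,j}<0$'' refers only to agent $i$'s own valuations and is therefore syntactically identical in both definitions.
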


\paragraph{Pareto optimality}
An allocation $A$ is Pareto dominated by allocation $B$ if $v_k(B_k) \geq v_k(A_k)$ for every agent $k \in [n]$ with at least one of the inequalities being strict. A \emph{Pareto optimal} (\PO{}) allocation is one that is not Pareto dominated by any other allocation.

\paragraph{Leximin-optimal allocation}
A \Leximin{}-optimal (or simply Leximin) allocation of chores is one that maximizes the minimum utility that any agent achieves, subject to which the second minimum utility is maximized, and so on. The utilities induced by a \Leximin{} allocation are unique, although there may exist more than one such allocation~\citep{DS61cut}.
\section{Theoretical Results}
\label{sec:Theoretical_Results}

This section presents our theoretical contributions. We will first consider equitability and its relaxations (\Cref{subsec:EQ}), followed by combining these notions with Pareto optimality (\Cref{subsec:EQ_PO}), and subsequently also considering envy-freeness (\Cref{subsec:EQ_PO_EF}). Finally, we will discuss a novel approximation of equitability called \emph{equitability up to a duplicated chore} (\Cref{subsec:Dup_EQ}).

\subsection{Equitability and its Relaxations}
\label{subsec:EQ}

As discussed previously, an equitable (\EQ{}) allocation is not guaranteed to exist when allocating indivisible chores. In addition, the computational problem of determining whether a given instance has an equitable allocation turns out to be \NPC{} even for identical valuations (\Cref{prop:EQNPhardIdenticalVals}). The proof uses a standard reduction from \ThreePartition{} and is therefore omitted.

\begin{restatable}{prop}{EQNPhardIdenticalVals}
 \label{prop:EQNPhardIdenticalVals}
Determining whether a given fair division instance admits an equitable $(\EQ{})$ allocation is strongly \NPC{} even for identical valuations.
\end{restatable}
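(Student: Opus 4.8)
The plan is to show both membership in \NP{} and strong \NPH{} via a reduction from \ThreePartition{}. Membership is immediate: an allocation $A$ is a polynomial-size certificate, and under additive (indeed identical) valuations one verifies in polynomial time whether all bundle values $v_i(A_i)$ coincide. The substance lies in the hardness direction.

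Recall a \ThreePartition{} instance: a multiset of $3m$ positive integers $a_1,\dots,a_{3m}$ with $\sum_j a_j = mB$ and $B/4 < a_j < B/2$ for every $j$; the question is whether the $a_j$ can be partitioned into $m$ triples each summing to $B$. This problem is strongly \NPC{}. I would construct a fair division instance with $n \coloneqq m$ agents and $3m$ chores, assigning to chore $j$ the identical value $v_{i,j} \coloneqq -a_j$ for all agents $i$ (a non-positive integer, as required by the model). The total disutility is then $\sum_j a_j = mB$.

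The key equivalence to establish is that the constructed instance admits an \EQ{} allocation if and only if the \ThreePartition{} instance is a yes-instance. For the forward direction, if $A$ is equitable then every bundle has the same value, and since the values sum to $-mB$ across $m$ agents, each bundle must have value exactly $-B$, i.e., its chores have $a$-values summing to $B$. Here the constraint $B/4 < a_j < B/2$ does the real work: a subset summing to $B$ can contain neither at most two elements (whose sum is strictly less than $2 \cdot B/2 = B$) nor at least four (whose sum exceeds $4 \cdot B/4 = B$), so each bundle is forced to be a triple. Conversely, any valid \ThreePartition{} yields an allocation in which every agent receives a triple of total value $-B$, which is equitable. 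The hardness thus reduces to the claim that equal-value bundles must be triples, which I expect to be the only non-routine point; it hinges entirely on the magnitude restriction on the $a_j$ and is precisely why \ThreePartition{} (rather than \Partition{}) is the right source problem.

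Finally, since the values $v_{i,j} = -a_j$ are bounded by the magnitudes appearing in the \ThreePartition{} instance and the reduction is computable in polynomial time, strong \NPC{}-ness transfers: the numbers in the produced instance remain polynomially bounded whenever those of the source are, so no pseudopolynomial blow-up occurs and the hardness survives a unary encoding.
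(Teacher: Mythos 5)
Your proposal is correct and matches the paper's approach: the paper omits the proof, stating only that it ``uses a standard reduction from \ThreePartition{},'' and your construction (identical valuations $v_{i,j}=-a_j$, so that an \EQ{} allocation forces every bundle to sum to $-B$, with the bounds $B/4 < a_j < B/2$ forcing triples) is exactly that standard reduction, together with the routine \NP{}-membership and strong-hardness transfer arguments.
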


The negative results regarding the existence and computation of exact equitability are in complete contrast with those of its relaxations. Indeed, when allocating indivisible chores, there always exists an allocation that is equitable up to any chore (\EQX{}). Furthermore, such an allocation can be computed in polynomial time via a simple greedy procedure (\Cref{prop:EQX_Existence_Computation}). This algorithm is a straightforward adaptation of the algorithm of \citet{GMT14near} for computing \EQX{} allocations of goods.

\begin{restatable}[]{prop}{EQXExistenceComputation}
 \label{prop:EQX_Existence_Computation}
 An \EQX{} allocation of chores always exists and can be computed in polynomial time.
\end{restatable}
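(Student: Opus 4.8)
The plan is to mirror the greedy procedure of \citet{GMT14near} for goods, reversing the roles of ``rich'' and ``poor'' agents to account for the fact that chores carry non-positive value. Concretely, I would build the allocation incrementally over $m$ rounds: maintain a partial allocation together with each agent's current disutility, and in each round pick a currently \emph{happiest} agent $i$ (one with the largest, i.e.\ least negative, current utility) and hand it a \emph{most costly} remaining chore, namely some unassigned $j$ minimizing $v_{i,j}$. Since one chore is removed from the pool each round, the procedure halts after $m$ rounds with a complete allocation, and each round costs $O(n+m)$ to locate the happiest agent and its most costly remaining chore; this already yields the claimed polynomial running time and, a fortiori, existence.

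For correctness I would first reformulate \EQX{}: because utilities are non-positive and removing a chore can only raise an agent's utility, the tightest instance of the defining inequality for a fixed $i$ is obtained by deleting $i$'s \emph{least} costly negatively-valued chore (the one with $v_{i,j}$ closest to $0$). Thus it suffices to exhibit, for each agent $i$ with a negatively-valued chore, a single chore $j^\ast \in A_i$ that is least costly for $i$ and satisfies $v_i(A_i \setminus \{j^\ast\}) \ge v_k(A_k)$ for all $k$; the inequality for every other $j \in A_i$ with $v_{i,j} < 0$ then follows because $v_{i,j} \le v_{i,j^\ast}$, so $v_i(A_i \setminus \{j\}) \ge v_i(A_i \setminus \{j^\ast\})$.

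The crux is the following structural observation about the greedy order, which I would prove next: the chores that any single agent $i$ receives are handed to it in non-increasing order of cost (equivalently, non-decreasing order of value). Indeed, if $i$ receives $c$ before $c'$, then $c'$ was already available when $c$ was chosen, so the ``most costly remaining'' rule forces $v_{i,c} \le v_{i,c'}$. Consequently the \emph{last} negatively-valued chore that $i$ receives is a least costly one, and everything $i$ is given afterwards is valued $0$ and does not affect its utility---this is exactly the $j^\ast$ required above. I would finish with the standard ``time-of-assignment'' argument: at the round in which $j^\ast$ was assigned, $i$ was happiest, so $i$'s utility just before that round was at least every other agent's utility at that time; the former equals $v_i(A_i \setminus \{j^\ast\})$ (later additions being valueless), while the latter only decreases as the remaining chores are handed out, hence it dominates each final $v_k(A_k)$.

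The main obstacle to flag is that the obvious adaptation---assigning each chore to the happiest agent in an \emph{arbitrary} fixed order---does \emph{not} suffice: the time-of-assignment argument then only controls the removal of the \emph{last} chore an agent receives, which gives \EQ{1} but can fail \EQX{} (for instance, an agent handed a large chore after a small one). Choosing a most costly remaining chore in each round is precisely what aligns ``last received'' with ``least costly,'' and the only remaining care is the bookkeeping around zero-valued chores (which are ordered after the negatively-valued ones and excluded from the \EQX{} quantifier) and tie-breaking among equally costly chores.
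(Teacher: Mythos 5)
Your proposal is correct and takes essentially the same approach as the paper: the identical greedy rule (the currently happiest agent receives a most-disliked remaining chore), justified by the same key observation that each agent therefore receives its chores in non-increasing order of cost, so the most recently assigned chore is its least-disliked one. The only difference is presentational---the paper maintains \EQX{} as an inductive invariant over the partial allocations, whereas you give a direct time-of-assignment argument on the final allocation---but the underlying reasoning is the same.
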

\begin{proof} (Sketch)
Our algorithm iteratively assigns the chores to the agents according to the following assignment rule: At each step, the \emph{happiest} agent (i.e., one whose utility is closest to zero) is asked to select a chore from the set of available chores that it \emph{dislikes} the most (i.e., the chore that gives it the most negative utility).

It is easy to see that the chore assigned most recently to any agent is also its favorite (or least disliked) chore in its bundle. Thus, if an allocation is \EQX{} before assigning a chore, then it continues to be \EQX{} after it. The claim now follows by induction, since an empty allocation is \EQX{} to begin with.
\end{proof}

The positive result in \Cref{prop:EQX_Existence_Computation} offers an interesting comparison between envy-freeness and equitability: It is not known whether an \EFX{} allocation is even guaranteed to exist for chores, but an \EQX{} allocation can always be computed in polynomial time. 

\subsection{Equitability and Pareto Optimality}
\label{subsec:EQ_PO}

We will now consider equitability together with Pareto optimality. From \Cref{prop:EQNPhardIdenticalVals}, it is easy to see that checking the existence of an equitable and Pareto optimal (\EQ{}+\PO{}) allocation is strongly \NPH{} (since every allocation is Pareto optimal under identical valuations). Therefore, we will strive for achieving Pareto optimality alongside approximate equitability, specifically \EQ{1} and \EQX{}.

We will start by considering equitability up to any chore (\EQX{}) and Pareto optimality. For goods allocation, \citet{FSV+19equitable} have shown that equitability up to any good and Pareto optimality can be simultaneously achieved using the Leximin allocation.\footnote{This result requires the valuations to be strictly positive.} By contrast, as we show in \Cref{eg:Nonexistence_EQx+PO_Two_Agents}, there might not exist an allocation that is equitable up to any chore and Pareto optimal, even when there are only two agents.

\begin{example}[\textbf{Non-existence of \EQX{}+\PO{}}]
\label{eg:Nonexistence_EQx+PO_Two_Agents}
Consider an instance with three chores $c_1,c_2,c_3$ and two agents $a_1,a_2$ with strictly negative (and normalized) valuations as shown below:
\begin{table}[ht]
\centering
\begin{tabular}{ c|ccc }
	& $c_1$ & $c_2$ & $c_3$\\ \hline
  $a_1$ & $-2$ & $-50$ & $-50$\\
  $a_2$ & $-97$ & $-4$ & $-1$
\end{tabular}
\end{table}

Of the eight possible allocations in the above instance, the two allocations that assign all chores to a single agent, namely $A^{1} \coloneqq (\{c_1,c_2,c_3\},\{\emptyset\})$ and $A^{2} \coloneqq (\{\emptyset\},\{c_1,c_2,c_3\})$ violate \EQX{} and can be immediately ruled out. Any other allocation must assign exactly one chore to one agent and two to the other.

Of the three allocations in which $a_1$ is assigned exactly one chore, namely $A^3 \coloneqq (\{c_1\},\{c_2,c_3\})$, $A^4 \coloneqq (\{c_2\},\{c_1,c_3\})$, and $A^5 \coloneqq (\{c_3\},\{c_1,c_2\})$, none satisfies \EQX{}. Therefore, these allocations can be ruled out as well.

This leaves us with the three allocations in which $a_2$ is assigned exactly one chore, namely $A^6 \coloneqq (\{c_1,c_2\},\{c_3\})$, $A^7 \coloneqq (\{c_2,c_3\},\{c_1\})$, and $A^8 \coloneqq (\{c_1,c_3\},\{c_2\})$. Among these, only $A^7$ satisfies \EQX{}. However, $A^7$ is Pareto dominated by the allocation $A^3$; indeed, $v_1(A^7) = -100 < v_1(A^3) = -2$ and $v_2(A^7) = -97 < v_2(A^3) = -5$. Therefore, the above instance does not admit an \EQX{}+\PO{} allocation.
\end{example}

To make matters worse, determining whether a given instance admits an \EQX{} and \PO{} allocation turns out to be strongly \NPH{}.

\begin{restatable}[\textbf{Strong \NPH{}ness of \EQX{}+\PO{}}]{theorem}{EQXPOStrongNPhardness}
 \label{thm:EQX_PO_Strong_NP-hardness}
 Determining whether a given fair division instance admits an allocation that is simultaneously equitable up to any chore $(\EQX{})$ and Pareto optimal $(\PO{})$ is \textup{strongly \NPH{}}, even for strictly negative and normalized valuations.
\end{restatable}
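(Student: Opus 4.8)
The plan is to reduce from \ThreePartition{}, which is strongly \NPH{}; since the numerical parameters of a \ThreePartition{} instance are polynomially bounded, any polynomial-time many-one reduction that keeps these parameters polynomially bounded will establish \emph{strong} \NPH{}ness. Recall that a \ThreePartition{} instance consists of $3k$ positive integers $s_1,\dots,s_{3k}$ with $\sum_j s_j = kT$ and $T/4 < s_j < T/2$ for every $j$, and asks whether $[3k]$ can be partitioned into $k$ triples each summing to exactly $T$. I would map this to a chore-division instance with $k$ agents---one per target triple---together with a collection of \emph{number chores} $c_1,\dots,c_{3k}$, where $c_j$ encodes $s_j$. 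The constraint $T/4 < s_j < T/2$ is exactly what forces any bundle of total disutility $T$ to consist of precisely three number chores, and it is this rigidity that I want the \EQX{} condition to exploit.

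The core of the construction is to choose the valuations so that (i) the \PO{} requirement eliminates every allocation in which the total disutilities are \emph{unequal}, while (ii) the \EQX{} requirement, once disutilities are equal, pins each agent's bundle to total disutility exactly $-T$. For the base valuations I would let every agent value $c_j$ at $-s_j$, which makes the instance normalized (each agent's total is $-kT$) and strictly negative, and makes perfect equitability correspond to each agent receiving a triple summing to $T$. Since identical valuations alone make \emph{every} allocation \PO{} (which would trivialize the problem), I would superimpose a small, carefully scaled perturbation---either tie-breaking weights across agents or a constant number of auxiliary gadget chores---whose sole role is to create strict Pareto improvements out of \emph{imbalanced} allocations, in the same spirit as the domination of $A^7$ by $A^3$ in \Cref{eg:Nonexistence_EQx+PO_Two_Agents}. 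The perturbation must be large enough to break ties in favour of concentrating disutility, yet small enough that it does not disturb the ``bundle $=-T$'' characterisation coming from the integer structure $T/4<s_j<T/2$.

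For correctness I would argue both directions. If the \ThreePartition{} instance is a yes-instance, assigning each triple to its agent gives every agent disutility exactly $-T$; I would verify that this allocation is \EQX{} (indeed \EQ{}) and exhibit supporting weights (or a competitive-style price certificate) proving it is \PO{}, using the perturbation to rule out the would-be dominating swaps. Conversely, if the instance is a no-instance, I would show that no allocation is simultaneously \EQX{} and \PO{}: any allocation with equal disutilities would require a partition into triples summing to $T$, which does not exist, so the \EQX{} allocations are exactly the ``nearly balanced but not exactly balanced'' ones, and for each of these I would construct a Pareto-improving reallocation---mirroring the example---that shifts disutility off the happiest agent, thereby violating \PO{}.

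The hard part will be the simultaneous control of \PO{} and \EQX{}: I need the perturbation to make \emph{every} imbalanced or not-exactly-$T$ allocation Pareto-dominated while leaving the genuine triple-allocations \PO{}, all without breaking normalization or strict negativity and while keeping every number polynomially bounded (so that strong \NPH{}ness is preserved). Establishing the no-instance direction---that \emph{all} \EQX{} allocations are dominated---is the most delicate step, since it demands a structural understanding of the entire set of \EQX{} allocations rather than a single witness; I expect to handle it by first proving that \EQX{} forces the disutility profile into a narrow band around $-T$ (using the bounds on the $s_j$) and then showing that, within that band, efficiency can always be strictly improved unless the bundles split exactly into $T$-triples.
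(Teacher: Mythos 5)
Your reduction source (\ThreePartition{}) matches the paper's, but the mechanism you build on it cannot work, and the failure is structural rather than a matter of detail. You want \PO{} to ``eliminate every allocation in which the total disutilities are unequal,'' while keeping the valuations identical up to a perturbation small enough not to disturb the ``bundle $=-T$'' characterisation. These two demands are incompatible. Suppose each agent values chore $c_j$ at $-s_j$ plus a perturbation, where the $s_j$ are integers and the perturbations have total magnitude below $1$ (after scaling, the same argument works for integer perturbations that are small relative to the base values). If an allocation $B$ Pareto dominates $A$, then $v_i(B_i) \geq v_i(A_i)$ for every agent $i$, which by integrality forces $\sum_{j \in B_i} s_j \leq \sum_{j \in A_i} s_j$ for every $i$; but both sides sum over all agents to the conserved quantity $\sum_j s_j$, so in fact $\sum_{j \in B_i} s_j = \sum_{j \in A_i} s_j$ for \emph{every} $i$. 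Hence Pareto dominance never changes any agent's base disutility: an imbalanced allocation can only be dominated by an equally imbalanced one, every achievable disutility profile contains a globally \PO{} allocation (take a maximal element of its finite dominance class), and \PO{} therefore never removes imbalance. In a no-instance of \ThreePartition{}, an \EQX{} allocation always exists (\Cref{prop:EQX_Existence_Computation}), it is necessarily imbalanced, and nothing prevents it (or a same-profile rearrangement of it) from being \PO{}; with generic perturbations it typically is. So your no-instance direction collapses. Relatedly, your proposed fix of ``shifting disutility off the happiest agent'' is not a Pareto improvement at all---it strictly hurts the recipient. The domination of $A^7$ by $A^3$ in \Cref{eg:Nonexistence_EQx+PO_Two_Agents} is possible only because those valuations are strongly heterogeneous, so total disutility is \emph{not} conserved across allocations; near-identical valuations rule out exactly this phenomenon.

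The paper's proof of \Cref{thm:EQX_PO_Strong_NP-hardness} avoids the obstruction by reversing the division of labour between \PO{} and \EQX{}, and by using genuinely heterogeneous (not perturbatively close) valuations. Each main agent $a_i$ values main chore $C_j$ at $-b_j$, its own signature chore $S_i$ at $-1$, and all other chores at a huge $-L$; an additional dummy agent $d$ values two dummy chores $D_1,D_2$ at $-1$ and $-B$ and everything else at $-L'$. \PO{} is used only to pin the gadget chores to their intended owners, via swap arguments exploiting the huge gaps ($-1$ versus $-L$). The balancing is then done by \EQX{}, not \PO{}: once $D_1,D_2 \in A_d$, the dummy agent's utility is exactly $-B-1$, and the \EQX{} condition with $D_1$ removed forces $v_i(A_i) \leq -B$ for every main agent; a parity trick (all $b_j$ and $B$ even) tightens this to $v_i(A_i \setminus \{S_i\}) \leq -B$, and a counting argument over the $3r$ main chores, whose total value is exactly $-rB$, forces each main agent's main-chore load to equal $B$, recovering a partition. (Note the paper also drops the cardinality-$3$ constraint on \ThreePartition{}, so the $T/4 < s_j < T/2$ rigidity you invoke is not needed.) If you want to salvage your plan, the lesson is that the heterogeneity cannot be a small perturbation: you need gadget agents and chores whose valuations differ by large (though still polynomially bounded) amounts, with \PO{} pinning the gadgets and \EQX{} against a fixed-utility reference agent doing the equalisation.
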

\begin{proof}
We will show a reduction from \ThreePartition{}, which is known to be strongly \NPhard{} \citep[Theorem 4.4]{GJ79computers}. An instance of \ThreePartition{} consists of a set $X = \{b_1,\dots,b_{3r}\}$ of $3r$ positive integers where $r \in \N$, and the goal is to find a partition of $X$ into $r$ subsets $X^1,\dots,X^r$ such that the sum of numbers in each subset is equal to $B \coloneqq \frac{1}{r} \sum_{b_i \in X} b_i$.\footnote{Note that we do not require the sets $X^1,\dots,X^r$ to be of cardinality three each; \ThreePartition{} remains strongly \NPhard{} even without this constraint.} We will assume, without loss of generality, that for every $i \in [3r]$, $b_i$ is even and $b_i \geq 2$. As a result, we can also assume, without loss of generality, that $B$ is even.

We will construct a fair division instance with $r+1$ agents and $4r+2$ chores (see \Cref{tab:EQX+PO_reduced_instance}). The set of agents consists of $r$ \emph{main} agents $a_1,\dots,a_r$ and a \emph{dummy} agent $d$. The set of chores consists of $3r$ \emph{main} chores $C_1,\dots,C_{3r}$, $r$ \emph{signature} chores $S_1,\dots,S_r$, and two \emph{dummy} chores $D_1,D_2$. The valuations of the agents are specified as follows: For every $i \in [r]$ and $j \in [3r]$, agent $a_i$ values the main chore $C_j$ at $-b_j$, the signature chore $S_i$ at $-1$, and all other chores at a large negative number $-L$, where $-L<-rB-1$. The dummy agent $d$ values the dummy chores $D_1$ and $D_2$ at $-1$ and $-B$, respectively, and all other chores at a large negative number $-L'$. In the interest of having normalized valuations, we can set $L' \coloneqq \frac{(r-1)B + (r+1)L}{4r}$. It is easy to show using standard calculus that $-L' < -B$ for all $r \geq 3$. Since the condition $r \geq 3$ holds without loss of generality, we will assume throughout that $-L' < -B$.
\begin{table}[t]
\centering
\begin{tabular}{ c||ccc|cccc|cc }
& $C_1$ & \ldots & $C_{3r}$ & $S_1$ & $S_2$ & \ldots & $S_{r}$ & $D_1$ & $D_2$ \\
\hline
\hline
$a_1$ & $-b_1$ & \ldots & $-b_{3r}$ & $-1$ & $-L$ & \ldots & $-L$ & $-L$ & $-L$ \\
$a_2$ & $-b_1$ & \ldots & $-b_{3r}$ & $-L$ & $-1$ & \ldots & $-L$ & $-L$ & $-L$ \\
\vdots && \vdots &&& \vdots && &\vdots &\\
$a_r$ & $-b_1$ & \ldots & $-b_{3r}$ & $-L$ & $-L$ & \ldots & $-1$ & $-L$ & $-L$ \\
$d$ & $-L'$ & \ldots & $-L'$ & $-L'$ & $-L'$ & \ldots & $-L'$ & $-1$ & $-B$\\
\end{tabular}
\caption{Chores instance used in the proof of \Cref{thm:EQX_PO_Strong_NP-hardness}.}
\label{tab:EQX+PO_reduced_instance}
\end{table}

We will now argue the equivalence of solutions.

$(\Rightarrow)$ Let $X^1,\dots,X^r$ be a solution of \ThreePartition{}. Then, the desired allocation $A$ can be constructed as follows: For every $i \in [r]$, the main agent $a_i$ gets the signature chore $S_i$ as well as the chores corresponding to the numbers in $X^i$. The dummy agent gets the two dummy chores. The allocation $A$ is Pareto optimal because each chore is assigned to an agent that has the highest valuation for it (thus, $A$ maximizes social welfare). Also, each agent's valuation in $A$ is $-B-1$, implying that $A$ is equitable, and hence also \EQX{}.

$(\Leftarrow)$ Now suppose that there exists an \EQX{} and Pareto optimal allocation $A$. Below, we will make a series of observations about $A$ that will help us infer a solution of \ThreePartition{} using $A$.

\begin{claim}
No agent gets an empty bundle in $A$.
\label{claim:No_empty_bundles}
\end{claim}
\begin{proof} (of \Cref{claim:No_empty_bundles})
If an agent gets an empty bundle, then some other agent will get four or more chores (as more than $4r$ chores will need to be allocated among $r$ other agents). Since all valuations are strictly negative, this results in a violation of \EQX{}.
\end{proof}

\begin{claim}
Each main agent $a_i$ gets its signature chore $S_i$ in $A$.
\label{claim:Signature_chores}
\end{claim}
\begin{proof} (of \Cref{claim:Signature_chores})
From \Cref{claim:No_empty_bundles}, we know that $a_i$ owns at least one chore in $A$. Fix any chore $j \in A_i$. Suppose $S_i$ is assigned to an agent $k$ in $A$. Notice that the valuation of agent $k$ for $S_i$ is either $-L$ or $-L'$ (depending of whether $k$ is a main or a dummy agent). This is also the smallest valuation that agent $k$ has for \emph{any} chore (recall that $-L < -rB-1$ and $-L' < -B$). Furthermore, since $-b_i \leq -2$ for every $i \in [3r]$, $S_i$ is the unique favorite chore of agent $a_i$. Therefore, after exchanging the chores $j$ and $S_i$, the valuation of agent $k$ cannot decrease (due to additivity), and the valuation of agent $a_i$ necessarily increases. Thus, the new allocation is a Pareto improvement over $A$, which is a contradiction.
\end{proof}

\begin{claim}
The chore $D_1$ is assigned to the dummy agent $d$ in $A$.
\label{claim:Dummy_agent_gets_D1}
\end{claim}
\begin{proof} (of \Cref{claim:Dummy_agent_gets_D1})
By an argument similar to that in the proof of \Cref{claim:Signature_chores}, we can show that if $D_1$ is not assigned to $d$, then a Pareto improving swap between $d$ and the owner of $D_1$ is possible.
\end{proof}

\begin{claim}
The chore $D_2$ is assigned to the dummy agent $d$ in $A$.
\label{claim:Dummy_agent_gets_D2}
\end{claim}
\begin{proof} (of \Cref{claim:Dummy_agent_gets_D2})
Suppose, for contradiction, that $D_2$ is assigned to main agent $a_i$ in $A$. From \Cref{claim:Signature_chores}, we know that $a_i$ is also assigned its signature chore $S_i$. Since $S_i$ is the favorite chore of $a_i$, the \EQX{} condition requires that for every other main agent $a_k$, 
$$v_k(A_k) \leq v_i(A_i \setminus \{S_i\}) \leq v_i(\{D_2\}) = -L.$$

Even if $a_k$ is assigned all the remaining chores whose assignment has not been finalized yet (this includes the $3r$ main chores), its valuation will still only be $-rB-1 > -L$. This would imply a violation of \EQX{} condition between $a_i$ and $a_k$, which is a contradiction.
\end{proof}

From \Cref{claim:Dummy_agent_gets_D1,claim:Dummy_agent_gets_D2}, we know that $D_1,D_2 \in A_d$. Therefore, by \EQX{} condition, the following must hold for every main agent $a_i$:
$$v_i(A_i) \leq v_d(A_d \setminus \{D_1\}) \leq v_d(\{D_2\}) = -B.$$

From \Cref{claim:Signature_chores}, we know that $a_i$ gets its signature chore $S_i$. Thus, the valuation of $a_i$ for the remaining items in its bundle must be
\begin{equation}
v_i(A_i \setminus \{S_i\}) \leq -B+1.
\label{eqn:Valuation_Bound}
\end{equation}

Since the assignment of all signature and dummy chores has been fixed, the set $A_i \setminus \{S_i\}$ can only have main chores. By assumption, main agents have even-valued valuations for main chores. By additivity of valuations, the quantity $v_i(A_i \setminus \{S_i\})$ must also be even. Also, $-B$ is even, so $-B+1$ must be odd, and therefore the inequality in \Cref{eqn:Valuation_Bound} must be strict. Thus, $v_i(A_i \setminus \{S_i\}) \leq -B$.

We can now infer a solution of \ThreePartition{} as follows: For every $i \in [r]$, the set $X^i$ contains those numbers whose corresponding chores are included in $A_i \setminus \{S_i\}$. Since $v_i(A_i \setminus \{S_i\}) \leq -B$, it follows that all main chores must be assigned among the main agents, implying that $X^1,\dots,X^r$ constitute a valid partition of $X$. Furthermore, the sum of numbers in the set $X^i$ cannot exceed $B$, or otherwise the sum of numbers in some other set $X^k$ will be strictly less than $B$, which would violate the above inequality. Hence, $X^1,\dots,X^r$ is a valid solution of \ThreePartition{}, as desired.
\end{proof}

The negative results concerning the existence and computation of \EQX{}+\PO{} lead us to consider a weaker relaxation of equitability, namely equitability up to one chore (\EQ{1}). A natural starting point in studying the existence of \EQ{1}+\PO{} allocations is the Leximin solution, as it yields strong positive results for the goods setting~\citep{FSV+19equitable}. Unfortunately, as \Cref{eg:Leximin_fails_EQl_and_EF1} shows, Leximin sometimes fails to satisfy \EQ{1} (as well as \EF{1}) for chores.

\begin{example}[\textbf{\Leximin{} fails \EQ{1} and \EF{1}}]
\label{eg:Leximin_fails_EQl_and_EF1}
Consider the following instance with four chores and three agents with normalized and strictly negative valuations:
\begin{table}[ht]
\centering
\begin{tabular}{ c|ccccc }
& $c_1$ & $c_2$ & $c_3$ & $c_4$\\ \hline
  $a_1$ & $-1$ & $-5$ & $-5$ & $-5$\\
  $a_2$ & $-1$ & $-2$ & $-2$  & $-11$\\
  $a_3$ & $-6$ & $-5$ & $-3$ & $-2$\\
\end{tabular}
\end{table}

We will show that the allocation $A$ given by $A_1 = \{c_1\}$, $A_2 = \{c_2,c_3\}$, and $A_3 = \{c_4\}$ is \Leximin{}-optimal. Suppose, for contradiction, that another allocation $B$ is a Leximin improvement over $A$. The utility profile induced by $A$ is $(-1,-4,-2)$, and therefore, for any chore $j$ and agent $i$ such that $j \in B_i$, we must have that $v_{i,j} \geq -4$.

The chore $c_4$ is valued at less than $-4$ by both $a_1$ and $a_2$, so we must have $c_4 \in B_3$. Similarly, we can fix $c_2 \in B_2$. This, in turn, forces us to fix $c_3 \in B_2$, since otherwise if $c_3 \in B_3$, then the utility of $a_3$ will be $-5 < -4$, which would violate the \Leximin{} improvement assumption. By a similar argument, we have $c_1 \in B_1$. This, however, implies that $A$ and $B$ are identical, which is a contradiction. Therefore, $A$ must be \Leximin{}. Notice that $A$ violates \EQ{1} and \EF{1} for the pair $(a_1,a_2)$.
\end{example}

Another natural approach to show the existence of \EQ{1}+\PO{} allocations could be to use the \emph{relax-and-round} framework. Specifically, one could start from an \emph{egalitarian-equivalent} solution~\citep{PS78egalitarian} (i.e., a fractional allocation that is perfectly equitable and minimizes the agents' disutilities), and use a rounding algorithm to achieve \EQ{1}. However, there is a simple example where this approach fails.\footnote{Consider an instance with three chores and three agents. Agents 1 and 2 value the first chore at $-4$ and the other two chores at $-\infty$ (or a suitably large negative value). Agent 3 values the first chore at $-\infty$ and the other two chores at $-1$ each. An egalitarian-equivalent solution divides the first chore equally between agents 1 and 2, and assigns the remaining two chores to agent 3. Any rounding of this fractional allocation violates \EQ{1} with respect to agent 3 and whoever of agents 1 or 2 gets an empty bundle.}

The failure of Leximin and the relax-and-round framework in achieving \EQ{1} prompts us to consider a different approach for studying approximately fair and Pareto optimal allocations. This approach, which is based on Fisher markets~\citep{BS00compute}, has been successfully used in the goods model to provide an algorithmic framework for computing \EF{1}+\PO{}~\citep{BKV18Finding} and \EQ{1}+\PO{}~\citep{FSV+19equitable} allocations.\footnote{Similar techniques have also been used in developing approximation algorithms for Nash social welfare objective for budget-additive and multi-item concave utilities~\citep{CCG+18fair}.}
 Note that the existence of such allocations was established by means of computationally intractable methods, namely the Maximum Nash Welfare and Leximin solutions~\citep{CKM+19unreasonable,FSV+19equitable}.

Briefly, the idea is to start with an allocation that is an equilibrium of some Fisher market. By the first welfare theorem~\citep{MWG+95microeconomic}, such an allocation is guaranteed to be Pareto optimal. By using a combination of local search and price change steps, our algorithm converges to an \emph{approximately} equitable equilibrium, which gives an approximately equitable and Pareto optimal allocation. An important distinguishing feature of our algorithm is that while the existing Fisher market based approaches use \emph{price-rise}~\citep{BKV18Finding,FSV+19equitable}, our algorithm instead uses \emph{price-drop} as the natural option for negative valuations.

Our main result in this section (\Cref{thm:EQ1+PO_pseudopolynomial}) establishes the existence of \EQ{1} and \PO{} allocations using the markets framework.

\begin{restatable}[\textbf{Algorithm for \EQ{1}+\PO{}}]{theorem}{EQonePOPseudopolynomial}
 \label{thm:EQ1+PO_pseudopolynomial}
 Given any chores instance with additive and integral valuations, an allocation that is equitable up to one chore $(\EQ{1})$ and Pareto optimal $(\PO{})$ always exists and can be computed in $\O(\poly(m,n,|v_{\min}|))$ time, where $v_{\min} = \min_{i,j} v_{i,j}$.
\end{restatable}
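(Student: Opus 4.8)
The plan is to adapt the Fisher-market framework of \citet{BKV18Finding} and \citet{FSV+19equitable} to the chores setting, replacing price-rises by \emph{price-drops}. I first set up a chores market in which every chore $j$ carries a price $p_j > 0$; for agent $i$, the \emph{pain-per-buck} ratio of chore $j$ is $|v_{i,j}|/p_j$, the minimum-pain-per-buck ratio is $\alpha_i \coloneqq \min_j |v_{i,j}|/p_j$, and the MPB set is $\MBB_i \coloneqq \{ j : |v_{i,j}|/p_j = \alpha_i \}$. Call an allocation $A$ a \emph{market equilibrium} if $A_i \subseteq \MBB_i$ for every $i$, i.e., each agent only performs chores that are MPB for it. On such an allocation the disutility of agent $i$ is $|v_i(A_i)| = \alpha_i \cdot \sum_{j \in A_i} p_j$. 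The crucial free lunch is that any market equilibrium is Pareto optimal (the first welfare theorem), so \PO{} holds at every step and I only need to steer the allocation toward \EQ{1}. I would initialize with any integral market equilibrium, for instance a welfare-maximizing assignment with suitably chosen prices.

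The algorithm maintains a market equilibrium together with a leximin-style potential on the vector of agent disutilities (equivalently, it tries to make the sorted profile of disutilities as small as possible from the top down). In each iteration, if some agent $i$ violates \EQ{1} --- that is, even after removing its most costly chore it remains strictly unhappier than the happiest agent $h$ --- the algorithm transfers one chore off $i$ to a strictly happier agent. Because equilibrium must be preserved, a chore can travel only along MPB edges, so I look for an \emph{alternating path} in the MPB graph from $i$ to a happier agent and push one chore along it, keeping every transferred chore inside its recipient's MPB set. If no such path exists, the set $R$ of chores and agents reachable from the happy agents is ``saturated'': I then drop the prices of all chores in $R$ by a common multiplicative factor. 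Since every MPB chore of an $R$-agent is itself in $R$, this scales all within-$R$ ratios uniformly and preserves equilibrium, while it raises the MPB ratios of $R$-agents relative to chores \emph{outside} $R$; continuing until an outside chore enters some $R$-agent's MPB set strictly enlarges $R$. This price-drop is precisely the move that is natural for negative valuations, the mirror image of the price-rise used for goods. Repeatedly enlarging $R$ eventually connects a happy agent to $i$, enabling the transfer.

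For correctness I would verify two invariants and one stopping property. First, equilibrium (hence \PO{}) is preserved: transfers use only MPB edges, and the uniform price-drop inside $R$ keeps all within-$R$ ratios equal to the agents' MPB ratios while only ever creating new MPB edges. Second, each completed transfer is chosen so that the leximin potential strictly decreases (moving a single chore from an \EQ{1}-violator to a sufficiently happier agent lowers the larger of the two disutilities). Third, the stopping condition yields \EQ{1} immediately: the loop halts only when no agent violates the condition that removing its most disliked chore brings its disutility to at most the minimum disutility, which is exactly \EQ{1}. The substantive content here is showing that an \emph{equilibrium-preserving} disutility-reducing transfer always becomes available after finitely many price-drops whenever an \EQ{1} violation persists.

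The main obstacle is the running-time analysis, and this is exactly where integrality of the valuations (the hypothesis of the theorem) is indispensable. Because all utilities are integers bounded in magnitude by $m \cdot |v_{\min}|$, the sorted disutility vector can take only $\poly(m,n,|v_{\min}|)$ distinct values, and every transfer improves the leximin potential by at least one unit, bounding the number of transfers. Between two transfers I must bound the number of price-drop substeps: each drop adds at least one new edge to the MPB graph and thereby enlarges $R$, so only $\O(m)$ drops occur before $R$ captures a chore of the violating agent. The delicate points are to keep prices within a controlled range so that the ratio thresholds encountered are well behaved, to rule out cycling in which the potential stalls while prices churn, and to confirm that the per-step arithmetic stays polynomial. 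Assembling these pieces into an overall $\O(\poly(m,n,|v_{\min}|))$ bound --- and confirming that dropping integrality would break the pseudopolynomial guarantee --- is the crux of the proof.
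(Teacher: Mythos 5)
Your algorithmic framework is essentially the one the paper uses: maintain an \MBB{}-consistent (market equilibrium) allocation so that \PO{} comes for free from the first welfare theorem, initialize with a welfare-maximizing assignment, move chores along alternating \MBB{} paths from \EQ{1}-violators toward the happiest agent, and when no such path exists, uniformly drop the prices of the reachable chores until a new \MBB{} edge appears. The genuine gap is in the termination and running-time analysis, which you rightly call the crux but then resolve with a claim that is false. You assert that every completed transfer strictly improves a leximin potential because the chore goes ``to a sufficiently happier agent.'' But an \MBB{}-consistent transfer moves the chore one step along the alternating path, to whatever agent precedes the violator on that path, and nothing controls that agent's valuation for the chore. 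Concretely: let the happiest agent $i$ have an empty bundle, let the violator $h$ own two chores each worth $-5$ to $h$ (and worth $-100$ and $-200$ to $i$). Phase~1 assigns both chores to $h$ at prices $5$ each, the $-100$ chore is \MBB{} for $i$, and $h$ is a path-violator, so the algorithm transfers that chore to $i$: utilities go from $(0,-10)$ to $(-100,-5)$, a strict leximin \emph{worsening}. If you instead forbid such transfers and insist on leximin-improving ones, you lose the guarantee that some transfer becomes available after finitely many price-drops, which your scheme needs. Moreover, even granting the (false) invariant, the counting does not work: the number of distinct sorted integer disutility vectors is of order $(m|v_{\min}|)^{n}$, exponential in $n$, so ``one unit of leximin improvement per transfer'' would not give a $\poly(m,n,|v_{\min}|)$ bound.

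The paper's analysis avoids leximin entirely, and the device you are missing is \emph{$\eps$-rounding}: the input valuations are first rounded down to negatives of integral powers of $(1+\eps)$, the market algorithm is run on the rounded instance, and the guarantees are transferred back at the end. Rounding is what makes the potential arguments go through. All \MBB{} ratios and all price-drop factors are then integral powers of $(1+\eps)$, so the potential $\sum_{i}\log_{1+\eps}|\beta_i|$ (where $\beta_i$ is agent $i$'s \MBB{} ratio) increases by at least $1$ at every price-drop and is bounded above by $n\log_{1+\eps}|w_{\min}|$ (with $w$ the rounded valuations), which bounds the number of Phase-3 steps globally. For Phase 2, the paper shows the reference agent's utility is non-increasing, and that whenever the algorithm cycles back to the same reference agent, either its bundle strictly grew or its utility fell by a multiplicative $(1+\eps)$ factor; hence after every $mn$ identity changes the reference utility drops by a $(1+\eps)$ factor, bounding the number of swaps between price-drops. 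Finally, integrality of the original valuations is used in a completely different place than where you invoke it: taking $\eps \leq \frac{1}{6m|v_{\min}|^{3}}$, an allocation that is $3\eps$-\EQ{1} and $\eps$-\PO{} for the rounded instance is exactly \EQ{1} and \PO{} for the original one, because integer quantities forced to differ by at most $\nicefrac{1}{2}$ must differ by $0$. Without the rounding and these two counting arguments, your proposal has no valid bound on either the number of transfers or the number of price-drops.
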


In particular, when the valuations are polynomially bounded (i.e., for every $i \in [n]$ and $j \in [m]$, $v_{i,j} \leq \poly(m,n)$), our algorithm computes an \EQ{1} and \PO{} allocation in \emph{polynomial time}. Whether an \EQ{1}+\PO{} allocation can be computed in polynomial time without this assumption is an interesting avenue for future research.\footnote{Interestingly, similar questions concerning the computation of \EF{1}+\PO{} or \EQ{1}+\PO{} allocations are also open in the goods setting~\citep{BKV18Finding,FSV+19equitable}.}

The proof of \Cref{thm:EQ1+PO_pseudopolynomial} is deferred to \Cref{subsec:EQ1+PO}. Here, we will provide an informal overview of the algorithm by demonstrating its execution on the instance in \Cref{eg:Leximin_fails_EQl_and_EF1} where Leximin fails to satisfy \EQ{1}.

\begin{figure}
\scriptsize
\centering
\begin{subfigure}[b]{0.2\textwidth}
\begin{tikzpicture}
	\tikzset{agent/.style = {shape=circle,draw,inner sep=1pt}}
	\tikzset{agentutility/.style = {shape=circle,inner sep=0pt}}
	\tikzset{chore/.style = {shape=circle,draw,inner sep=1pt}}
	\tikzset{choreprice/.style = {shape=circle,inner sep=0pt}}
	\tikzset{edge/.style = {solid,line width=1.2pt}}
	\tikzset{dashededge/.style = {dashed,line width=1.2pt}}
\node[agent,fill=white!80!green]   (1) at (0,3) {\footnotesize{$a_1$}};
\node[agent,fill=white!80!red]   (2) at (0,1.5){\footnotesize{$a_2$}};
\node[agent]   (3) at (0,0)   {\footnotesize{$a_3$}};
\node[agentutility]   (12) at (-0.6,3) {\footnotesize{$0$}};
\node[agentutility]   (13) at (-0.6,1.5){\footnotesize{$-5$}};
\node[agentutility]   (14) at (-0.6,0)   {\footnotesize{$-2$}};
\node[chore] (4) at (2,3) {\footnotesize{$c_1$}};
\node[chore] (5) at (2,2) {\footnotesize{$c_2$}};
\node[chore] (6) at (2,1) {\footnotesize{$c_3$}};
\node[chore] (7) at (2,0) {\footnotesize{$c_4$}};
\node[choreprice] (8) at (2.6,3) {\footnotesize{\$1}};
\node[choreprice] (9) at (2.6,2) {\footnotesize{\$2}};
\node[choreprice] (10) at (2.6,1) {\footnotesize{\$2}};
\node[choreprice] (11) at (2.6,0) {\footnotesize{\$2}};
\draw[dashededge] (1) to node [near start,fill=white,inner sep=1pt] (14) {$-1$} (4);
\draw[edge] (2) to node [near start,fill=white,inner sep=1pt] (24) {$-1$} (4);
\draw[edge] (2) to node [near start,fill=white,inner sep=1pt] (25) {$-2$} (5);
\draw[edge] (2) to node [near start,fill=white,inner sep=1pt] (26) {$-2$} (6);
\draw[edge] (3) to node [near start,fill=white,inner sep=1pt] (37) {$-2$} (7);
\end{tikzpicture}
\caption{}
\end{subfigure}
\hspace{0.3in}
\begin{subfigure}[b]{0.2\textwidth}
\begin{tikzpicture}
	\tikzset{agent/.style = {shape=circle,draw,inner sep=1pt}}
	\tikzset{agentutility/.style = {shape=circle,inner sep=0pt}}
	\tikzset{chore/.style = {shape=circle,draw,inner sep=1pt}}
	\tikzset{choreprice/.style = {shape=circle,inner sep=0pt}}
	\tikzset{edge/.style = {solid,line width=1.2pt}}
	\tikzset{dashededge/.style = {dashed,line width=1.2pt}}
\node[agent,fill=white!80!green]   (1) at (0,3) {\footnotesize{$a_1$}};
\node[agent,fill=white!80!red]   (2) at (0,1.5){\footnotesize{$a_2$}};
\node[agent]   (3) at (0,0)   {\footnotesize{$a_3$}};
\node[agentutility]   (12) at (-0.6,3) {\footnotesize{$-1$}};
\node[agentutility]   (13) at (-0.6,1.5){\footnotesize{$-4$}};
\node[agentutility]   (14) at (-0.6,0)   {\footnotesize{$-2$}};
\node[chore] (4) at (2,3) {\footnotesize{$c_1$}};
\node[chore] (5) at (2,2) {\footnotesize{$c_2$}};
\node[chore] (6) at (2,1) {\footnotesize{$c_3$}};
\node[chore] (7) at (2,0) {\footnotesize{$c_4$}};
\node[choreprice] (8) at (2.6,3) {\footnotesize{\$1}};
\node[choreprice] (81) at (2.9,3) {\footnotesize{$\downarrow$}};
\node[choreprice] (9) at (2.6,2) {\footnotesize{\$2}};
\node[choreprice] (10) at (2.6,1) {\footnotesize{\$2}};
\node[choreprice] (11) at (2.6,0) {\footnotesize{\$2}};
\draw[edge] (1) to node [near start,fill=white,inner sep=1pt] (14) {$-1$} (4);
\draw[dashededge] (2) to node [near start,fill=white,inner sep=1pt] (24) {$-1$} (4);
\draw[edge] (2) to node [near start,fill=white,inner sep=1pt] (25) {$-2$} (5);
\draw[edge] (2) to node [near start,fill=white,inner sep=1pt] (26) {$-2$} (6);
\draw[edge] (3) to node [near start,fill=white,inner sep=1pt] (37) {$-2$} (7);
\end{tikzpicture}
\caption{}
\end{subfigure}
\hspace{0.3in}
\begin{subfigure}[b]{0.2\textwidth}
\begin{tikzpicture}
	\tikzset{agent/.style = {shape=circle,draw,inner sep=1pt}}
	\tikzset{agentutility/.style = {shape=circle,inner sep=0pt}}
	\tikzset{chore/.style = {shape=circle,draw,inner sep=1pt}}
	\tikzset{choreprice/.style = {shape=circle,inner sep=0pt}}
	\tikzset{edge/.style = {solid,line width=1.2pt}}
	\tikzset{dashededge/.style = {dashed,line width=1.2pt}}
\node[agent,fill=white!80!green]   (1) at (0,3) {\footnotesize{$a_1$}};
\node[agent,fill=white!80!red]   (2) at (0,1.5){\footnotesize{$a_2$}};
\node[agent]   (3) at (0,0)   {\footnotesize{$a_3$}};
\node[agentutility]   (12) at (-0.6,3) {\footnotesize{$-1$}};
\node[agentutility]   (13) at (-0.6,1.5){\footnotesize{$-4$}};
\node[agentutility]   (14) at (-0.6,0)   {\footnotesize{$-2$}};
\node[chore] (4) at (2,3) {\footnotesize{$c_1$}};
\node[chore] (5) at (2,2) {\footnotesize{$c_2$}};
\node[chore] (6) at (2,1) {\footnotesize{$c_3$}};
\node[chore] (7) at (2,0) {\footnotesize{$c_4$}};
\node[choreprice] (8) at (2.6,3) {\footnotesize{\$0.4}};
\node[choreprice] (9) at (2.6,2) {\footnotesize{\$2}};
\node[choreprice] (10) at (2.6,1) {\footnotesize{\$2}};
\node[choreprice] (11) at (2.6,0) {\footnotesize{\$2}};
\draw[edge] (1) to node [near start,fill=white,inner sep=1pt] (14) {$-1$} (4);
\draw[dashededge] (1) to node [near start,fill=white,inner sep=1pt] (15) {$-5$} (5);
\draw[dashededge] (1) to node [near start,fill=white,inner sep=1pt] (16) {$-5$} (6);
\draw[dashededge] (1) to node [near start,fill=white,inner sep=1pt] (17) {$-5$} (7);
\draw[edge] (2) to node [near start,fill=white,inner sep=1pt] (25) {$-2$} (5);
\draw[edge] (2) to node [near start,fill=white,inner sep=1pt] (26) {$-2$} (6);
\draw[edge] (3) to node [near start,fill=white,inner sep=1pt] (37) {$-2$} (7);
\end{tikzpicture}
\caption{}
\end{subfigure}
\hspace{0.3in}
\begin{subfigure}[b]{0.2\textwidth}
\begin{tikzpicture}
	\tikzset{agent/.style = {shape=circle,draw,inner sep=1pt}}
	\tikzset{agentutility/.style = {shape=circle,inner sep=0pt}}
	\tikzset{chore/.style = {shape=circle,draw,inner sep=1pt}}
	\tikzset{choreprice/.style = {shape=circle,inner sep=0pt}}
	\tikzset{edge/.style = {solid,line width=1.2pt}}
	\tikzset{dashededge/.style = {dashed,line width=1.2pt}}
\node[agent]   (1) at (0,3) {\footnotesize{$a_1$}};
\node[agent]   (2) at (0,1.5){\footnotesize{$a_2$}};
\node[agent]   (3) at (0,0)   {\footnotesize{$a_3$}};
\node[agentutility]   (12) at (-0.6,3) {\footnotesize{$-6$}};
\node[agentutility]   (13) at (-0.6,1.5){\footnotesize{$-2$}};
\node[agentutility]   (14) at (-0.6,0)   {\footnotesize{$-2$}};
\node[chore] (4) at (2,3) {\footnotesize{$c_1$}};
\node[chore] (5) at (2,2) {\footnotesize{$c_2$}};
\node[chore] (6) at (2,1) {\footnotesize{$c_3$}};
\node[chore] (7) at (2,0) {\footnotesize{$c_4$}};
\node[choreprice] (8) at (2.6,3) {\footnotesize{\$0.4}};
\node[choreprice] (9) at (2.6,2) {\footnotesize{\$2}};
\node[choreprice] (10) at (2.6,1) {\footnotesize{\$2}};
\node[choreprice] (11) at (2.6,0) {\footnotesize{\$2}};
\draw[edge] (1) to node [near start,fill=white,inner sep=1pt] (14) {$-1$} (4);
\draw[edge] (1) to node [near start,fill=white,inner sep=1pt] (15) {$-5$} (5);
\draw[edge] (2) to node [near start,fill=white,inner sep=1pt] (26) {$-2$} (6);
\draw[edge] (3) to node [near start,fill=white,inner sep=1pt] (37) {$-2$} (7);
\end{tikzpicture}
\caption{}
\end{subfigure}
\caption{Executing the \EQ{1}+\PO{} algorithm from \Cref{thm:EQ1+PO_pseudopolynomial} on the instance in \Cref{eg:Leximin_fails_EQl_and_EF1}. The solid (respectively, dashed) edges denote items that are allocated to (respectively, in the \MBB{} set of) an agent. The edge labels denote the valuations, and the numbers next to the agent and chore nodes denote the utilities and the prices, respectively.}
\label{fig:Market_Example}
\end{figure}
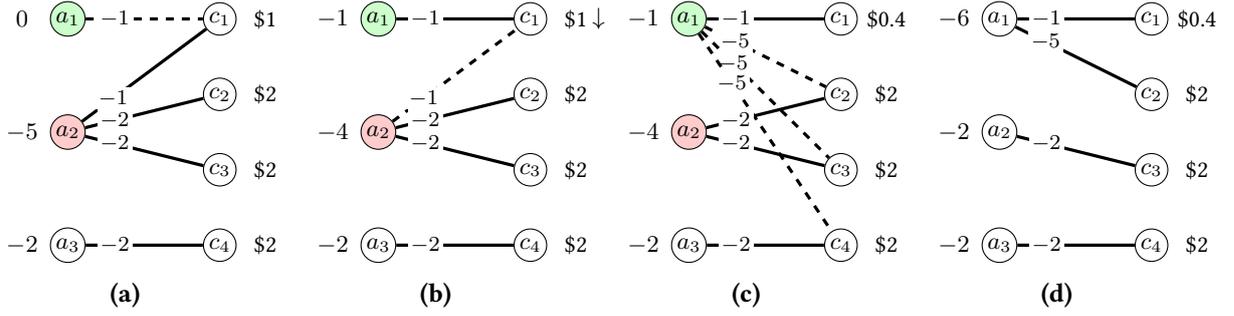

\begin{example}
Consider once again the instance in \Cref{eg:Leximin_fails_EQl_and_EF1}. Our algorithm in \Cref{thm:EQ1+PO_pseudopolynomial} works in three phases. In Phase 1, the algorithm creates an equilibrium allocation by assigning each chore to an agent that has the highest valuation for it and setting its price to be (the absolute value of) the owner's valuation; see~\Cref{fig:Market_Example}a. This ensures that the allocation satisfies the \emph{maximum bang-per-buck} or \MBB{} property (i.e., each agent's bundle consists only of items with the highest valuation-to-price ratio for that agent). The \MBB{} property guarantees that the allocation at hand is an equilibrium of some Fisher market, and therefore Pareto optimal.

The allocation constructed in Phase 1 is not \EQ{1} as $a_2$ gets three negatively valued chores and $a_1$ gets none. So, the algorithm switches to Phase 2, where it uses local search to address the equitability violations. Specifically, if there is an \EQ{1} violation, then there must be one involving the `happiest' agent, i.e., agent with the highest utility (shaded in green in~\Cref{fig:Market_Example}a). The algorithm now proceeds to transferring the chores, one at a time, from unhappy agents to the happiest agent while ensuring that all exchanges take place in an \MBB{}-consistent manner. In our example, the chore $c_1$, which is already in the \MBB{} set of agent $a_1$, is transferred from $a_2$ to $a_1$ (see~\Cref{fig:Market_Example}b).

Despite the aforementioned exchange, the allocation is still not \EQ{1} as $\{a_1,a_2\}$ once again constitute a violating pair. Furthermore, the happiest agent is already assigned its unique \MBB{} chore, so no additional \MBB{}-consistent transfers are possible. Thus, the algorithm switches to Phase 3.

In Phase 3, the algorithm \emph{creates} new \MBB{} edges in the agent-item graph by changing the prices. Specifically, the price of chore $c_1$ is \emph{lowered} until one or more of the remaining chores enter the \MBB{} set of agent $a_1$. Indeed, once the price of $c_1$ is lowered from $\$1$ to $\$0.4$, all other chores become \MBB{} for agent $a_1$ (see~\Cref{fig:Market_Example}c). As soon as the opportunity for \MBB{}-consistent exchange becomes available, the algorithm switches back to Phase 2 to perform an exchange. This time, chore $c_2$ is transferred from $a_2$ to $a_1$ (see~\Cref{fig:Market_Example}d). The new allocation is \EQ{1}, so the algorithm terminates and returns the current allocation as output.
\end{example}

\begin{remark}
We already know from \Cref{eg:Nonexistence_EQx+PO_Two_Agents} that \EQX{}+\PO{} is a strictly more demanding property combination than \EQ{1}+\PO{} in terms of \emph{existence}. That is, an \EQX{}+\PO{} allocation might fail to exist even though an \EQ{1}+\PO{} allocation is guaranteed to exist (\Cref{thm:EQ1+PO_pseudopolynomial}). Our results in \Cref{thm:EQX_PO_Strong_NP-hardness,thm:EQ1+PO_pseudopolynomial} show a similar separation between the two notions in terms of \emph{computation}: Although an \EQ{1}+\PO{} allocation can be computed in pseudopolynomial-time (\Cref{thm:EQ1+PO_pseudopolynomial}), there cannot be a pseudopolynomial-time algorithm for checking the existence of \EQX{}+\PO{} allocations unless P=NP. 
\end{remark}

\subsection{Equitability, Pareto Optimality, and Envy-Freeness}
\label{subsec:EQ_PO_EF}

We will now consider all three notions---equitability, envy-freeness, and Pareto optimality---together. It turns out that the existence result for \EQ{1}+\PO{} allocations does not hold up when we also require \EF{1} (\Cref{prop:Nonexistence_EQ1+EF1+PO}).

\begin{restatable}[\textbf{Non-existence of \EQ{1}+\EF{1}+\PO{}}]{prop}{NonexistenceEQoneEFonePO}
 \label{prop:Nonexistence_EQ1+EF1+PO}
 There exists an instance with normalized and strictly negative valuations in which no allocation is simultaneously equitable up to one chore $(\EQ{1})$, envy-free up to one chore $(\EF{1})$, and Pareto optimal $(\PO{})$.
\end{restatable}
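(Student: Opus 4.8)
The plan is to establish non-existence by exhibiting a single concrete instance and then arguing that every Pareto optimal allocation of that instance fails either \EQ{1} or \EF{1}. Since the statement only asserts the existence of a bad instance, a small, carefully chosen valuation profile (on the order of two or three agents and a handful of chores) should suffice, and the whole argument can then be carried out by a finite case analysis. I would begin by writing down such a profile with strictly negative, normalized valuations, choosing the numbers so that the three requirements genuinely conflict: the efficiency requirement should pin down the coarse structure of any candidate allocation, while \EQ{1} and \EF{1} pull the fine structure in incompatible directions.

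The first real step is to prune the allocation space using Pareto optimality, rather than enumerate all $n^m$ allocations blindly. I would reuse the swap argument already employed in the proof of \Cref{thm:EQX_PO_Strong_NP-hardness}: if an agent values a chore far more (i.e., less negatively) than every other agent, then in any \PO{} allocation that chore must be held by that agent, since otherwise a Pareto-improving exchange is available. By inserting a few ``signature''-style chores whose owners are forced in this way, I can reduce the candidate \PO{} allocations to a short explicit list; the remaining freedom concerns only a small number of ``contested'' chores, over which I enumerate directly.

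For each surviving \PO{} allocation I would then compute the utility profile and check the two fairness conditions. The aim is to arrange the contested chores so that any \PO{} split close enough to equitable (and hence with a chance at \EQ{1}) creates a one-sided envy too large to repair by deleting a single chore from the envied bundle, while conversely any split that is \EF{1} leaves an equitability gap too large to fix by removing one chore from the less-happy agent's bundle. Verifying these two implications for each candidate is routine arithmetic once the instance is fixed.

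The main obstacle is the design of the instance itself: the numbers must force enough structure through \PO{} to keep the case analysis finite, while guaranteeing that the ``up to one chore'' slacks of \EQ{1} and \EF{1} are never both large enough to absorb the tension at the same allocation. Because \EQ{1} concerns inter-agent utility comparisons whereas \EF{1} concerns each agent's own valuation of others' bundles, the delicate part is calibrating valuations so that a single chore deletion can never simultaneously witness both relaxations. Since \EQ{1}+\PO{} is always achievable by \Cref{thm:EQ1+PO_pseudopolynomial}, the obstruction must be driven by the envy dimension, which tells me where to concentrate the conflict. Once a working profile is found, normalization can be enforced by padding with suitably valued chores, as in \Cref{eg:Nonexistence_EQx+PO_Two_Agents}, without disturbing the forced \PO{} structure.
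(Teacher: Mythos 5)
The gap is that you never actually produce the instance: everything in your plan after the first sentence is conditional on ``writing down such a profile,'' and designing that profile is the entire mathematical content of the statement. A strategy of the form ``choose the numbers so that \PO{} pins down the coarse structure while \EQ{1} and \EF{1} pull the fine structure in incompatible directions'' restates the goal rather than achieving it; until concrete valuations are fixed and the finite case analysis is carried out, nothing has been proved. You acknowledge this yourself (``the main obstacle is the design of the instance itself''), and that obstacle is precisely what remains unresolved. For reference, the paper's instance has four agents and eight chores: agents $a_1,a_2$ value every chore at $-10$, while agents $a_3,a_4$ value $c_1$ at $-73$ and each of $c_2,\dots,c_8$ at $-1$ (every row sums to $-80$, so the valuations are normalized and strictly negative).

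Your plan also concentrates effort in the wrong place. You intend to use \PO{}-forced structure (signature-chore swap arguments as in \Cref{thm:EQX_PO_Strong_NP-hardness}) to prune the allocation space, but in the paper's proof Pareto optimality is never invoked: the instance above admits \emph{no} allocation that is simultaneously \EQ{1} and \EF{1}, so the conflict is entirely between the two fairness notions, and the pruning is done by pigeonhole counting (eight chores among four agents) rather than by efficiency. Concretely: if $a_1$ gets three or more chores, some agent gets at most one, and $a_1$ keeps utility at most $-20$ after removing a chore while valuing that agent's bundle at $-10$ or better, violating \EF{1}; if $a_1$ gets exactly two, whichever of $a_3,a_4$ misses $c_1$ has utility at least $-7$ while $a_1$ retains only $-10$ after a removal, violating \EQ{1}; if $a_1$ gets none, any agent holding three chores violates \EQ{1} against it. Hence $a_1$ (and symmetrically $a_2$) gets exactly one chore, six chores go to $a_3,a_4$, say $a_3$ gets at least three, and then $a_3$ values its own bundle at $-2$ or worse after any single removal while valuing the bundle of whichever of $a_1,a_2$ lacks $c_1$ at $-1$, violating \EF{1}. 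Your observation that the obstruction ``must be driven by the envy dimension'' is sound, but it cuts against your own design: it suggests (correctly) that \PO{} is dispensable here, so signature chores whose purpose is to force \PO{} structure are not the right primitive for building the counterexample.
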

\begin{proof}
Consider the following instance with eight chores and four agents with normalized and strictly negative valuations:
\begin{table}[ht]
\centering
\begin{tabular}{c|cccccccc}
& $c_1$ & $c_2$ & $c_3$ & $c_4$ & $c_5$ & $c_6$ & $c_7$ & $c_8$\\ \hline
  $a_1$ & $-10$ & $-10$ & $-10$ & $-10$ & $-10$ & $-10$ & $-10$ & $-10$\\
  $a_2$ & $-10$ & $-10$ & $-10$ & $-10$ & $-10$ & $-10$ & $-10$ & $-10$\\
  $a_3$ & $-73$ & $-1$ & $-1$ & $-1$ & $-1$ & $-1$ & $-1$ & $-1$\\
  $a_4$ & $-73$ & $-1$ & $-1$ & $-1$ & $-1$ & $-1$ & $-1$ & $-1$\\
\end{tabular}
\end{table}

Suppose, for contradiction, that there exists an allocation $A$ that is \EQ{1}, \EF{1}, and \PO{}. Then, we claim that $a_1$ gets exactly one chore in $A$. Indeed, $a_1$ cannot get three or more chores in $A$, since that would result in some other agent getting at most one chore, creating an \EF{1} violation with respect to $a_1$. If $a_1$ gets exactly two chores, then either $a_3$ or $a_4$ will create an \EQ{1} violation with respect to $a_1$. This is because one of $a_3$ and $a_4$ will necessarily miss out on $c_1$ and therefore have a utility of at least $-7$ from the remaining chores. Finally, if $a_1$ does not get any chore, then one of the other agents will get at least three chores. Because of strictly negative valuations, this will create an \EQ{1} violation with $a_1$. Therefore, $a_1$ gets exactly one chore in $A$. By a similar argument, so does $a_2$.

Therefore, a total of six chores are assigned between $a_3$ and $a_4$. Assume, without loss of generality, that $a_3$ gets at least three chores. Then, whoever of $a_1$ or $a_2$ misses out on $c_1$ will create an \EF{1} violation with respect to $a_3$, giving us the desired contradiction.
\end{proof}

Turning to the computational question, we notice that the allocation constructed in the proof of \Cref{thm:EQX_PO_Strong_NP-hardness} is envy-free. Therefore, checking the existence of an \EQX{}+\PO{}+\EF{}/\EFX{}/\EF{1} allocation is also strongly \NPH{}. We note that the analogous problem in the goods setting is also known to be computationally hard~\citep{FSV+19equitable}.

\begin{restatable}[\textbf{Hardness of \EQX{}+\PO{}+\EF{}/\EFX{}/\EF{1}}]{corollary}{EQXPOEFStrongNPhardness}
 \label{cor:EQX_PO_EF/EFX/EF1_Strong_NP-hardness}
 Determining whether a given fair division instance admits an allocation that is simultaneously $X+Y+\PO{}$, where $X$ refers to equitable up to any chore $(\EQX{})$, and $Y$ refers to either envy-free $(\EF{})$, envy-free up to any chore $(\EFX{})$, or envy-free up to one chore $(\EF{1})$, is \textup{strongly \NPH{}}, even for normalized valuations.
\end{restatable}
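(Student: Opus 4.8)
The plan is to reuse verbatim the reduction from \ThreePartition{} constructed in the proof of \Cref{thm:EQX_PO_Strong_NP-hardness}, observing that a single instance already certifies hardness for all three choices of $Y$ simultaneously. The only additional work beyond that theorem is to verify that the ``yes''-instance allocation is not merely \EQX{}+\PO{} but also envy-free.

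\textbf{Forward direction.} Given a \ThreePartition{} solution $X^1,\dots,X^r$, I would take the allocation $A$ from that proof, in which each main agent $a_i$ receives its signature chore $S_i$ together with the main chores indexed by $X^i$, and the dummy agent $d$ receives $D_1,D_2$; every agent's utility is exactly $-B-1$, so $A$ is already \EQX{}+\PO{}. The new step is to check that $A$ is envy-free. A main agent $a_i$ evaluates any other agent's bundle through a chore it rates at $-L$ (another signature chore $S_k$, or a dummy chore), so $v_i(A_k)\le -L < -rB-1 \le -B-1 = v_i(A_i)$, and likewise $v_i(A_d)=-2L < -B-1$; hence $a_i$ envies no one. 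The dummy values every non-dummy chore at $-L'$, and each main bundle $A_i$ contains at least two chores ($S_i$ plus at least one main chore, since each $X^i$ is nonempty), so $v_d(A_i)\le -2L' < -B-1 = v_d(A_d)$ using $L'>B$. Thus $A$ is \EF{}, and since \EF{} implies \EFX{}, which in turn implies \EF{1}, the same $A$ satisfies \EQX{}+\PO{}+$Y$ for each candidate $Y$.

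\textbf{Backward direction.} For any fixed $Y\in\{\EF{},\EFX{},\EF{1}\}$, an \EQX{}+\PO{}+$Y$ allocation is in particular an \EQX{}+\PO{} allocation, so \Cref{claim:No_empty_bundles,claim:Signature_chores,claim:Dummy_agent_gets_D1,claim:Dummy_agent_gets_D2} together with the parity argument of \Cref{thm:EQX_PO_Strong_NP-hardness} apply unchanged and recover a valid partition $X^1,\dots,X^r$. Because \ThreePartition{} is strongly \NPH{} and the reduction is polynomial with polynomially bounded numbers, strong \NPH{}ness transfers, and the constructed instance is normalized, as required.

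The only genuinely new ingredient, and the place I would be most careful, is the envy-freeness check for the dummy agent: it is essential that each main bundle contains at least two chores, so that $v_d(A_i)\le -2L'$ drops strictly below $-B-1$. This relies on the $X^i$ being nonempty (guaranteed by $b_i\ge 2$ and $B\ge 2$) together with the normalization inequality $-L'<-B$ established for $r\ge 3$; everything else is a direct reuse of the earlier reduction.
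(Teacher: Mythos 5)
Your proposal is correct and is essentially the paper's own argument: the paper proves this corollary in one line by observing that the allocation constructed in the forward direction of \Cref{thm:EQX_PO_Strong_NP-hardness} is envy-free (hence \EFX{} and \EF{1}), while the backward direction goes through unchanged since \EQX{}+\PO{}+$Y$ implies \EQX{}+\PO{}. Your explicit verification of envy-freeness---including the observation that each main bundle has at least two chores so that $v_d(A_i) \le -2L' < -B-1$---is exactly the detail the paper leaves implicit, and it checks out.
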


\subsection{Equitability up to a Duplicated Chore}
\label{subsec:Dup_EQ}

In this section, we will explore a slightly different version of approximate equitability for chores wherein instead of removing a chore from the less-happy agent's bundle, we imagine \emph{adding} a chore to the happier agent's bundle. In particular, we will ask that pairwise jealousy should be removed by duplicating a single chore from the less happy agent's bundle and adding it to the happier agent's bundle.

Formally, an allocation $A$ is \emph{equitable up to one duplicated chore} (\DEQ{1}) if for every pair of agents $i,k \in [n]$ such that $A_i \neq \emptyset$, there exists a chore $j \in A_i$ such that $v_i(A_i) \geq v_k(A_k \cup \{j\})$. An allocation $A$ is \emph{equitable up to any duplicated chore} (\DEQX{}) if for every pair of agents $i,k \in [n]$ such that $A_i \neq \emptyset$ and for every chore $j \in A_i$ such that $v_{i,j} < 0$, we have $v_i(A_i) \geq v_k(A_k \cup \{j\})$.

\begin{restatable}[\textbf{Existence of \DEQX{}+\PO{}}]{prop}{DEQXPO}
 \label{prop:DEQX+PO}
Given any fair division instance with additive valuations, an allocation that is equitable up to any duplicated chore $(\DEQX{})$ and Pareto optimal $(\PO{})$ always exists.
\end{restatable}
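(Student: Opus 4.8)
The plan is to show that the \Leximin{} allocation is itself both \PO{} and \DEQX{}; since there are finitely many allocations a \Leximin{}-optimal one exists, and this establishes the proposition. Throughout, let $A$ denote a \Leximin{} allocation, and recall that \Leximin{} ranks allocations by their utility profiles sorted in nondecreasing order, preferring the lexicographically larger profile.

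First I would dispatch Pareto optimality. Suppose some $B$ Pareto dominates $A$, so $v_\ell(B_\ell) \ge v_\ell(A_\ell)$ for every agent $\ell$ with at least one strict inequality. Sorting is monotone under coordinatewise domination, so the sorted profile of $B$ is coordinatewise at least that of $A$; moreover the two profiles are distinct since their sums differ. Hence $B$'s sorted profile is lexicographically strictly larger, contradicting the \Leximin{}-optimality of $A$. Thus $A$ is \PO{}.

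The substance is in showing that $A$ is \DEQX{}, which I would argue by contradiction. Suppose an ordered pair $(i,k)$ with $A_i \neq \emptyset$ and a chore $j \in A_i$ with $v_{i,j} < 0$ witness a violation, i.e.\ $v_i(A_i) < v_k(A_k \cup \{j\})$. The key move is to \emph{transfer} $j$ from $i$ to $k$, yielding a new allocation $A'$ (still a partition of $[m]$). Only the utilities of $i$ and $k$ change: agent $i$ rises to $v_i(A_i) - v_{i,j} > v_i(A_i)$ because $v_{i,j} < 0$, while agent $k$ moves to $v_k(A_k) + v_{k,j} = v_k(A_k \cup \{j\})$. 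I would first record the two consequences of the violation: since valuations are non-positive, $v_k(A_k) \ge v_k(A_k \cup \{j\}) > v_i(A_i)$, so $i$ was strictly the less happy of the two; and, writing $a := v_i(A_i)$, both new utilities strictly exceed $a$.

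The main obstacle is converting this local transfer into a genuine \Leximin{} improvement, which requires a careful sorted-profile comparison rather than a Pareto argument (note that $k$ is in fact weakly worse off, so $A'$ need not Pareto dominate $A$). I would let $p$ be the number of agents with utility strictly below $a$ and $q \ge 1$ the number with utility exactly $a$ in $A$. Since neither $i$ (whose utility equals $a$) nor $k$ (whose utility exceeds $a$) lies strictly below $a$, and both end strictly above $a$ in $A'$, the first $p$ coordinates of the two sorted profiles coincide, the next $q-1$ coordinates equal $a$ in both, and at coordinate $p+q$ the sorted profile of $A$ equals $a$ while that of $A'$ is strictly greater than $a$. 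Hence $A'$ is a \Leximin{} improvement over $A$, contradicting its optimality. This shows $A$ is \DEQX{}, completing the argument.
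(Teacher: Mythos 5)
Your proposal is correct and follows essentially the same route as the paper: take a \Leximin{}-optimal allocation, verify \PO{}, and derive a contradiction from a \DEQX{} violation by transferring the offending chore $j$ from agent $i$ to agent $k$. The only difference is one of rigor, in the paper's favor of brevity: where the paper simply asserts that the transfer yields a ``Leximin improvement,'' you correctly observe that the resulting allocation need not Pareto dominate the original (since $k$ weakly loses) and supply the sorted-profile comparison that justifies the lexicographic improvement.
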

\begin{proof}(Sketch)
We will show that any \Leximin{}-optimal allocation, say $A$, satisfies \DEQX{}~(Pareto optimality is easy to verify). Suppose, for contradiction, that there exist agents $i,k \in [n]$ with $A_i \neq \emptyset$ and some chore $j \in A_i$ such that $v_{i,j}<0$ and $v_i(A_i) < v_k(A_k \cup \{j\})$. Let $B$ be an allocation derived from $A$ by transferring the chore $j$ from agent $i$ to agent $k$. That is, $B_i \coloneqq A_i \setminus \{j\}$, $B_k \coloneqq A_k \cup \{j\}$ and $B_h \coloneqq A_h$ for all $h \in [n] \setminus \{i,k\}$. Since \DEQX{} is violated with respect to chore $j$, we have that $v_{i,j} < 0$, and therefore $v_i(B_i) = v_i(A_i) - v_{i,j} > v_i(A_i)$. Furthermore, $v_k(B_k) = v_k(A_k \cup \{j\}) > v_i(A_i)$ by the \DEQX{} violation condition. The utility of any other agent is unchanged. Therefore, $B$ is a `Leximin improvement' over $A$, which is a contradiction.
\end{proof}

Thus, \Cref{prop:DEQX+PO} shows that the duplicate version of approximate equitability (\DEQX{}) compares favorably against the standard version (\EQX{}) in the sense that a \DEQX{}+\PO{} allocation is guaranteed to exist whereas an \EQX{}+\PO{} allocation might not exist even with two agents and strictly negative valuations (\Cref{eg:Nonexistence_EQx+PO_Two_Agents}).

On the computational side, we find that a \DEQ{1} allocation of chores can be computed in polynomial time via a greedy algorithm.

\begin{restatable}[]{prop}{dEQoneComputation}
 \label{prop:dEQ1_Computation}
 A \DEQ{1} allocation of chores always exists and can be computed in polynomial time.
\end{restatable}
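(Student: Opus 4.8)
The plan is to produce a \DEQ{1} allocation by a single-pass greedy that maintains \DEQ{1} as a loop invariant, the key point being that \emph{the newly assigned chore always serves as its own witness}. Concretely, process the chores in an arbitrary order; when chore $c$ is considered, assign it to an agent
$i^\ast \in \argmax{i \in [n]}\bigl(v_i(A_i) + v_{i,c}\bigr)$, i.e., to an agent who would remain \emph{happiest} after receiving $c$. The output is the allocation obtained once all chores have been assigned. The correctness claim is that after each assignment the current partial allocation $A$ is \DEQ{1}; since the empty allocation is vacuously \DEQ{1}, induction then finishes the proof.

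So suppose $A$ is \DEQ{1} and we assign $c$ to $i^\ast$, obtaining $A'$ with $A'_{i^\ast} = A_{i^\ast} \cup \{c\}$ and $A'_h = A_h$ otherwise. Only pairs involving $i^\ast$ can change status. For a pair $(k, i^\ast)$ in which $i^\ast$ plays the role of the (weakly happier) second agent, the relevant quantity $v_{i^\ast}(A'_{i^\ast}) = v_{i^\ast}(A_{i^\ast}) + v_{i^\ast, c} \le v_{i^\ast}(A_{i^\ast})$ only decreases while $A_k$ is unchanged, so any witness $j \in A_k$ that worked before still works; hence these pairs remain \DEQ{1}.

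The crux is the pairs $(i^\ast, k)$ in which the recipient $i^\ast$ is the potentially less-happy agent. Here I would invoke the defining property of $i^\ast$: for every $k$,
\[
v_{i^\ast}(A_{i^\ast}) + v_{i^\ast, c} \;\ge\; v_k(A_k) + v_{k,c},
\]
which rearranges to $v_{i^\ast}(A'_{i^\ast}) \ge v_k(A_k \cup \{c\})$. Since $c \in A'_{i^\ast}$, the chore $c$ is exactly a witness for the \DEQ{1} condition of the ordered pair $(i^\ast, k)$. Thus every pair is \DEQ{1} and the invariant is preserved. Each of the $m$ iterations computes an argmax over $n$ agents, so the algorithm runs in $\O(mn)$ time, within the claimed polynomial bound.

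I expect the main obstacle to be identifying the \emph{right} greedy rule rather than the verification, which is short. In particular, the natural \EQX{} greedy (the happiest agent repeatedly takes its most-disliked available chore) does \emph{not} yield \DEQ{1}: on a two-agent, two-chore instance where both agents value each chore at $-10$ and $-1$ respectively, it returns a utility profile $(-10,-1)$ that violates \DEQ{1} for the pair $(a_1,a_2)$. The reason is structural: \DEQ{1} is governed by how much the \emph{happier} agent $k$ dislikes the witness chore, whereas the \EQX{} greedy selects chores according to the receiving agent's own disutilities. The insight that unlocks the proof is to assign each chore so as to maximize the recipient's post-assignment utility; this aligns the selection criterion with $k$'s valuations through the argmax inequality above, making the freshly assigned chore automatically a valid witness.
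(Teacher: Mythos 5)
Your proposal is correct and matches the paper's proof essentially exactly: the paper uses the same greedy rule (assign each chore $j_t$ to $i_t \in \arg\max_{i} v_i(A^{t-1}_i \cup \{j_t\})$) and the same induction on rounds, differing only in that the paper verifies the inductive step by contradiction while you exhibit the witnesses directly (the new chore for pairs where the recipient is less happy, the old witness for the rest), which is a purely cosmetic difference.
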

\begin{proof}
Consider any fixed ordering $j_1,j_2,\dots,j_m$ of the chores. Our algorithm assigns the chore $j_t$ in the $t^\text{th}$ round. Let $A^{t-1}$ denote the (partial) allocation at the end of $t-1$ rounds. The algorithm assigns the chore $j_t$ to the agent $i_t$ defined as follows:
\begin{equation*}
\displaystyle i_t \in \arg\max_{i \in [n]} v_i(A^{t-1}_i \cup \{j_t\}).
\end{equation*}
That is, in a thought experiment where each agent gets a copy of the chore $j_t$, agent $i_t$ has the highest utility in the derived allocation. It is easy to see that the algorithm runs in polynomial time.

We will now use induction to show that the algorithm maintains a \DEQ{1} (partial) allocation at every step. This is certainly true prior to the first round, since an empty allocation is \DEQ{1}. Suppose the partial allocations at the end of each of the first $t-1$ rounds, namely $A^1,\dots,A^{t-1}$, satisfy \DEQ{1}. We will argue that the same is true for the (partial) allocation $A^t$ at the end of the $t^\text{th}$ round.

Suppose, for contradiction, that $A^t$ fails \DEQ{1}. That is, there exists a pair of agents $i,k \in [n]$ with $A^t_i \neq \emptyset$ such that $v_i(A^t_i) < v_k(A^t_k \cup \{j\})$ for every chore $j \in A^t_i$. Then, the chore $j_t$  must have been assigned to agent $i$, i.e., $j_t \in A^t_i$. Indeed, if $j_t$ were to be assigned to any agent in $[n] \setminus \{i,k\}$, then the \DEQ{1} violation between $i$ and $k$ would have existed during round $t-1$, contradicting the fact that $A^{t-1}$ satisfies \DEQ{1}. Furthermore, if $j_t$ were to be assigned to agent $k$, then agent $k$'s utility in round $t-1$ would have strictly exceeded its utility in round $t$, implying once again that \DEQ{1} violation between $i$ and $k$ would have existed in round $t-1$, which is a contradiction. Therefore, the chore $j_t$ must have been assigned to agent $i$ in round $t$.

We can now instantiate the \DEQ{1} violation condition for the chore $j_t$ to get $v_i(A^t_i) < v_k(A^t_k \cup \{j_t\})$. Note that since $j_t$ is assigned to agent $i$, the bundle of agent $k$ remains unchanged between rounds $t-1$ and $t$, and therefore $A^t_k = A^{t-1}_k$ and $A^t_i = A^{t-1}_i \cup \{j_t\}$. Therefore, the \DEQ{1} violation can be rewritten as $v_i(A^{t-1}_i \cup \{j_t\}) < v_k(A^{t-1}_k \cup \{j_t\})$. This implies that $i$ is \emph{not} the highest utility agent in the thought experiment where each agent is assigned a (hypothetical) copy of the chore $j_t$, which is a contradiction. Therefore, the allocation $A^t$ must satisfy \DEQ{1}. By induction, the same holds for the allocation returned by the algorithm.
\end{proof}

Unfortunately, the greedy algorithm in \Cref{prop:dEQ1_Computation} does not guarantee a \DEQ{X} allocation. This stands in contrast to the situation for \EQ{X}, which is easily achieved by a greedy procedure. Settling the complexity of computing \DEQ{X} allocations is an interesting question for future work.

The complexity of computing an allocation that satisfies either \DEQ{1}+\PO{} or the stronger \DEQ{X}+\PO{} also remains open. For \DEQ{1}+\PO{}, a natural approach would be to apply the market techniques used in \Cref{thm:EQ1+PO_pseudopolynomial}, but that would require care as \DEQ{1} lacks the following ``monotonicity'' property that \EQ{1} has: If an allocation is not \EQ{1}, then without loss of generality, there exists a violation with respect to the happiest agent. The same is not true for violations of \DEQ{1}, which makes the analysis less obvious.

In \Cref{subsec:Zero-Valued-Chores-Removal-Variants}, we explore a variant of \DEQ{X}, denoted as \DEQXzero{}, in which the $v_{i,j}<0$ condition is not imposed on the duplicated chore $j$. With this modification, we show that computing an allocation satisfying \DEQXzero{}+\PO{} is \NPhard{}, as well as an equivalent result for the analogous notion of \EQXzero{}.

\begin{remark}[\textbf{A tractable special case: binary valuations}]
An instance is said to have \emph{binary valuations} if for every agent $i \in [n]$ and every chore $j \in [m]$, we have $v_{i,j} \in \{-1,0\}$. For this restricted setting, there is a simple polynomial-time algorithm that gives an \EQX{}+\DEQX{}+\EFX{}+\PO{} allocation, as follows: If a chore is valued at $0$ by one or more agents, then it is arbitrarily assigned to an agent that values it at $0$. The remaining chores, which are valued at $-1$ by every agent, are assigned in a round-robin fashion.
\end{remark}
\section{Experiments}
\label{sec:Experiments}
In this section, we will compare various algorithms in terms of how frequently they satisfy different combinations of fairness and efficiency properties on synthetic as well as real-world datasets.

For \emph{synthetic} data, we follow the setup of \citet{FSV+19equitable} for goods by fixing $n=5$ agents, $m=20$ chores, and generating $1000$ instances with (the negation of) the valuations drawn from Dirichlet distribution. Additional pre-processing is required to ensure that the valuations are integral and normalized (see \Cref{subsec:Experiments_Appendix}). Recall that integral valuations are required for \Cref{thm:EQ1+PO_pseudopolynomial}. None of our results require normalization, but it is a natural condition to impose in practice.

The \emph{real-world} dataset consists of $2613$ instances obtained from the \emph{Spliddit} website~\citep{GP15spliddit}, with the number of agents ranging from $2$ to $15$, and the number of distinct chores ranging from $3$ to $1100$. Unlike the goods case, the ``task division'' segment of Spliddit allows distinct items to have multiple \emph{copies}.\footnote{\url{http://www.spliddit.org/apps/tasks}} Furthermore, instead of directly eliciting additive valuations (as is the case for goods), the website asks the users to specify their preferences in the form of \emph{multipliers}; that is, given two chores $c_1$ and $c_2$, how many times would a user be willing to complete $c_1$ instead of completing $c_2$ once.\footnote{For example, doing laundry $2.5$ times could be equivalent to washing dishes once.} As a result, the elicited valuations might not be integral. These design features force us to make a number of pre-processing decisions (see \Cref{subsec:Experiments_Appendix}). In particular, in order to ensure integrality of valuations and remain as faithful as possible to the Spliddit instances, we have to give up on normalization.

We consider the following four algorithms: (1) The \emph{greedy} algorithm from \Cref{prop:EQX_Existence_Computation}, (2) the \emph{Leximin} solution, (3) the market-based algorithm \AlgEQonePO{} from \Cref{thm:EQ1+PO_pseudopolynomial}, and (4) an algorithm currently deployed on the \emph{Spliddit} website for dividing chores. The latter is a randomized algorithm that computes an ex ante equitable lottery over integral allocations (refer to \Cref{subsec:Experiments_Appendix} for details).

\begin{figure}
	\centering
	\includegraphics[width=\linewidth]{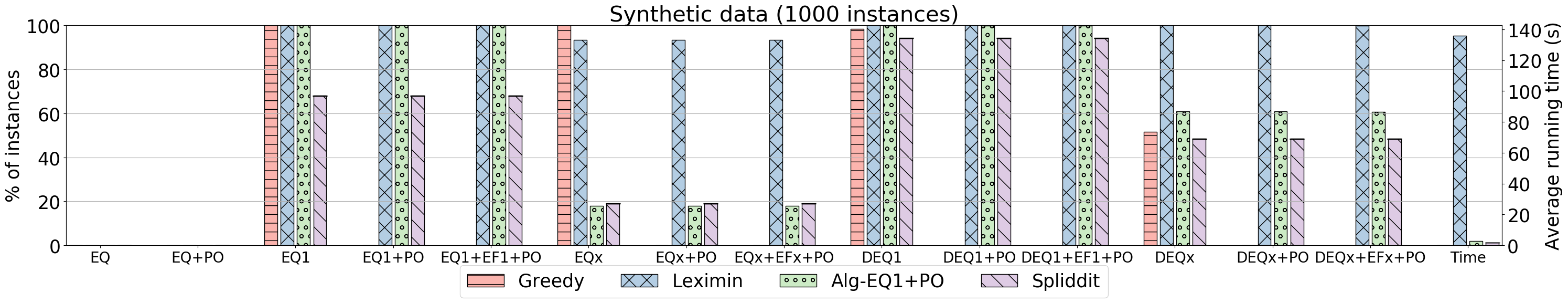}
	\includegraphics[width=\linewidth]{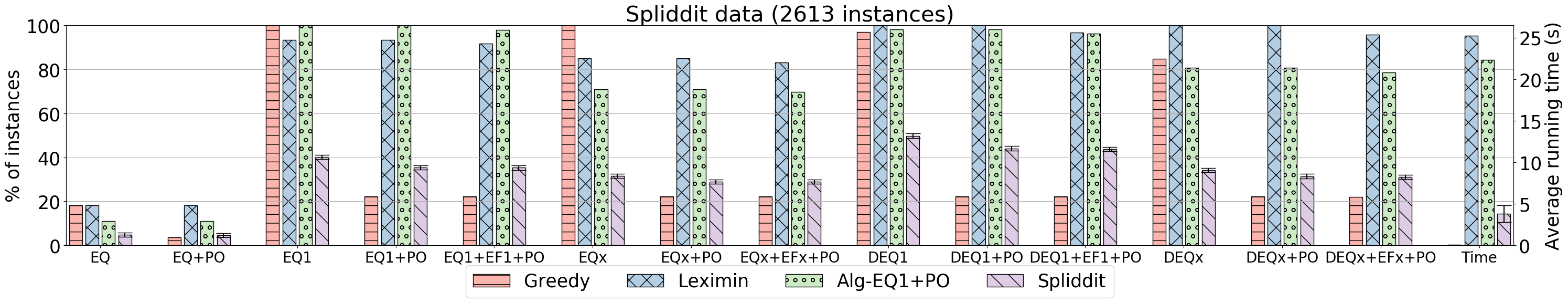}
	\caption{Experimental results for Synthetic (top) and  Spliddit (bottom) datasets.}
	\label{fig:Experimental_Results}
\end{figure}

\Cref{fig:Experimental_Results} presents our experimental results. For each property combination (X-axis), the plots show the $\%$ of instances (Y-axis) for which each algorithm achieves those properties. The rightmost set of bars present a comparison of the running times. For the Spliddit algorithm, we plot the average values obtained from 100 runs, and the error bars show one standard deviation around the mean.

Starting with exact equitability, we observe that a very small fraction of instances ($<20\%$ in Spliddit and none in Synthetic) admit \EQ{} and \EQ{}+\PO{} allocations, as one might expect.\footnote{An equitable (\EQ{}) and Pareto optimal allocation (\PO{}), whenever it exists, is provably achieved by the Leximin algorithm.} For the approximate notions, the greedy algorithm finds \EQX{} allocations on all instances as advertised (\Cref{prop:EQX_Existence_Computation}), but its performance drops off sharply when \PO{} is also required; in particular, for Synthetic data, the greedy outcome is always Pareto dominated.

Leximin performs remarkably well across the board. In addition to satisfying \DEQX{}+\PO{} on all instances (\Cref{prop:DEQX+PO}), it also satisfies \EQ{X} and \EF{X} on more than $80\%$ of the instances in both datasets. Unfortunately, it is also the slowest of all algorithms, with an average runtime of $\sim$$140$ seconds on Synthetic dataset, compared to $<$$1$ second runtime of the fastest (greedy) algorithm.

The market-based algorithm \AlgEQonePO{} computes \EQ{1}+\PO{} allocations as expected (\Cref{thm:EQ1+PO_pseudopolynomial}), and somewhat surprisingly, also satisfies \DEQ{1} (and \EF{1}). However, its performance drops off when stronger approximations of \EQX{}/\DEQX{} are required.

The Spliddit algorithm is consistently (and often, significantly) outperformed by Leximin and \AlgEQonePO{}, even on the Spliddit dataset. The reason is that the Spliddit algorithm is perfectly equitable \emph{ex ante} but not necessarily \EQ{1} \emph{ex post}. As a result, it is better suited for ensuring fairness over time, say, when the same set of chores are repeatedly divided among the same agents, as noted on the Spliddit website.

In summary, Leximin emerges as the algorithm of choice in terms of simultaneously achieving approximate fairness and economic efficiency. We find it intriguing that the same algorithm was also a clear winner in the experimental analysis of \citet{FSV+19equitable} for goods, even though it is no longer provably \EQX{} (or even \EQ{1}). Equally intriguing is the fact that a currently deployed algorithm is outperformed by well-known (Leximin) and proposed (\AlgEQonePO{)} algorithms, thereby justifying the usefulness of analyzing (approximate) fairness for chore division.
\section{Discussion}
\label{sec:Discussion}
We studied equitable allocations of indivisible chores in conjunction with other well-known notions of fairness (envy-freeness) and economic efficiency (Pareto optimality), and provided a number of existential and computational results. Our results reveal some interesting points of difference between the goods and chores settings. While a modification of the market approach used by~\citet{FSV+19equitable} to achieve \EQ{1}+\PO{} in the goods setting works for chores, it may be the case that no allocation satisfying \EQ{X}+\PO{} exists in the chores setting. In response to this possible nonexistence, we have defined two new notions of relaxed equitability, \DEQ{1} and \DEQ{X}, that address equitability violations by \emph{adding} chores to bundles rather than removing them. A number of open questions remain regarding the computation of allocations that satisfy these notions (with or without Pareto optimality). It may also be an interesting topic for future work to consider similar relaxations of envy-freeness in the chores setting.

In our experimental analysis, we have considered four different algorithms for chore division on both a real-world dataset gathered from the Spliddit website as well as a synthetic dataset.
Our experiments present a compelling case that, in practice, Leximin is the best known algorithm for one-shot allocation of indivisible chores. This is true not only with respect to (relaxed) equitability, but also (relaxed) envy-freeness and Pareto optimality.

\section*{Acknowledgments}
We are grateful to the anonymous reviewers for their helpful comments, and to Ariel Procaccia and Nisarg Shah for sharing with us the data from Spliddit. LX acknowledges NSF \#1453542 and \#1716333 for support.

\bibliographystyle{named}
\bibliography{References}

\clearpage
\newpage
\section{Appendix}
 \label{sec:Appendix}

\subsection{Proof of Theorem~\ref{thm:EQ1+PO_pseudopolynomial}}
\label{subsec:EQ1+PO}

Recall the statement of \Cref{thm:EQ1+PO_pseudopolynomial}.

\EQonePOPseudopolynomial*

\begin{remark}
Given an instance $\I$ and a chore $j \in [m]$, let $S_j \coloneqq \{i \in [n] : v_{i,j} = 0\}$ denote the set of agents that value $j$ at $0$. Then, in any Pareto optimal allocation, $j$ must be assigned to one of the agents in $S_j$. The choice of which agent in $S_j$ gets $j$ is immaterial from the viewpoint of \EQ{1}. (Specifically, if $A$ is an \EQ{1}+\PO{} allocation that assigns chore $j$ to some agent $i \in S_j$, then an allocation derived from $A$ in which chore $j$ is assigned to some other agent $k \in S_j$ is also \EQ{1}+\PO{}.) Therefore, in our discussion on \EQ{1}+\PO{} allocations, we will only focus on strictly negative valuations.
\label{rem:EQ1_PO_Negative_Vals}
\end{remark}

The proof of \Cref{thm:EQ1+PO_pseudopolynomial} relies on the algorithm \AlgEQonePO{} (presented in Algorithm~\ref{alg:EQ1+PO}), and spans \Cref{subsec:EQ1+PO,subsec:Proof_RunningTime_ALG_EQ1+PO,subsec:Proof_RunningTime_Phase2,subsec:Proof_RunningTime_Phase3,subsec:Proof_Correctness_ALG_EQ1+PO_original_instance}.
We will start with some necessary definitions that will help us state \Cref{thm:epsEQ1+PO_pseudopolynomial}, of which \Cref{thm:EQ1+PO_pseudopolynomial} is a special case.

\paragraph{Fractional allocations}
A \emph{fractional allocation} $\x \in [0,1]^{n \times m}$ refers to a fractional assignment of the chores to the agents such that exactly one unit of any chore is allocated, i.e., for every chore $j \in [m]$, $\sum_{i \in [n]} x_{i,j} = 1$. We will use the term \emph{allocation} to refer to an integral (or discrete) allocation and explicitly write \emph{fractional allocation} otherwise.

\paragraph{$\eps$-Pareto optimality}
Given any $\eps \geq 0$, $A$ is \emph{$\eps$-Pareto optimal} ($\eps$-\PO{}) if there does not exist an allocation $B$ such that $v_k(B_k) \geq \frac{1}{(1+\eps)} v_k(A_k)$ for every agent $k \in [n]$ with one of the inequalities being strict. 

\paragraph{Fractional Pareto optimality}
An allocation is \emph{fractionally Pareto optimal} (\fPO{}) if it not Pareto dominated by any fractional allocation. Thus, a fractionally Pareto optimal allocation is also Pareto optimal, but the converse is not necessarily true.

\paragraph{$\eps$-\EQ{1} allocation}
Given any $\eps \geq 0$, an allocation $A$ is $\eps$-\emph{equitable up to one chore} ($\eps$-\EQ1) if for every pair of agents $i,k \in [n]$ such that $A_i \neq \emptyset$, there exists a chore $j \in A_i$ such that $\frac{1}{(1+\eps)} v_i(A_i \setminus \{j\}) \geq v_k(A_k)$.

\begin{restatable}{theorem}{epsEQonePOPseudopolynomial}
 \label{thm:epsEQ1+PO_pseudopolynomial}
 Given any fair division instance with additive and strictly negative valuations and any $\eps > 0$, an allocation that is $3\eps$-equitable up to one chore $(3\eps\text{-}\EQ{1})$ and $\eps$-Pareto optimal $(\eps\text{-}\PO{})$ always exists and can be computed in $\O(\poly(m,n,\ln |v_{\min}|,\nicefrac{1}{\eps}))$ time, where $v_{\min} = \min_{i,j} v_{i,j}$.
\end{restatable}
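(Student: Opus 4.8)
The plan is to analyze the three-phase market algorithm \AlgEQonePO{} illustrated on the running example, carrying out the entire argument inside the Fisher-market framework and adapting the goods algorithm of \citet{FSV+19equitable} to use \emph{price-drops}, the natural direction for negative valuations. Throughout, I maintain an integral allocation $A$ together with a strictly positive price vector $\p$, and track for each agent $i$ its best (least negative) bang-per-buck ratio $\alpha_i \coloneqq \max_{j \in [m]} v_{i,j}/p_j$. The central invariant is that $A$ is an $\eps$-approximate \MBB{} equilibrium: every chore $j \in A_i$ has ratio $v_{i,j}/p_j$ within a multiplicative $(1+\eps)$ factor of $\alpha_i$. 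By the first welfare theorem an exact \MBB{} equilibrium is \fPO{} and hence \PO{}, and the $\eps$-relaxed version analogously yields $\eps$-\PO{}; so establishing $\eps$-\PO{} reduces entirely to preserving this invariant at every step. The whole difficulty is therefore to push $A$ toward $3\eps$-\EQ{1} without ever breaking the $\eps$-\MBB{} condition.

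First I would fix the initialization (Phase 1): assign each chore to an agent valuing it most, breaking ties arbitrarily, and set $p_j = |v_{i,j}|$ for the owner $i$, so the owner's ratio is exactly $-1$ and hence maximal; this is an exact \MBB{} equilibrium and so $\eps$-\MBB{}. The key structural fact I would exploit is the monotonicity property noted in \Cref{subsec:Dup_EQ}: if $A$ violates (approximate) \EQ{1}, then without loss of generality the violation is witnessed by the happiest agent $h$ (whose utility is closest to $0$) as the more-satisfied party. This lets the algorithm localize all repair to $h$. In Phase 2, I transfer a single chore $j$ that already lies in the $\eps$-\MBB{} set of $h$ from its current, less happy owner to $h$; since $j$ is $\eps$-\MBB{} for $h$ and every other agent only loses a chore while its prices are unchanged, no retained chore drops out of its owner's $\eps$-\MBB{} set, so the invariant survives and $h$'s utility strictly decreases. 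When $h$'s $\eps$-\MBB{} set is exhausted (all such chores already belong to $h$), I switch to Phase 3 and lower the prices of $h$'s own chores in multiplicative steps of $(1+\eps)$ until some new chore first enters $h$'s $\eps$-\MBB{} set, then return to Phase 2 to transfer it. Lowering the price of a chore owned by $h$ only makes that chore less attractive to everyone, which relaxes the $\eps$-\MBB{} threshold for $h$'s retained chores and leaves every other agent's bundle untouched, so the invariant is again preserved.

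For correctness upon termination, I would argue that the process halts only in a $3\eps$-\EQ{1} state and pinpoint where the factor $3$ comes from: it is the composition of the $(1+\eps)$ slack built into $\eps$-\MBB{} membership, the $(1+\eps)$ granularity of the price drops, and a further $(1+\eps)$ incurred when relating agents' utilities to prices through the \MBB{} ratios, so that $(1+\eps)^3 \le 1+3\eps$ for small $\eps$ bounds the residual equitability gap between $h$ and the least happy agent once neither a transfer nor a productive price drop remains available. I would then recover the exact pseudopolynomial statement of \Cref{thm:EQ1+PO_pseudopolynomial} as the special case in which $\eps$ is taken small enough as a function of $|v_{\min}|$ that, for integral valuations, the only way to be $3\eps$-\EQ{1} and $\eps$-\PO{} is to be exactly \EQ{1} and \PO{}; the $\O(\poly(m,n,|v_{\min}|))$ bound then follows by substituting this $\eps$ into the running-time bound below.

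The main obstacle, and the bulk of the work, is the termination and running-time analysis, which I expect to be genuinely delicate. For the inner loop, each Phase 2 transfer strictly lowers $h$'s utility and moves a chore permanently onto $h$, so only polynomially many transfers occur before a price change is forced. For the outer loop, I would show that every price stays within a geometric range whose width is controlled by $\ln|v_{\min}|$, so that, since each drop shrinks a price by a factor $(1+\eps)$, at most $\O(\nicefrac{\ln|v_{\min}|}{\eps})$ drops occur per chore and $\O(m \cdot \poly(\ln|v_{\min}|,\nicefrac{1}{\eps}))$ in total. The hard part is ruling out cycling across the Phase 2 / Phase 3 interleaving: I would need a global potential---most plausibly the lexicographically sorted utility profile, or $h$'s utility together with the price configuration---that strictly decreases across every phase switch, and I would have to confirm that prices never escape the claimed range so that the logarithmic price-drop count is actually valid. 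Bounding the per-step work of recomputing $\alpha_h$, the \MBB{} set, and the identity of the happiest agent is routine by comparison.
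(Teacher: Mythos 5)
Your proposal reproduces the paper's high-level architecture (three phases, initialization at prices $p_j=|v_{i,j}|$, \MBB{}-based transfers toward the happiest agent, price drops, first welfare theorem for efficiency, $(1+\eps)$-accounting for the factor $3$, and recovery of \Cref{thm:EQ1+PO_pseudopolynomial} by taking $\eps$ small), but two of your simplifications break the algorithm, and the part you defer is the actual content of the proof. The first and most serious gap is your Phase 2 rule: you transfer \emph{any} chore in the reference agent's $\eps$-\MBB{} set from its current owner to $h$, with no requirement that the owner be an $\eps$-(path-)violator, i.e., an agent whose utility stays below (a $(1+\eps)$ factor of) the reference's even after losing that chore. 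Without this condition the procedure can cycle on a fixed state: take two near-happiest agents $a,b$, each owning a chore that lies in the other's \MBB{} set, and a third agent who is the true \EQ{1} violator but none of whose chores is \MBB{} for anyone. When $a$ is the reference it takes $b$'s chore, $b$ becomes the reference and takes it straight back, and the allocation and prices return exactly to the previous state; the violator is never touched, the $3\eps$-\EQ{1} termination test never fires, and no potential function can help because the state literally repeats. The paper's algorithm forbids this by swapping only when the agent at the end of an alternating path is an $\eps$-path-violator (Line~\ref{algline:IfCondition_Phase2} of \Cref{alg:EQ1+PO}), and that same condition is what powers the running-time analysis: only path-violators ever lose chores, which is what forces a $(1+\eps)$ multiplicative utility drop each time the reference identity cycles back to an agent (\Cref{lem:Lower_Bound_On_Increase_In_Utility,lem:Bound_On_Identity_Changes}).

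The second gap is structural: once transfers are (correctly) restricted to violators, your Phase 3 trigger can occur while $h$'s $\eps$-\MBB{} set still contains chores owned by happy agents, and then lowering the prices of only $h$'s \emph{own} chores pushes $h$'s owned chores out of its own ($\eps$-)\MBB{} set, destroying the equilibrium invariant from which you derive $\eps$-\PO{}. A uniform price drop preserves \MBB{}-consistency only if it covers the whole \MBB{}-connected component around $h$---exactly the paper's set of \emph{reachable chores}---and the new \MBB{} edge it creates may attach to an agent several hops away, which is why the paper needs swaps along alternating paths rather than direct transfers; the reachability machinery you drop is forced, not optional. Finally, your running-time sketch rests on two claims that are false or unproven as stated: chores do not move ``permanently'' onto $h$ (they can be taken away once $h$ ceases to be the reference), and the geometric price range is not automatic---in the paper it follows from the lemmas that $\eps$-violators never gain chores and are never reachable (\Cref{lem:Set_Of_Violators_Cannot_Grow,lem:Allocation_Of_Violators_Cannot_Grow,lem:Violators_Are_Not_Reachable}), so some violator-owned chore keeps its Phase 1 price and anchors every \MBB{} ratio from below (\Cref{cor:MBB_Lower_Bound_Implication}), which in turn bounds the potential $\sum_i \log_{1+\eps}|\beta_i^t|$ used in \Cref{lem:RunningTime_Phase3}. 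These lemmas are the bulk of the paper's proof and are absent from your plan. (A minor slip: the factor $3$ comes from $(1+\eps)^2 \le 1+3\eps$---one $(1+\eps)$ of algorithmic slack and one from rounding---not from $(1+\eps)^3 \le 1+3\eps$, which is false.)
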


When $0 < \eps \leq \frac{1}{6m |v_{\min}|^3}$, we recover \Cref{thm:EQ1+PO_pseudopolynomial} as a special case of \Cref{thm:epsEQ1+PO_pseudopolynomial} (see \Cref{lem:Bound_On_Eps_For_Exact_PO,lem:Bound_On_eps_for_Exact_EQone}). 

The remainder of this section develops the necessary preliminaries that will enable us to present our algorithm (Algorithm~\ref{alg:EQ1+PO}) and the analysis of its running time (\Cref{lem:RunningTime_ALG_EQ1+PO}) and correctness (\Cref{lem:Correctness_ALG_EQ1+PO_original_instance}). The detailed proofs of these results are presented subsequently in \Cref{subsec:Proof_RunningTime_ALG_EQ1+PO,subsec:Proof_RunningTime_Phase2,subsec:Proof_RunningTime_Phase3,subsec:Proof_Correctness_ALG_EQ1+PO_original_instance}.

\subsubsection*{Market Preliminaries}

\paragraph{Fisher market for chores}
 A Fisher market for chores is an economic model that consists of a set of divisible chores and a set of agents (or buyers), each of whom is given a \emph{budget} (or \emph{endowment}) of virtual money \citep{BS00compute}. The agents are required to exhaust their budgets (of virtual money) to purchase a utility-maximizing subset of the chores but do not derive any utility from the money itself. Formally, a Fisher market is given by a tuple $\M = \langle [n],[m],\V,e \rangle$ consisting of a set of $n$ \emph{agents} $[n] = \{1,2,\dots,n\}$, a set of $m$ divisible \emph{chores} $[m] = \{1,2,\dots,m\}$, a \emph{valuation profile} $\V = \{v_1,v_2,\dots,v_n\}$ and a vector of \emph{endowments} or \emph{budgets} $\e = (e_1,e_2,\dots,e_n)$. 

A \emph{market outcome} refers to a pair $(A,\p)$, where $A = (A_1,\dots,A_n)$ is a \emph{fractional allocation} of the $m$ chores, and $\p = (p_1,\dots,p_m)$ is a \emph{price vector} that associates a non-negative price $p_j \geq 0$ with every chore $j \in [m]$. The \emph{spending} of agent $i$ under the market outcome $(A,\p)$ is given by $s_i = \sum_{j = 1}^m A_{i,j} p_j$. The \emph{utility} derived by the agent $i$ under $(A,\p)$ depends linearly on the valuations as $v_i(A_i) =  \sum_{j=1}^m A_{i,j} v_{i,j}$.

\paragraph{\MBB{} ratio and \MBB{} set}
Given a price vector $\p = (p_1,\dots,p_m)$, define the \emph{bang-per-buck} ratio of agent $i$ for chore $j$ as $\alpha_{i,j} \coloneqq v_{i,j}/p_j$. The \emph{maximum bang-per-buck} \emph{ratio} (or \MBB{} ratio) of agent $i$ is $\alpha_i \coloneqq \max_j \alpha_{i,j}$.\footnote{If $v_{i,j} = 0$ and $p_j = 0$, then $\alpha_{i,j} \coloneqq 0$.} The \emph{maximum bang-per-buck} \emph{set} (or \MBB{} set) of agent $i$ is the set of all chores that maximize the bang-per-buck ratio for agent $i$ at the price vector $\p$, i.e., $\MBB_i \coloneqq \{j \in [m] : v_{i,j}/p_j = \alpha_i\}$. Note that the \MBB{} ratios are non-positive.

A market outcome $(A,\p)$ constitutes an \emph{equilibrium} if it satisfies the following conditions:
	\begin{itemize}
	\item \emph{Market clearing}: Each chore is either priced at zero or is completely allocated. That is, for every chore $j \in [m]$, either $p_j = 0$ or $\sum_{i=1}^n A_{i,j} = 1$.
	\item \emph{Budget exhaustion}: Agents spend their budgets completely, i.e., $s_i = e_i$ for all $i \in [n]$.
	\item \emph{\MBB{} consistency}: Each agent's allocation is a subset of its $\MBB{}$ set. That is, for every agent $i \in [n]$ and every chore $j \in [m]$, $A_{i,j} > 0 \implies j \in \MBB_i$. Note that \MBB{} consistency implies that every agent maximizes its utility at the given prices $\p$ under the budget constraints.
	\end{itemize}

\Cref{prop:FirstWelfareTheorem} presents the well-known \emph{first welfare theorem} for Fisher markets \citep[Chapter~16]{MWG+95microeconomic}. For completeness, we provide a proof of this result for the chores setting.

\begin{restatable}[First welfare theorem]{prop}{FirstWelfareTheorem}
 \label{prop:FirstWelfareTheorem}
 For a Fisher market with linear utilities, any equilibrium outcome is fractionally Pareto optimal $(\fPO{})$.
\end{restatable}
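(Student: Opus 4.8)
The plan is to argue by contradiction using a ``revealed preference'' / conservation-of-spending argument, taking care with the sign conventions of the chores setting. For a fractional allocation $B$ write $s_i^B \coloneqq \sum_j B_{i,j} p_j$ for the spending of agent $i$. I would first record two relations. (i) At the equilibrium $(A,\p)$, \MBB{} consistency forces $v_{i,j} = \alpha_i p_j$ on the support of $A_i$, so combined with budget exhaustion $v_i(A_i) = \alpha_i \sum_j A_{i,j} p_j = \alpha_i s_i = \alpha_i e_i$. (ii) For \emph{every} agent $i$ and every fractional allocation $B$, one has $v_i(B_i) \le \alpha_i s_i^B$. Relation (ii) follows chore-by-chore from the uniform bound $v_{i,j} \le \alpha_i p_j$: when $p_j > 0$ this is exactly the statement that $\alpha_i = \max_j v_{i,j}/p_j$ is a maximum, and when $p_j = 0$ it reduces to $v_{i,j} \le 0$, which holds because valuations are non-positive.

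Next, suppose for contradiction that some fractional allocation $B$ Pareto dominates $A$, i.e., $v_i(B_i) \ge v_i(A_i)$ for all $i$ with strict inequality for some agent $i_0$. Chaining (i) and (ii) gives $\alpha_i e_i = v_i(A_i) \le v_i(B_i) \le \alpha_i s_i^B$ for every $i$, hence $\alpha_i e_i \le \alpha_i s_i^B$. Since the relevant valuations are strictly negative (the case of interest by \Cref{rem:EQ1_PO_Negative_Vals}) and prices are positive, every $\alpha_i < 0$, so dividing by $\alpha_i$ \emph{reverses} the inequality to $s_i^B \le e_i$; for the strictly improved agent $i_0$ the same manipulation yields the strict bound $s_{i_0}^B < e_{i_0}$. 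Summing over all agents gives $\sum_i s_i^B < \sum_i e_i$.

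The contradiction then comes from conservation of total spending. Because $B$ fully allocates each chore, $\sum_i B_{i,j} = 1$, so $\sum_i s_i^B = \sum_j p_j \sum_i B_{i,j} = \sum_j p_j$. On the other hand, budget exhaustion gives $\sum_i e_i = \sum_i s_i = \sum_j p_j \sum_i A_{i,j}$, and market clearing ensures $\sum_i A_{i,j} = 1$ whenever $p_j > 0$ (chores priced at $0$ contribute nothing), so $\sum_i e_i = \sum_j p_j$ as well; thus $\sum_i s_i^B = \sum_i e_i$, contradicting the strict inequality just derived. Hence no dominating $B$ exists and $(A,\p)$ is \fPO{}. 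I expect the only delicate points to be bookkeeping rather than conceptual: orienting the inequalities correctly when multiplying or dividing by the non-positive quantities $\alpha_i$ and $v_{i,j}$, and disposing of the degenerate cases. The genuine obstacle is an agent with $\alpha_i = 0$, where division by $\alpha_i$ is invalid and the spending bound $s_i^B \le e_i$ need not follow; this is ruled out for the strictly improved agent $i_0$ since $\alpha_{i_0} = 0$ would force $v_{i_0}(A_{i_0}) = 0$, already the maximal possible utility, and it is excluded globally by the focus on strictly negative valuations, for which the $v_{i,j} \le \alpha_i p_j$ reformulation holds with $\alpha_i < 0$ throughout.
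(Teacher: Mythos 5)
Your proposal is correct and follows essentially the same argument as the paper's proof: both use \MBB{} consistency to get the equality $v_i(A_i) = \alpha_i \, p(A_i)$, the bang-per-buck bound $v_i(B_i) \le \alpha_i \, p(B_i)$ for the dominating allocation, division by the strictly negative \MBB{} ratios to reverse the inequalities into spending bounds, and summation over agents to contradict conservation of total spending. The only cosmetic difference is that you route the comparison through the budgets $e_i$ via budget exhaustion and treat the $p_j = 0$ and $\alpha_i = 0$ degeneracies explicitly, whereas the paper compares bundle prices $p(\x_i)$ and $p(A_i)$ directly and dismisses $\alpha_i = 0$ by appeal to its standing non-triviality assumptions.
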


\begin{proof}
Suppose, for contradiction, that there exists an allocation $A$ and a price vector $\p$ such that $(A,\p)$ is an equilibrium but $A$ is not \fPO{}. Thus, there exists a fractional allocation, say $\x$, such that $v_i(\x_i) \geq v_i(A_i)$ for all $i \in [n]$ and $v_k(\x_k) > v_k(A_k)$ for some $k \in [n]$. Since both $\x$ and $A$ are required to assign all chores, we have that $\cup_{i \in [n]} \x_i = \cup_{i \in [n]} A_i = [m]$.

By \MBB{}-consistency, we have that $v_i(A_i) = p(A_i) \cdot \alpha_i$ for every $i \in [n]$, where $\alpha_i$ is the \MBB{} ratio for agent $i$, and $p(A_i) \coloneqq \sum_{j \in A_i} p_j$ is the price of the bundle $A_i$. Since $\x$ is not guaranteed to satisfy \MBB{}-consistency, we have that $v_i(\x_i) \leq p(\x_i) \cdot \alpha_i$ for every $i \in [n]$. Substituting these relations in the aforementioned inequalities, we get that $p(\x_i) \cdot \alpha_i \geq p(A_i) \cdot \alpha_i$ for all $i \in [n]$ and $p(\x_k) \cdot \alpha_k > p(A_k) \cdot \alpha_k$ for some $k \in [n]$.

Recall from \Cref{sec:Preliminaries} that for each chore, there exists some agent with a non-zero valuation for
it, and for each agent, there exists a chore that it has non-zero value for. This implies that $\alpha_i < 0$ for every agent $i \in [n]$. Thus, $p(\x_i) \leq p(A_i)$ for all $i \in [n]$ and $p(\x_k) < p(A_k)$ for some $k \in [n]$. By summing these inequalities for all agents, we get that $p([m]) = \sum_{i \in [n]} p(\x_i) < \sum_{i \in [n]} p(A_i) = p([m])$, which is a contradiction. Hence, $A$ must be \fPO{}. 
\end{proof}

\paragraph{\MBB{}-Allocation graph and alternating paths} 

Given a Fisher market $\M = \langle [n],[m],\V,e \rangle$, let $A$ and $\p$ denote an integral allocation and a price vector for $\M$ respectively. An \emph{\MBB{}-allocation graph} is an undirected bipartite graph $G$ with vertex set $[n] \cup [m]$ and an edge between agent $i \in [n]$ and chore $j \in [m]$ if either $j \in A_i$ (called an \emph{allocation edge}) or $j \in \MBB_i$ (called an \emph{$\MBB$ edge}). Notice that if $A$ is \MBB{}-consistent (i.e., $j \in A_i \implies j \in \MBB_i$), then the allocation edges are a subset of \MBB{} edges.

For an \MBB{}-allocation graph, define an \emph{alternating path} $P = (i,j_1,i_1,j_2,i_2,\dots,i_{\ell - 1},j_\ell,k)$ from agent $i$ to agent $k$ (and involving the agents $i_1,i_2,\dots,i_{\ell-1}$ and the chores $j_1,j_2,\dots,j_\ell$) as a series of alternating \MBB{} and allocation edges such that $j_1 \in \MBB_i \cap A_{i_1}$, $j_2 \in \MBB_{i_1} \cap A_{i_2}$,$\dots$, $j_\ell \in \MBB_{i_{\ell - 1}} \cap A_k$. If such a path exists, we say that agent $k$ is \emph{reachable} from agent $i$ via an alternating path.\footnote{Note that no agent or chore can repeat in an alternating path.} In this case, the \emph{length} of path $P$ is $2\ell$ since it consists of $\ell$ allocation edges and $\ell$ \MBB{} edges.

\paragraph{Reachability set} Let $G$ denote the \MBB{}-allocation graph of a Fisher market for the outcome $(A,\p)$. Fix a \emph{source} agent $i \in [n]$ in $G$. Define the \emph{level} of an agent $k \in [n]$ as half the length of the shortest alternating path from $i$ to $k$ if one exists (i.e., if $k$ is reachable from $i$), otherwise set the level of $k$ to be $n$. The level of the source agent $i$ is defined to be $0$. The \emph{reachability set} $\R_i$ of agent $i$ is defined as a level-wise collection of all agents that are reachable from $i$, i.e., $\R_i = (\R_i^{0},\R_i^{1},\R_i^{2},\dots,)$, where $\R_i^{\ell}$ denotes the set of agents that are at level $\ell$ with respect to agent $i$. Note that given an \MBB{}-allocation graph, a reachability set can be constructed in polynomial time via breadth-first search.

Given a reachability set $\R_i$, we can redefine an \emph{alternating path} as a set of alternating \MBB{} and allocation edges connecting agents at a \emph{lower level} to those at a \emph{higher level}. Formally, we will call a path $P = (i,j_1,i_1,j_2,i_2,\dots,i_{\ell - 1},j_\ell,k)$ \emph{alternating} if (1) $j_1 \in \MBB_i \cap A_{i_1}$, $j_2 \in \MBB_{i_1} \cap A_{i_2}$,$\dots$, $j_\ell \in \MBB_{i_{\ell - 1}} \cap A_k$, and (2) $\level(i) < \level(i_1) < \level(i_2) < \dots < \level(i_{\ell - 1}) < \level(k)$. Thus, an alternating path cannot have edges between agents at the same level.

\paragraph{Violators and path-violators} Given a Fisher market $\M$ and an allocation $A$, an agent $i \in [n]$ with the highest valuation among all the agents is called the \emph{reference agent}, i.e., $i \in \arg\max_{k \in [n]} v_k(A_k)$.\footnote{Ties are broken lexicographically.} An agent $k \in [n]$ is said to be a \emph{violator} if for every chore $j \in A_k$, we have that $v_k(A_k \setminus \{j\}) < v_i(A_i)$, where $i$ is the reference agent. Notice that the allocation $A$ is \EQ{1} if and only if there is no violator.

Given any $\eps > 0$, an agent $k \in [n]$ is an $\eps$-\emph{violator} if for every chore $j \in A_k$, we have $\frac{1}{(1+\eps)} v_k(A_k \setminus \{j\}) <  v_i(A_i)$. Thus, an agent can be a violator without being an $\eps$-violator. An allocation $A$ is $\eps$-\EQ{1} if and only if there is no $\eps$-violator.

A closely related notion is that of a \emph{path-violator}. Let $i$ and $\R_i$ denote the reference agent and its reachability set respectively. An agent $k \in \R_i$ is a \emph{path-violator} with respect to the alternating path $P = (i,j_1,i_1,j_2,i_2,\dots,i_{\ell - 1},j_\ell,k)$ if $v_k(A_k \setminus \{j_\ell\}) < v_i(A_i)$. Note that a path-violator (along a path $P$) need not be a violator as there might exist some chore $j \in A_k$ not on the path $P$ such that $v_k(A_k \setminus \{j\}) \geq v_i(A_i)$. Finally, given any $\eps > 0$, an agent $k \in \R_i$ is an \emph{$\eps$-path-violator} with respect to the alternating path $P = (i,j_1,i_1,\dots,j_\ell,k)$ if $\frac{1}{(1+\eps)} v_k(A_k \setminus \{j_\ell\}) <  v_i(A_i)$.

\paragraph{$\eps$-rounded instance}
Given any $\eps > 0$, an \emph{$\eps$-rounded instance} refers to a fair division instance $\langle [n], [m], \V \rangle$ in which the valuations are either zero or the negative of a non-negative integral power of $(1+\eps)$. That is, for every agent $i \in [n]$ and every chore $j \in [m]$, we have $v_{i,j} \in \{0,-(1+\eps)^t\}$ for some $t \in \N \cup \{0\}$. 

Given any instance $\I = \langle [n], [m], \V \rangle$, the \emph{$\eps$-rounded version} of $\I$ is an instance $\I' = \langle [n], [m], \W \rangle$ obtained by \emph{rounding down} the valuations in $\I$ to the nearest integral power of $(1+\eps)$. That is, the \emph{$\eps$-rounded version} of instance $\I = \langle [n], [m], \V \rangle$ is an $\eps$-rounded instance $\I' = \langle [n], [m], \W \rangle$ constructed as follows: For every agent $i \in [n]$ and every chore $j \in [m]$, $ w_{i,j}:= -(1+\eps)^{\lceil \log_{1+\eps}|v_{i,j}| \rceil}$ if $v_{i,j}< 0$, and $0$ otherwise. Notice that $v_{i,j} \geq w_{i,j} \geq (1+\eps)v_{i,j}$ for every agent $i$ and every chore $j$. We will assume that the rounded valuations are also \emph{additive}, i.e., for any set of chores $S \subseteq [m]$, $w_i(S) \coloneqq  \sum_{j \in S} w_{i,j}$.

\subsubsection*{Description of the Algorithm}

Given an instance $\I = \langle [n],[m],\V \rangle$ as input, we first construct its $\eps$-rounded version $\I' = \langle [n],[m],\W \rangle$, which is then provided as an input to \AlgEQonePO{} (Algorithm~\ref{alg:EQ1+PO}). 

The algorithm consists of three phases. In Phase 1, each chore is assigned to an agent with the highest (i.e., closest to zero) valuation for it (Line~\ref{algline:Assignment_Phase1}). This ensures that the initial allocation is \emph{integral} as well as \emph{fractionally Pareto optimal} (\fPO{}).\footnote{Indeed, the said allocation is \MBB{}-consistent with respect to the prices in Line~\ref{algline:Prices_Phase1}, and is therefore an equilibrium outcome of a Fisher market in which each agent is provided a budget equal to its spending under the allocation. From \Cref{prop:FirstWelfareTheorem}, the allocation is \fPO{}.} (These two properties are always maintained by the algorithm.) If the allocation at the end of Phase 1 is $\eps$-\EQ{1} with respect to the rounded instance $\I'$, then the algorithm terminates with this allocation as the output (Line~\ref{algline:TerminatePhase1}). Otherwise, it proceeds to Phase 2.

The allocation at the start of Phase 2 is not $\eps$-\EQ{1}, so there must exist an $\eps$-violator. Starting from the level $\ell = 1$ (Line~\ref{algline:InitializeLevel_Phase2}), the algorithm now performs a level-by-level search for an $\eps$-violator in the reachability set of the reference agent (Line~\ref{algline:IfCondition_Phase2}). As soon as an $\eps$-violator, say $h$, is found (along some alternating path $P$), the algorithm performs a pairwise swap between $h$ and the agent that precedes it along $P$ (Line~\ref{algline:Swap_Phase2}). Since the swapped chore is in the \MBB{} sets of both agents, the allocation continues to be \MBB{}-consistent after the swap. If, at any stage, the reference agent ceases to be the highest utility agent, Phase 2 restarts with the new reference agent (Line~\ref{algline:IdentityChange_Phase2}).

The above process continues until either the current allocation becomes $\eps$-\EQ{1} for the rounded instance $\I'$ (in which case the algorithm terminates and returns the current allocation as the output in Line~\ref{algline:TerminatePhase2}), or if no $\eps$-violator is reachable from the reference agent (Line~\ref{algline:WhileLoop_Phase2}). In the latter case, the algorithm proceeds to Phase 3.

Phase 3 involves uniformly lowering the prices of all the \emph{reachable chores}, i.e., the set of all chores that are collectively owned by all agents that are reachable from the reference agent (Line~\ref{algline:PriceDrop_Phase3}). The prices are lowered until a previously non-reachable agent becomes reachable due to the appearance of a new \MBB{} edge (Line~\ref{algline:Start_Of_Phase_3}). The algorithm now switches back to Phase 2 to start a fresh search for an $\eps$-violator in the updated reachability set (Line~\ref{algline:GoBackToPhase2_Phase3}).

\paragraph{Analysis of the algorithm}
The running time and correctness of our algorithm are established by \Cref{lem:RunningTime_ALG_EQ1+PO} and \Cref{lem:Correctness_ALG_EQ1+PO_original_instance} respectively, as stated below.

\begin{restatable}[\textbf{Running time}]{lemma}{RunningTimeALGEQonePO}
 \label{lem:RunningTime_ALG_EQ1+PO}
 Given as input any $\eps$-rounded instance with strictly negative valuations, \AlgEQonePO{} terminates in $\O(\poly(m,n,\ln |v_{\min}|,\nicefrac{1}{\eps}))$ time steps, where $v_{\min} = \min_{i,j} v_{i,j}$.
\end{restatable}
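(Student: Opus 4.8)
The plan is to bound separately (i) the cost of each individual step, (ii) the number of local-search swaps performed inside each invocation of Phase~2, (iii) the number of price-drop steps inside each invocation of Phase~3, and (iv) the total number of alternations between the two phases, then multiply these together. Throughout I would lean on the invariant, maintained from the end of Phase~1 onward, that the current allocation is integral and \MBB{}-consistent, hence an equilibrium of a suitable Fisher market and therefore \fPO{} by \Cref{prop:FirstWelfareTheorem}; since valuations are strictly negative this also forces all prices to stay strictly positive, so the reachability set $\R_i$ and every \MBB{} ratio is well defined at each step. The per-step costs are routine: Phase~1 is a single pass costing $\O(mn)$; each Phase~2 step rebuilds $\R_i$ by breadth-first search in $\O((n+m)^2)$ time and then performs one pairwise swap; each Phase~3 step computes in $\O(mn)$ time the largest uniform multiplicative drop on the reachable chores that causes a new \MBB{} edge to a non-reachable agent to appear, after which at least one new agent joins $\R_i$.

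Next I would bound the swaps within a single Phase~2 invocation combinatorially. Each swap moves a chore along an allocation edge from an agent at some level $\ell$ to the agent preceding it on the alternating path, which sits at a strictly smaller level, so each swap strictly decreases the integer potential $\sum_{j \in [m]} \level(\text{owner of }j)$, a quantity lying in $[0, nm]$. Since swaps leave prices untouched and re-running the breadth-first search cannot raise this potential, a single Phase~2 invocation performs at most $nm$ swaps before either the allocation becomes $\eps$-\EQ{1} (and we halt) or no $\eps$-violator is reachable (and we pass to Phase~3). Within one Phase~3 invocation prices are only lowered and each drop enlarges $\R_i$ by at least one agent, so at most $n$ consecutive drops occur before an $\eps$-violator becomes reachable (returning to Phase~2) or all $n$ agents are reachable. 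Thus the number of Phase~2 and Phase~3 invocations differ by at most one, and it suffices to bound the \emph{total} number of price drops.

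This last bound is the crux, and here the $\eps$-rounded structure does the work. Since Phase~2 never alters prices, the only global quantity that must be controlled is the price vector; working (without loss of generality, by rescaling, which preserves both the allocation and all \MBB{} sets) with the normalization that the smallest price equals $1$, \MBB{}-consistency together with the fact that every valuation is $0$ or $-(1+\eps)^{t}$ confines every price ratio to $[1,(m\,|v_{\min}|)^{\O(n)}]$, because prices of chores linked by alternating paths relate through products of valuation ratios. I would then exhibit a potential closely tracking the reference agent's \MBB{} ratio $\alpha$ (which begins at $-1$ and is driven strictly more negative by each Phase~3 drop): a new \MBB{} edge can appear only when a reachable agent's ratio descends to meet that of a non-reachable chore, and since both sides are $(1+\eps)$-powers divided by prices, the magnitude of the potential grows by a multiplicative factor of at least $(1+\eps)$ at each drop. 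As $|\alpha|$ is trapped in an interval of dynamic range $(m\,|v_{\min}|)^{\O(n)}$, at most $\O\!\big(\log_{1+\eps}(m\,|v_{\min}|)^{\O(n)}\big)=\O\!\big(\tfrac{n}{\eps}\ln(m\,|v_{\min}|)\big)$ drops occur in total. Multiplying this by the $\O(nm)$ swaps and the polynomial per-step costs yields the claimed $\O(\poly(m,n,\ln|v_{\min}|,\nicefrac{1}{\eps}))$ running time.

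The hard part will be making the multiplicative-progress claim precise, since two features resist a clean monotone potential. First, a Phase~2 swap can change the identity of the reference agent (the highest-utility agent), so any potential tethered to a fixed agent is not obviously monotone; I would address this either by using the agent-independent total price $p([m])=\sum_j p_j$ together with the normalization above, or by arguing that a change of reference agent can only accelerate the potential's descent. Second, one must rule out ``wasted'' price drops: it has to be shown that a maximal run of at most $n$ consecutive drops genuinely enables a swap that makes irreversible progress, rather than letting $\R_i$ oscillate, and that Phase~2 swaps do not manufacture fresh violators without bound. Establishing that each maximal Phase~3 stage advances the potential strictly and multiplicatively, so that the $\eps$-rounding really does cap the number of stages, is where the bulk of the technical effort lies.
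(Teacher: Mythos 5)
Your high-level decomposition (per-step cost, number of swaps per Phase~2 run, number of price drops, alternation count) matches the paper's, but two of your three key bounds have genuine gaps. The first concerns Phase~2. You claim a single Phase~2 invocation performs at most $nm$ swaps because each swap decreases the potential $\sum_{j\in[m]}\level(\text{owner of }j)$, and you assert that ``re-running the breadth-first search cannot raise this potential.'' That assertion is precisely what fails: after every swap the algorithm returns to Line~\ref{algline:Refresh_ReferenceAgent}, re-identifies the reference agent, and rebuilds the reachability set, so all levels are recomputed relative to a possibly \emph{different} source agent and a changed \MBB{}-allocation graph (the swap deleted one allocation edge and created another). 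When the reference agent changes, or when the deleted edge destroys the shortest alternating paths to downstream agents, levels---and hence your potential---can jump upward. You flag this very concern for ``potentials tethered to a fixed agent'' without noticing that your level-sum is exactly such a potential. Tellingly, the paper's bound on consecutive Phase~2 steps (\Cref{lem:RunningTime_Phase2}) is not $\poly(m,n)$ but $\O(\poly(m,n,\nicefrac{1}{\eps})\ln m|v_{\min}|)$; it is obtained by a two-level argument: $\poly(m,n)$ swaps between consecutive changes of the reference agent's identity (\Cref{lem:Bound_On_Consecutive_Swaps}), and then a bound on identity changes via a completely different potential---the reference agent's utility, which never increases because a reference agent never loses a chore in a swap (\Cref{lem:Reference_Utility_Nonincreasing}), and which must fall by a multiplicative factor $(1+\eps)$ after every $mn$ identity changes (\Cref{lem:Lower_Bound_On_Increase_In_Utility,lem:Bound_On_Identity_Changes}). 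If your clean combinatorial $nm$ bound were correct, all of this machinery (and its dependence on $\nicefrac{1}{\eps}$ and $\ln|v_{\min}|$) would be unnecessary; there is no evident way to repair the level-sum argument.

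The second gap is the one you yourself identify as the crux: the upper bound on the dynamic range of prices/\MBB{} ratios that caps the total number of Phase~3 drops. You assert that \MBB{}-consistency confines price ratios to $[1,(m|v_{\min}|)^{\O(n)}]$ ``because prices of chores linked by alternating paths relate through products of valuation ratios,'' but a priori nothing prevents the prices of reachable chores from being driven down without bound across many Phase~3 steps, so the claim needs an anchor, and the path-product heuristic does not supply one. The paper's anchor is the violator structure: at every Phase~3 step the $\eps$-violators are unreachable (\Cref{lem:Violators_Are_Not_Reachable}), and across Phase~3 steps the set of $\eps$-violators and their bundles can only shrink (\Cref{lem:Set_Of_Violators_Cannot_Grow,lem:Allocation_Of_Violators_Cannot_Grow}); hence some violator-owned chore retains its initial Phase~1 price throughout the execution, which lower-bounds \emph{every} agent's \MBB{} ratio by $w_{\min}$ (\Cref{lem:MBB_Lower_Bound_Violator_Good} and \Cref{cor:MBB_Lower_Bound_Implication}). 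Combined with the fact that under $\eps$-rounded valuations every drop factor $\Delta$ is an integral power of $(1+\eps)$---a point you do use correctly---the potential $\sum_{i\in[n]}\log_{1+\eps}|\beta_i|$ increases by at least $1$ per drop and is capped at $n\log_{1+\eps}|w_{\min}|$, yielding \Cref{lem:RunningTime_Phase3}. So your skeleton is right and your per-step cost accounting is fine, but the Phase~2 bound rests on a monotonicity claim that is false as stated, and the Phase~3 bound rests on a range claim that is left unproven and whose actual proof requires the violator-anchor lemmas you do not have.
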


The proof of \Cref{lem:RunningTime_ALG_EQ1+PO} appears in \Cref{subsec:Proof_RunningTime_ALG_EQ1+PO}.

\begin{restatable}[\textbf{Correctness}]{lemma}{CorrectnessALGEQonePOOriginal}
 \label{lem:Correctness_ALG_EQ1+PO_original_instance}
 Let $\I$ be any fair division instance with strictly negative valuations and $\I'$ be its $\eps$-rounded version for any given $\eps > 0$. Then, the allocation $A$ returned by \AlgEQonePO{} for the input $\I'$ is $3\eps$-\EQ{1} and $\eps$-\PO{} for $\I$. In addition, if $\eps \leq \frac{1}{6m |v_{\min}|^3}$, then $A$ is \EQ{1} and \PO{} for $\I$.
\end{restatable}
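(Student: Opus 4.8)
The plan is to transport the two guarantees that \AlgEQonePO{} establishes on the rounded instance $\I'$ back to the original instance $\I$, using only the pointwise rounding bounds. The starting point is the inequality recorded when $\I'$ was defined, namely $v_{i,j} \ge w_{i,j} \ge (1+\eps)v_{i,j}$ for every agent $i$ and chore $j$; by additivity this lifts to bundles, so $v_i(S) \ge w_i(S) \ge (1+\eps)v_i(S)$ for every $i$ and every $S \subseteq [m]$. I would first invoke \Cref{lem:RunningTime_ALG_EQ1+PO} together with the termination conditions of the algorithm to assert that the returned $A$ is (i) \fPO{}, and hence \PO{}, for $\I'$ (this invariant holds from Phase~1 onward by \MBB{}-consistency and \Cref{prop:FirstWelfareTheorem}), and (ii) $\eps$-\EQ{1} for $\I'$. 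Everything else is a translation of (i) and (ii) from $\W$ to $\V$, carried out with constant attention to signs, since all valuations are negative and inequalities reverse under multiplication or division by a negative quantity.

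For $\eps$-\PO{} I would argue by contraposition. If $A$ is not $\eps$-\PO{} for $\I$, there is an allocation $B$ with $v_k(B_k) \ge \frac{1}{1+\eps}v_k(A_k)$ for all $k$ and strict inequality for some $k_0$. Then for each $k$ the chain $w_k(B_k) \ge (1+\eps)v_k(B_k) \ge v_k(A_k) \ge w_k(A_k)$ holds, and it is strict at $k_0$; hence $B$ Pareto-dominates $A$ under $\W$, contradicting that $A$ is \PO{} for $\I'$. For $3\eps$-\EQ{1} I would fix a pair $(i,k)$ with $A_i \ne \emptyset$, take the chore $j \in A_i$ that property (ii) supplies (so $\frac{1}{1+\eps}w_i(A_i\setminus\{j\}) \ge w_k(A_k)$), and reuse that same $j$. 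Chaining $w_i(A_i\setminus\{j\}) \le v_i(A_i\setminus\{j\})$ and $w_k(A_k) \ge (1+\eps)v_k(A_k)$ gives $v_k(A_k) \le \frac{1}{(1+\eps)^2}v_i(A_i\setminus\{j\})$; since $(1+\eps)^2 \le 1+3\eps$ for $\eps \le 1$ (the regime of interest) and $v_i(A_i\setminus\{j\}) \le 0$, this yields $\frac{1}{1+3\eps}v_i(A_i\setminus\{j\}) \ge v_k(A_k)$, which is exactly $3\eps$-\EQ{1} for $\I$.

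For the exact statement I would exploit integrality. The quantities $v_i(A_i\setminus\{j\})$ and $v_k(A_k)$ are integers lying in $[-m|v_{\min}|,0]$. Rewriting $3\eps$-\EQ{1} as $v_i(A_i\setminus\{j\}) \ge (1+3\eps)v_k(A_k) > v_k(A_k) - 3\eps\,m|v_{\min}|$, any $\eps$ small enough that $3\eps\,m|v_{\min}| < 1$ forces $v_i(A_i\setminus\{j\}) \ge v_k(A_k)$ by integrality, i.e. exact \EQ{1}; a symmetric rounding-gap argument turns $\eps$-\PO{} into \PO{}. These are precisely the contents of \Cref{lem:Bound_On_eps_for_Exact_EQone,lem:Bound_On_Eps_For_Exact_PO}, and the choice $\eps \le \frac{1}{6m|v_{\min}|^3}$ is engineered to dominate the worst-case integer gap in both simultaneously (indeed it already gives $3\eps\,m|v_{\min}| \le \frac{1}{2|v_{\min}|^2} < 1$).

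The step I expect to be the main obstacle is the \EQ{1} translation. Because the rounding perturbs both the happier agent's residual bundle $A_i\setminus\{j\}$ and the less-happy agent's bundle $A_k$, the two multiplicative errors have to be shown to compose into a single clean factor $(1+\eps)^2$ and then be absorbed into the advertised slack $3\eps$, all while keeping the direction of every inequality correct under negative valuations; getting the order of the inequalities right (in particular the two uses of $w \le v$ versus $w \ge (1+\eps)v$) is where the bookkeeping is most delicate, and it is also what pins down why $3\eps$ rather than $2\eps$ is the natural bound.
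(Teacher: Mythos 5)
Your first three steps are correct and are essentially the paper's own route: extracting $\eps$-\EQ{1} and \fPO{} on $\I'$ from the termination test and \MBB{}-consistency is \Cref{lem:Correctness_ALG_EQ1+fPO_rounded_instance}; the contrapositive transfer of \fPO{} on $\I'$ to $\eps$-\PO{} on $\I$ is \Cref{lem:fPO_Rounded_eps_PO_Original}; the $(1+\eps)^2 \le 1+3\eps$ chaining with the sign bookkeeping you describe is \Cref{lem:epsEQ1_Rounded_3epsEQ1_Original}; and the integrality collapse of $3\eps$-\EQ{1} to exact \EQ{1} is \Cref{lem:Bound_On_eps_for_Exact_EQone}.

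The genuine gap is the final claim that ``a symmetric rounding-gap argument turns $\eps$-\PO{} into \PO{}.'' No such argument exists: $\eps$-\PO{} is strictly weaker than \PO{} and does not imply it for \emph{any} $\eps>0$, however small, even with integral valuations. The reason is that $\eps$-dominance demands that \emph{every} agent's disutility shrink by a multiplicative factor of $(1+\eps)$, whereas a Pareto improvement may leave some agent's strictly negative utility unchanged; such an agent violates $v_k(B_k) \ge \frac{1}{1+\eps}v_k(A_k)$ outright, since $\frac{1}{1+\eps}v_k(A_k) > v_k(A_k)$ when $v_k(A_k)<0$, and no integrality gap can rescue a per-agent multiplicative requirement from a zero additive improvement. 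Concretely, take three agents and three chores with $v_1=(-1,-10,-100)$, $v_2=(-10,-1,-100)$, $v_3=(-3,-3,-5)$, and $A = (\{c_2\},\{c_1\},\{c_3\})$ with utility profile $(-10,-10,-5)$: swapping $c_1$ and $c_2$ Pareto-dominates $A$, yet no allocation $\eps$-dominates $A$ for any $\eps>0$, because agent $3$ would have to improve strictly beyond $-5/(1+\eps)$, hence shed $c_3$, which then costs agent $1$ or $2$ a utility of $-100$. So $A$ is $\eps$-\PO{} for every $\eps$ but not \PO{}. This is exactly why the paper's \Cref{lem:Bound_On_Eps_For_Exact_PO} does \emph{not} argue from $\eps$-\PO{}: it uses the market structure of the returned allocation, namely that $A$ is \MBB{}-consistent with respect to the terminal prices of $\I'$ and that the terminal \MBB{} ratios obey $\beta_k \ge -|w_{\min}|^2$ (\Cref{lem:Bound_On_MBB_After_Termination}, which in turn rests on the Phase~3 analysis). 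Supposing $B$ Pareto-dominates $A$ in $\I$, integrality gives one agent a gain of at least $1$, and a spending computation over all chores yields $\p([m]) \le (1+\eps)\bigl(\frac{1}{\beta_i} + \p([m])\bigr)$, which together with the \MBB{}-ratio and price bounds forces $\frac{1}{m|v_{\min}|^3} \le 4\eps$, contradicting $\eps \le \frac{1}{6m|v_{\min}|^3}$. This market argument is the most substantial component of the correctness proof and the actual source of the cubic factor $|v_{\min}|^3$ in the threshold (the \EQ{1} collapse alone only needs $\eps \le \frac{1}{6m|v_{\min}|}$); your proposal is missing it entirely.
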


The proof of \Cref{lem:Correctness_ALG_EQ1+PO_original_instance} appears in \Cref{subsec:Proof_Correctness_ALG_EQ1+PO_original_instance}.

Notice that the running time guarantee in \Cref{lem:RunningTime_ALG_EQ1+PO} is stated in terms of \emph{time steps}. A time step refers to a single iteration of Phase 1, Phase 2, or Phase 3. Since each individual iteration requires polynomial time, it suffices to analyze the running time of the algorithm in terms of the \emph{number} of iterations of the three phases.\footnote{Indeed, an iteration of Phase 1 involves assigning each chore to the agent with the highest valuation and setting its price. An iteration of Phase 2 involves the construction of the reachability set (say via breadth-first or depth-first search), followed by performing a level-wise search for an $\eps$-path-violator, followed by performing a swap operation. An iteration of Phase 3 involves scanning the set of reachable chores and setting an appropriate value of the price-drop factor $\Delta$. All of these operations can be carried out in $\O(\poly(m,n))$ time.} We will use the terms \emph{step}, \emph{time step}, and \emph{iteration} interchangeably.

\renewcommand{\floatpagefraction}{.95}
\begin{algorithm}
 \DontPrintSemicolon
 \linespread{1.2}\selectfont
 \KwIn{An $\eps$-rounded instance $\I' = \langle [n],[m],\W \rangle$.}
 \KwOut{An integral allocation $A$.}
 \BlankLine
 \Comment{Phase 1: Initialization}
 \BlankLine
 \tikzmk{A}
 $A \leftarrow $ a utilitarian welfare-maximizing allocation\;
 \nonl (assign chore $j \in [m]$ to agent $i$ if $i \in \arg\max_{k \in [n]} w_{k,j}$)\label{algline:Assignment_Phase1}\;
 $\p \leftarrow $ For every chore $j \in [m]$, set $p_j = |w_{i,j}|$ if $j \in A_i$\label{algline:Prices_Phase1}\;
 \lIf{$A$ is $\eps$-\EQ{1} for $\I'$\label{algline:TerminatePhase1}}{\KwRet{$A$}}
 \nonl \tikzmk{B}
 \boxit{mygray}
 \Comment{Phase 2: Remove \EQ{1} violations among the reachable agents}
 \BlankLine
 \oldnl \tikzmk{A}
 $i \leftarrow $ reference agent in $A$ \Comment*[r]{tiebreak lexicographically}\label{algline:Refresh_ReferenceAgent}
 $\R_i \leftarrow $ Reachability set of $i$ under $(A,\p)$\;
 $\ell = 1$ \Comment*[r]{initialize the level}\label{algline:InitializeLevel_Phase2}
 \While{$\R_i^{\ell}$ is non-empty and $A$ is not $\eps$-\EQ{1}\label{algline:WhileLoop_Phase2}}{
 		\uIf{$h \in \R_i^{\ell}$ is an $\eps$-path-violator along the alternating path $P = (i,j_1,h_1,\dots,j_{\ell - 1},h_{\ell - 1},j,h)$\label{algline:IfCondition_Phase2}}{$A_h \leftarrow A_h \setminus \{j\}$ and $A_{h_{\ell-1}} \leftarrow A_{h_{\ell-1}} \cup \{j\}$ \Comment*[r]{swap $j$}\label{algline:Swap_Phase2}
 			Repeat Phase 2 starting from Line~\ref{algline:Refresh_ReferenceAgent}\label{algline:IdentityChange_Phase2}
 		}\Else{$\ell \leftarrow \ell + 1$ \Comment*[r]{Move to the next level}}
 }\label{algline:MoveToNextLevel_Phase2}
  \lIf{$A$ is $\eps$-\EQ{1} for $\I'$\label{algline:TerminatePhase2}}{\KwRet{$A$}}
 \nonl \tikzmk{B}
 \boxit{mygray}
 \Comment{Phase 3: Price-drop}
 \oldnl \tikzmk{A} 
 $\Delta \leftarrow \min\limits_{h \in \R_i, \, j \in [m] \setminus A_{\R_i}} \frac{w_{h,j}/p_j}{\beta_h}$, where $\beta_h$ is the \MBB{} ratio of $h$ (in $\I'$) and $A_{\R_i} \coloneqq \cup_{k \in \R_i} A_k$ is the set of reachable chores
 \label{algline:Start_Of_Phase_3}
 \BlankLine
 \Comment{$\Delta $ is the smallest price-drop factor that makes a new agent reachable}
 \BlankLine
 \ForEach{chore $j \in A_{\R_i}$}{
 	$p_j \leftarrow p_j / \Delta$\Comment*[r]{uniformly lower the prices of reachable chores}
 \label{algline:PriceDrop_Phase3}}
 Repeat Phase 2 starting from Line~\ref{algline:Refresh_ReferenceAgent}\label{algline:GoBackToPhase2_Phase3}\;
 \nonl \tikzmk{B}
 \boxit{mygray}
 \caption{\AlgEQonePO{}}
 \label{alg:EQ1+PO}
\end{algorithm}

We are now ready to prove \Cref{thm:EQ1+PO_pseudopolynomial}.

\EQonePOPseudopolynomial*

\begin{proof}
Fix $\eps = \frac{1}{6 m |v_{\min}|^3}$. Given a chores instance $\I$, its $\eps$-rounded version $\I'$ can be constructed in $\O( \poly(m,n,\ln |v_{\min}|) )$ time. We run the algorithm \AlgEQonePO{} on the input $\I'$. From \Cref{lem:RunningTime_ALG_EQ1+PO}, we know that the algorithm terminates in $\O(\poly(m,n,\ln |v_{\min}|,\nicefrac{1}{\eps}))$ time. \Cref{lem:Correctness_ALG_EQ1+PO_original_instance} implies that $A$ is \EQ{1} and \PO{} for $\I$.
\end{proof}

\subsection{Proof of Lemma~\ref{lem:RunningTime_ALG_EQ1+PO}}
\label{subsec:Proof_RunningTime_ALG_EQ1+PO}

Recall the statement of \Cref{lem:RunningTime_ALG_EQ1+PO}.
\RunningTimeALGEQonePO*
\begin{proof}
The proof of \Cref{lem:RunningTime_ALG_EQ1+PO} follows immediately from \Cref{lem:RunningTime_Phase2,lem:RunningTime_Phase3}, which are stated below.
\end{proof}

\begin{restatable}{lemma}{RunningTimePhaseTwo}
 \label{lem:RunningTime_Phase2}
 There can be at most $\O( \poly(m,n,\nicefrac{1}{\eps}) \ln m |v_{\min}| )$ consecutive iterations of Phase 2 before a Phase 3 step occurs.
\end{restatable}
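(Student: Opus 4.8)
The goal is to bound the number of consecutive Phase 2 iterations (each performing one \MBB{}-consistent swap) that can occur between two Phase 3 price-drop steps. My plan is to argue that each such swap makes measurable, monotone progress in a potential function that cannot improve too many times before either the allocation becomes $\eps$-\EQ{1} (termination) or the reachability set is exhausted (forcing a Phase 3 step). First I would fix the reference agent $i$ and observe that, within a run of Phase 2 steps \emph{without} an intervening Phase 3 step, the prices $\p$ are unchanged, so the \MBB{} ratios and hence the \MBB{}-allocation graph change only through the reassignment of chores. Each swap along an alternating path moves a single chore $j$ from a (path-)violator $h$ up one level toward the reference agent; since the chore lies in the \MBB{} set of both endpoints, \MBB{}-consistency, \fPO{}, and the equilibrium structure are preserved (this is exactly what Line~\ref{algline:Swap_Phase2} guarantees).

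The key step is to identify the right progress measure. The natural candidate is a potential based on the spending (equivalently, the price of each agent's bundle) $p(A_k) = \sum_{j \in A_k} p_j$, or on the utility profile, aggregated so that each swap strictly increases it by a quantized amount. Because the instance fed to the algorithm is $\eps$-rounded, every valuation is $0$ or $-(1+\eps)^t$, so after a swap the reference agent's utility (and the swapped agent's spending) changes by a factor that is a power of $(1+\eps)$; this quantization is what converts ``strict improvement'' into ``improvement by at least one unit of $(1+\eps)$,'' yielding a finite bound. I would show that each Phase 2 swap increases the total spending of the reference agent (or decreases a suitable lexicographic potential on the spending vector ordered by level), and that this potential lives in a range of size controlled by $\ln(m|v_{\min}|)/\ln(1+\eps)$. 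Since $\ln(1+\eps) = \Theta(\eps)$ for small $\eps$, the number of distinct values the potential can take is $\O(\poly(m,n,\nicefrac{1}{\eps}) \ln(m|v_{\min}|))$, which is the claimed bound.

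The main obstacle I anticipate is handling the restarts triggered at Line~\ref{algline:IdentityChange_Phase2}, where the reference agent changes identity whenever the swapped agent $h_{\ell-1}$ overtakes $i$ as the highest-utility agent. A swap can both improve the reference agent's situation \emph{and} change who the reference agent is, so a naive potential keyed to a fixed agent need not be monotone across the whole run. The fix is to design the potential to be \emph{independent of the identity} of the reference agent---for instance, a symmetric function of the sorted utility profile, such as the maximum utility $\max_k v_k(A_k)$ together with a secondary count of how many chores separate violators from the reference level---so that each swap provably decreases it regardless of whether a relabeling occurs. I would verify that a swap never undoes prior progress: transferring a chore onto the path strictly raises the utility of the receiving (lower-level) agent without lowering the reference utility below its previous value, and the $\eps$-rounding ensures this raise is bounded below by a multiplicative $(1+\eps)$ factor on a bounded quantity. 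Combining the per-step quantized progress with the bounded range of the potential yields the stated $\O(\poly(m,n,\nicefrac{1}{\eps})\ln m|v_{\min}|)$ bound on consecutive Phase 2 iterations.
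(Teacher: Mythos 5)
Your high-level strategy (quantized progress in a bounded-range potential) is in the spirit of the paper's argument, but two of your specific claims fail, and they are exactly where the work lies. First, your proposed identity-independent potential does not make progress at every swap: the swap at Line~\ref{algline:Swap_Phase2} moves a chore from an $\eps$-path-violator at level $\ell$ to the agent at level $\ell-1$, so whenever $\ell \geq 2$ the reference agent's bundle---and hence the maximum utility $\max_k w_k(A_k)$ and the reference agent's spending---is untouched. (Also, the receiving agent's utility strictly \emph{drops} when it receives a chore, not rises as you write; it is the violator who gains.) Utilities of intermediate agents can oscillate across consecutive swaps, so no symmetric function of the utility profile obviously decreases per swap; your ``secondary count of how many chores separate violators from the reference level'' is a placeholder for precisely the missing argument. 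The paper resolves this with a two-level decomposition: a purely combinatorial lemma (\Cref{lem:Bound_On_Consecutive_Swaps}, imported from \citep[Lemma 13]{BKV18Finding}) bounds by $\O(\poly(m,n))$ the number of consecutive swaps before the reference agent's identity changes or Phase 3 occurs, and only the number of \emph{identity changes} is controlled by a utility argument (\Cref{lem:Bound_On_Identity_Changes}).

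Second, you misidentify the source of the multiplicative $(1+\eps)$ progress. It is not the $\eps$-rounding of the valuations: utilities are \emph{sums} of powers of $(1+\eps)$ and hence are not quantized on a $(1+\eps)$-geometric grid, so ``the utility changed, therefore it changed by a factor of $(1+\eps)$'' is invalid. The additive quantization you do get from rounding ($|w_{i,j}| \geq 1$) would only yield a bound pseudo-polynomial in $|v_{\min}|$, not the claimed $\ln |v_{\min}|$. The paper instead extracts the factor $(1+\eps)$ from the slack built into the $\eps$-path-violator condition: an agent loses a chore only when, even after the removal, its utility is below $(1+\eps)$ times the reference utility, which is itself non-increasing over time (\Cref{lem:Reference_Utility_Nonincreasing}). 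Consequently, whenever an agent returns as the reference agent, either its bundle strictly grew (which can happen at most $m$ times per agent) or its utility fell by a multiplicative factor of $(1+\eps)$ (\Cref{lem:Lower_Bound_On_Increase_In_Utility}); since utilities are bounded below by $m w_{\min}$, this caps the number of identity changes at roughly $mn \log_{1+\eps}(m|w_{\min}|)$, and combining with the combinatorial swap bound gives the lemma. Without these two ingredients---the per-interval combinatorial bound and the violator-slack argument---your outline cannot be completed as stated.
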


\begin{restatable}{lemma}{RunningTimePhaseThree}
 \label{lem:RunningTime_Phase3}
 There can be at most $\O( \poly(n,\nicefrac{1}{\eps})  \ln |v_{\min}| )$ Phase 3 steps during any execution of \AlgEQonePO{}.
\end{restatable}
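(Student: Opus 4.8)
The plan is to bound the number of Phase~3 steps by combining three ingredients: an integrality invariant that forces every price drop to be \emph{substantial}, a reachability argument that limits the number of \emph{consecutive} Phase~3 steps, and a global monotone potential whose range is controlled by $|v_{\min}|$.

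First I would establish the invariant that throughout the execution every price $p_j$ is either $0$ or an integral power of $(1+\eps)$. This holds after Phase~1 (there $p_j = |w_{i,j}|$ and $\W$ is $\eps$-rounded), is preserved by Phase~2 (which never touches prices), and is preserved by Phase~3, since the drop factor $\Delta = \min_{h \in \R_i,\, j \notin A_{\R_i}} \frac{w_{h,j}/p_j}{\beta_h}$ is a ratio of powers of $(1+\eps)$. Because the minimising chore $j$ is \emph{not} \MBB{} for $h$ before the step, we have $w_{h,j}/p_j < \beta_h$ strictly, so $\Delta \geq 1+\eps$. Consequently each Phase~3 step lowers the price of every reachable chore---in particular every chore of the reference agent's bundle $A_i \subseteq A_{\R_i}$---by a factor of at least $1+\eps$; equivalently, since the allocation (hence $v_i(A_i)$) is untouched in Phase~3, the reference agent's \MBB{} ratio magnitude $|\alpha_i| = |v_i(A_i)|/p(A_i)$ grows by a factor of at least $1+\eps$. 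This is the source of the $\tfrac1\eps\ln|v_{\min}|$ factor. I would also record that prices are \emph{globally non-increasing}, as only Phase~3 ever changes them and always downward.

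Second I would analyse consecutive Phase~3 steps. Phase~3 lowers prices exactly until a chore owned by a currently unreachable agent enters the \MBB{} set of a reachable agent (Line~\ref{algline:Start_Of_Phase_3}), so each Phase~3 step enlarges the reachability set $\R_i$ of the current reference agent by at least one new agent. As long as the reference agent is unchanged and no Phase~2 swap occurs, $\R_i$ only grows, so after at most $n-1$ Phase~3 steps every agent is reachable; at that point, if the allocation is still not $\eps$-\EQ{1}, the reachable $\eps$-violator is found and a swap is performed, and otherwise the algorithm terminates. Hence Phase~3 steps come in \emph{rounds} of at most $n-1$ steps, each round separated from the next by at least one Phase~2 swap (or by termination), contributing the $\poly(n)$ factor.

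Third---and this is the crux---I would bound the number of rounds by a global potential of bounded multiplicative range. The natural choice is the reference agent's spending $p(A_i)$ (equivalently $|\alpha_i|$): it lies in $[\,p_{\min},\, p([m])\,]$, where $p([m]) \le m(1+\eps)|v_{\min}|$ after Phase~1 and is non-increasing, while each Phase~3 step multiplies it down by at least $1+\eps$. The hard part will be turning this per-step $(1+\eps)$ progress into a \emph{global} count, because the reference agent changes whenever a Phase~2 swap creates a new happiest agent, so the single-agent quantity $|\alpha_i|$ is tied to a fixed $i$ and can appear to ``reset.'' The clean fix is to phrase the potential in terms of the globally non-increasing price vector rather than the identity of the reference agent: using \MBB{}-consistency together with $\fPO{}$ (\Cref{prop:FirstWelfareTheorem}) to relate prices to the bounded utilities, one shows the relevant prices cannot drop below a $\poly(m,|v_{\min}|)^{-1}$ threshold, so the total number of $(1+\eps)$-factor drops any reachable bundle can absorb is $\O(\tfrac1\eps\ln(m|v_{\min}|))$. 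Establishing this lower bound on prices, and handling the degenerate case of a reference agent with an empty bundle (where $p(A_i)=0$ and the potential must instead track the smallest reachable price), are the two technical points requiring care; once they are in place, multiplying the round count by the $n-1$ per-round bound gives the stated $\O(\poly(n,\tfrac1\eps)\ln|v_{\min}|)$ bound, and feeding it together with \Cref{lem:RunningTime_Phase2} into \Cref{lem:RunningTime_ALG_EQ1+PO} completes the running-time analysis.
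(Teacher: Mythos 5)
Your first two ingredients are sound: the powers-of-$(1+\eps)$ invariant does force $\Delta \geq 1+\eps$ (the minimizing chore is owned by an unreachable agent, hence cannot be \MBB{} for any reachable agent, so the ratio is strictly greater than $1$ and integrality does the rest), and each price-drop does make a new agent reachable, so at most $n-1$ swap-free Phase 3 steps occur in a row. Both observations are consistent with the paper, which, however, dispenses with your ``rounds'' structure entirely and runs a single potential $\Phi^t = \sum_{i \in [n]} \log_{1+\eps}|\beta_i^t|$ over all agents: $\Phi$ starts at $0$, increases by at least $1$ per Phase 3 step, and never exceeds $n\log_{1+\eps}|w_{\min}|$, which immediately gives the lemma. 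You also correctly identified that everything hinges on a floor for prices (equivalently, a ceiling on the magnitudes of the \MBB{} ratios).

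The gap is in how you propose to obtain that floor. \MBB{}-consistency, \fPO{}, and bounded utilities cannot deliver it, because those conditions are insensitive to pushing prices down. Concretely, \MBB{}-consistency says $w_i(A_i) = \beta_i \cdot \p(A_i)$; when Phase 3 divides all reachable prices by $\Delta$, the reachable agents' \MBB{} ratios are simply multiplied by $\Delta$, the allocation and all utilities are untouched, and the outcome is again an \MBB{}-consistent equilibrium (hence \fPO{} by \Cref{prop:FirstWelfareTheorem}) at the lower prices. So bounded utilities produce no contradiction however far prices fall, and the $\poly(m,|v_{\min}|)^{-1}$ threshold you assert has no proof along this route. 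What actually anchors the prices in the paper is a combinatorial persistence argument about violators, which your proposal never touches: at every Phase 3 step the allocation is not $\eps$-\EQ{1}, so an $\eps$-violator exists; violators are never reachable at Phase 3 steps (\Cref{lem:Violators_Are_Not_Reachable}); moreover the set of $\eps$-violators never grows (\Cref{lem:Set_Of_Violators_Cannot_Grow}) and a violator's bundle never grows (\Cref{lem:Allocation_Of_Violators_Cannot_Grow}) --- proving these two requires a careful analysis of Phase 2 swaps. Consequently some violator $k$ and chore $j \in A_k$ have remained unreachable since the very first price-drop, so $p_j$ still equals its initial value $|w_{k,j}| \geq 1$ and $\beta_k = -1$, and every agent's \MBB{} ratio satisfies $\beta_i \geq w_{i,j}/|w_{k,j}| \geq w_{\min}$ (\Cref{lem:MBB_Lower_Bound_Violator_Good,cor:MBB_Lower_Bound_Implication}). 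Note also that you cannot sidestep this by anchoring to ``unreachable chores at their original prices'': a chore that is unreachable at the current step may have been reachable, and had its price dropped, in an earlier round, so one genuinely needs a chore that was \emph{never} reachable, and that is exactly what the violator lemmas supply. Without this machinery neither your round count nor a per-chore drop count is bounded, so the proposal as it stands does not prove the lemma.
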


The proofs of \Cref{lem:RunningTime_Phase2,lem:RunningTime_Phase3} are provided in \Cref{subsec:Proof_RunningTime_Phase2,subsec:Proof_RunningTime_Phase3}, respectively.

\subsection{Proof of Lemma~\ref{lem:RunningTime_Phase2}}
\label{subsec:Proof_RunningTime_Phase2}

The proof of \Cref{lem:RunningTime_Phase2} relies on several intermediate results (\Cref{lem:Bound_On_Consecutive_Swaps,lem:Reference_Utility_Nonincreasing,lem:Lower_Bound_On_Increase_In_Utility,lem:Bound_On_Identity_Changes}) that are stated below.

\begin{restatable}{lemma}{BoundOnConsecutiveSwaps}
 \label{lem:Bound_On_Consecutive_Swaps}
 There can be at most $\O( \poly(m,n) )$ consecutive swap operations in Phase 2 before either the identity of the reference agent changes or a Phase 3 step occurs.
\end{restatable}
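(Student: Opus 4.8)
The plan is to exhibit a nonnegative, integer-valued potential $\Phi$ that is bounded above by $\O(\poly(m,n))$ and that strictly decreases with every swap, for as long as the reference agent is unchanged and no price-drop (Phase~3) step intervenes. Since $\Phi$ cannot decrease more times than its initial value, such a bound on $\Phi$ immediately bounds the number of consecutive swaps. The natural candidate is the total \emph{owner-level} of the chores,
\[
\Phi \;=\; \sum_{j \in [m]} \level(\mathrm{owner}(j)),
\]
where the levels are taken in the reachability set $\R_i$ of the (fixed) reference agent $i$ and are capped at $n$, so that $0 \le \Phi \le nm$.

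The crucial structural input comes from the observation that \emph{prices are frozen during a run of Phase~2 swaps}. Consequently every agent's $\MBB$ set, and hence every $\MBB$ edge of the $\MBB$-allocation graph, is identical before and after a swap; the swap alters exactly one \emph{allocation} edge, replacing $(h,j)$ by $(h_{\ell-1},j)$ when chore $j$ is moved from the violator $h$ (at level $\ell$) to its predecessor $h_{\ell-1}$ (at level $\ell-1$). Because the algorithm always resolves the \emph{lowest}-level $\eps$-path-violator along a \emph{shortest} alternating path, I would first prove the following monotonicity claim: after the swap, the level of every agent other than $h$ is non-increasing, and the swapped chore $j$ comes to rest at an owner of strictly smaller level. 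The proof is a minimal-counterexample argument: if some agent $g \ne h$ had its level increase, its old shortest path must traverse the deleted edge $(h,j)$ and therefore pass through $h$ at a strictly smaller old level than $g$; by minimality $h$'s level has not increased, so one reroutes the prefix to $h$ and re-uses the (unchanged) $\MBB$ edges of the intact suffix to reach $g$ at no greater level, a contradiction.

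Granting this claim, the contribution of chore $j$ to $\Phi$ drops by at least one, and the contribution of every chore owned by an agent other than $h$ does not increase, so the only threat to a strict decrease is the term coming from the chores that $h$ retains, since $h$'s own level may rise. I expect this to be the main obstacle: the naive potential above can fail to decrease precisely when the violator $h$ loses its unique ``anchoring'' chore $j$ yet keeps other chores, so that its level jumps upward. The plan to overcome this is to refine $\Phi$ so that any rise in $h$'s level is charged against the descent of $j$---for instance, by bounding the number of swaps \emph{per level} and summing over the at most $n$ levels, or by replacing raw levels with a reachability-weighted variant that is insensitive to an agent's becoming (temporarily) harder to reach. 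Once a potential with a genuine per-swap strict decrease is in hand, its $\O(\poly(m,n))$ range yields the stated bound; the separate facts that the reference agent is fixed and that prices are frozen are exactly what license the use of a purely combinatorial potential, independent of $\eps$ and of $|v_{\min}|$, matching the $\poly(m,n)$ form of the claim.
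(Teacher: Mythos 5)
Your overall plan (a bounded potential built from owner-levels, exploiting that prices---and hence \MBB{} edges---are frozen during a run of Phase~2 swaps) is the right family of argument, but your key structural claim is stated in the \emph{wrong direction}, and it is false as stated. When chore $j$ moves from $h$ to $h_{\ell-1}$, the allocation edge $(h,j)$ is deleted, and that edge is precisely how $h$ is reached on any alternating path that uses $j$; consequently levels can only go \emph{up}, never down. Concretely: suppose $h$ owns $j$ and one other chore $c$ that lies in no other agent's \MBB{} set, and suppose $g$ is reachable only via an \MBB{} edge of $h$ into a chore owned by $g$. After the swap, both $h$ and $g$ become unreachable (their levels jump to $n$), so an agent $g \neq h$ has strictly increased level, contradicting your claim. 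Your minimal-counterexample argument breaks exactly at the step ``by minimality $h$'s level has not increased'': $h$ is excluded from the claim, so minimality tells you nothing about $h$, and in fact $h$'s level typically \emph{does} increase, which is why the rerouted prefix you need does not exist. The correct monotonicity is the opposite one: no agent's level can \emph{decrease}, because the only new edge is $(h_{\ell-1},j)$, and every agent $g$ with $j \in \MBB_g$ already had level at least $\ell-1$ (otherwise $h$'s old level would have been less than $\ell$), so paths through the new edge reach $h_{\ell-1}$ at level at least $\ell$, no better than its existing level $\ell-1$; hence the new edge creates no shortcuts.

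For comparison: the paper does not prove this lemma at all---it states that the proof is identical to \citep[Lemma 13]{BKV18Finding} and omits it. The argument along those lines is close to your plan but uses the reversed monotonicity together with aggregate (not per-swap) accounting, which also resolves the obstacle you correctly identified but only waved at. With levels non-decreasing, the potential $\Phi = \sum_{j\in[m]} \level(\mathrm{owner}(j))$ (levels capped at $n$) drops by exactly $1$ at each swap from the transfer itself, while the total \emph{upward} drift of $\Phi$ caused by level recomputations over the entire run of consecutive swaps is bounded once and for all: each agent's level is non-decreasing and capped at $n$, so it rises by at most $n$ in total, and each unit rise adds at most $|A_k| \le m$ to $\Phi$, for a total drift of at most $mn^2$. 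Since $0 \le \Phi \le mn$ throughout, the number of swaps is at most $mn + mn^2 = \O(\poly(m,n))$. So no per-swap strict decrease is needed (none is available, as your own ``main obstacle'' remark suggests); what is needed is the correct direction of monotonicity plus this bounded-drift bookkeeping. Your two proposed patches (per-level counting, a ``reachability-weighted variant'') are not developed and, as stated, do not address the real failure mode, namely that after a single swap arbitrarily many agents---not just $h$---can move up arbitrarily far in level.
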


The proof of \Cref{lem:Bound_On_Consecutive_Swaps} is identical to \citep[Lemma 13]{BKV18Finding} and is therefore omitted.

Throughout, we will use the phrase \emph{at time step $t$} to refer to the state of the algorithm at the beginning of the time step $t$. In addition, we will use $i_t$ and $A^t \coloneqq (A^t_1,\dots,A^t_n)$ to denote the reference agent and the allocation maintained by the algorithm at the beginning of time step $t$, respectively. Thus, for instance, the utility of the reference agent at time step $t$ is $w_{i_t}(A^t_{i_t})$.

\begin{restatable}{lemma}{ReferenceUtilityNonincreasing}
 \label{lem:Reference_Utility_Nonincreasing}
 The utility of the reference agent cannot increase with time. That is, for any time step $t$, 
\begin{align*}
w_{i_t}(A^t_{i_t}) \geq w_{i_{t+1}}(A^{t+1}_{i_{t+1}}).
\end{align*}
\end{restatable}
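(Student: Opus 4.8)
The plan is to recast the lemma as a statement about the \emph{maximum} utility present in the allocation. Because \AlgEQonePO{} recomputes the reference agent at the start of every Phase~2 iteration (Line~\ref{algline:Refresh_ReferenceAgent}), and the reference agent is by definition one attaining $\max_{k \in [n]} w_k(A_k)$, at every time step $t$ we have $w_{i_t}(A^t_{i_t}) = \max_{k \in [n]} w_k(A^t_k)$. Hence it suffices to show that this maximum is non-increasing across a single time step, i.e.\ $\max_{k} w_k(A^{t+1}_k) \leq \max_k w_k(A^t_k)$. Since a single step is either a Phase~3 price-drop or a single Phase~2 swap, I would argue the two cases separately.

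The Phase~3 case is immediate: a price-drop step (Lines~\ref{algline:Start_Of_Phase_3}--\ref{algline:PriceDrop_Phase3}) only rescales prices and leaves the allocation $A$ untouched, so every agent's utility---and in particular the maximum---is exactly preserved.

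The substantive case is a Phase~2 swap, which moves a single chore $j$ from the $\eps$-path-violator $h$ to its predecessor $h_{\ell-1}$ along the alternating path $P$ (Line~\ref{algline:Swap_Phase2}); note that when $\ell = 1$ this predecessor is the reference agent $i_t$ itself. Only the utilities of $h$ and $h_{\ell-1}$ change, and every other agent's utility stays at most the old maximum. The receiving agent $h_{\ell-1}$ acquires an additional, strictly negatively valued, chore, so its utility can only \emph{decrease} and thus remains no larger than the old maximum. The one agent whose utility increases is the violator $h$, whose new utility is $w_h(A^t_h \setminus \{j\})$, so the entire argument reduces to showing $w_h(A^t_h \setminus \{j\}) < \max_k w_k(A^t_k) = w_{i_t}(A^t_{i_t})$; this is the main obstacle, but I expect it to follow directly from the defining inequality of an $\eps$-path-violator, namely $\frac{1}{1+\eps}\, w_h(A^t_h \setminus \{j\}) < w_{i_t}(A^t_{i_t})$. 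Indeed, because all valuations are non-positive we have $w_h(A^t_h \setminus \{j\}) \leq \frac{1}{1+\eps}\, w_h(A^t_h \setminus \{j\})$ (dividing a non-positive number by $1+\eps > 1$ moves it toward zero), and chaining the two inequalities yields $w_h(A^t_h \setminus \{j\}) < w_{i_t}(A^t_{i_t})$. Thus after the swap no agent's utility exceeds the previous maximum, so $\max_k w_k(A^{t+1}_k) \leq \max_k w_k(A^t_k)$, which completes the reduction and hence establishes the lemma.
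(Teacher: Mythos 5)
Your proof is correct, and it is in fact more careful than the one in the paper. The paper's argument is two lines: the reference agent's utility can change only through a Phase~2 swap, and by construction the reference agent never loses a chore in a swap (it can only receive one, which, valuations being negative, can only lower its utility). That argument covers the case where the reference agent keeps the same identity across the step, but it is silent about the case $i_{t+1} \neq i_t$: after a swap the violator $h$ strictly gains utility, and one must rule out that $h$ ends up above the old reference utility. Your reformulation as ``the maximum utility $\max_k w_k(A^t_k)$ is non-increasing'' confronts this directly: the receiving agent $h_{\ell-1}$ only drops, all uninvolved agents are unchanged, and the $\eps$-path-violator inequality $\frac{1}{1+\eps}\, w_h(A^t_h \setminus \{j\}) < w_{i_t}(A^t_{i_t})$ from Line~\ref{algline:IfCondition_Phase2}, combined with non-positivity of utilities, pins $h$'s post-swap utility strictly below the old maximum. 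So where the paper's proof leaves the identity-change case implicit, yours closes it; what the paper's version buys is only brevity. One pedantic remark: a time step can also be a Phase~2 iteration in which no swap occurs (the level counter is merely incremented, or the algorithm terminates), but then the allocation, and hence every utility, is unchanged, so your case analysis covers it trivially.
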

\begin{proof}
The only way in which the utility of a reference agent can change is via a swap operation in Phase 2. By construction, a reference agent can never lose a chore during a swap operation (though it can possibly receive a chore). Therefore, the utility of a reference agent cannot increase.
\end{proof}

\begin{restatable}{lemma}{LowerBoundOnIncreaseInUtility}
 \label{lem:Lower_Bound_On_Increase_In_Utility}
 Let $i$ be a fixed agent. Consider any set of consecutive Phase 2 steps during the execution of \AlgEQonePO{}. Suppose that $i$ turns from a reference to a non-reference agent during time step $t$. Let $t' > t$ be the first time step after $t$ at which $i$ once again becomes a reference agent. Then, either $A_i^t$ is a strict subset of $A_i^{t'}$ or $w_i(A_i^{t'}) < (1+\eps) w_i(A_i^t)$.
\end{restatable}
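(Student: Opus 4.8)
The plan is to track agent $i$'s bundle and utility across the interval $[t,t']$, leaning on the fact (\Cref{lem:Reference_Utility_Nonincreasing}) that the reference agent's utility is non-increasing. The first step is a preliminary claim: the swap at step $t$ that costs $i$ its reference status must \emph{add} a chore to $i$. Indeed, in any Phase~2 swap only two agents' utilities change---the $\eps$-path-violator $h$, who sheds a chore and whose utility therefore rises, and the preceding agent on the alternating path, who receives a chore and whose utility therefore falls. By the $\eps$-path-violator condition, $h$'s post-swap utility satisfies $w_h(A_h^{t+1}) < (1+\eps)\,w_i(A_i^t) < w_i(A_i^t)$, the last inequality using strict negativity of valuations; hence $h$ cannot overtake $i$. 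If $i$'s own utility were unchanged by the swap, then $i$ would remain a maximum-utility agent, and since $i$ was selected by lexicographic tie-breaking at step $t$ it would remain the reference agent, contradicting the hypothesis that $i$ becomes non-reference. Since the reference agent is never itself a violator, $i$ does not lose a chore, so the only way its utility can change is by receiving the swapped chore; thus $A_i^{t+1} = A_i^t \cup \{j\} \supsetneq A_i^t$ for some $j \notin A_i^t$.

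Next I would split into two cases according to whether $i$ ever loses a chore during $(t,t')$. If $i$ never loses a chore in this interval, then its bundle is non-decreasing from step $t+1$ onward, so $A_i^{t'} \supseteq A_i^{t+1} \supsetneq A_i^t$, which is exactly the first alternative $A_i^t \subsetneq A_i^{t'}$. Otherwise, let $s \in (t,t')$ be the \emph{last} step at which $i$ loses a chore. Since only the violator sheds a chore in a swap, $i$ is the $\eps$-path-violator at step $s$; writing $A_i^{s+1} = A_i^s \setminus \{j_\ell\}$ for the chore it sheds, the violator condition gives $\tfrac{1}{1+\eps}\, w_i(A_i^{s+1}) < w_{i_s}(A_{i_s}^s)$. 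Because $i_s$ is the reference agent at step $s>t$ and reference utility is non-increasing (\Cref{lem:Reference_Utility_Nonincreasing}), we have $w_{i_s}(A_{i_s}^s) \le w_i(A_i^t)$, whence $w_i(A_i^{s+1}) < (1+\eps)\, w_i(A_i^t)$.

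To finish this case I would use that $s$ is the \emph{last} chore-loss before $t'$: between $s+1$ and $t'$ agent $i$ only gains chores (or is unchanged), and each gained chore has strictly negative value, so $w_i(A_i^{t'}) \le w_i(A_i^{s+1})$. Chaining the two inequalities yields $w_i(A_i^{t'}) \le w_i(A_i^{s+1}) < (1+\eps)\, w_i(A_i^t)$, the second alternative.

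I expect the main obstacle to be the preliminary claim that $i$ must acquire a chore at step $t$, as this is the single point where the mechanics of the swap (which two agents move and in which direction), the $\eps$-path-violator inequality, strict negativity of valuations, and lexicographic tie-breaking must be combined to preclude $i$ from simply remaining the reference agent. The remainder is clean bookkeeping: \Cref{lem:Reference_Utility_Nonincreasing} transfers the violator bound measured at the intermediate reference agent $i_s$ back to $i$'s utility at time $t$, and monotonicity of utility under chore-gains carries it forward to time $t'$.
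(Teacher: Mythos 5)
Your proof is correct and takes essentially the same route as the paper's: reduce to the last time step $s$ at which $i$ loses a chore, apply the $\eps$-path-violator inequality there, pull the bound back to time $t$ via \Cref{lem:Reference_Utility_Nonincreasing}, and push it forward to $t'$ using that $i$ only gains (negatively valued) chores after $s$. The only difference is that you give a detailed justification of the opening claim---that $i$ can lose reference status only by receiving a chore---which the paper asserts without proof; your justification is sound, up to the cosmetic caveat that the step $(1+\eps)\,w_i(A_i^t) < w_i(A_i^t)$ should be a weak inequality when $A_i^t = \emptyset$, which is all the argument needs.
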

\begin{proof}
In order for a reference agent to turn into a non-reference agent, it must receive a chore during a swap operation. That is, agent $i$ must receive a chore at time $t$ and hence $A_i^t$ is a strict subset of $A_i^{t+1}$. If agent $i$ does not lose any chore between $t+1$ and $t'$, then the claim follows. Therefore, for the rest of the proof, we will assume that agent $i$ loses at least one chore between $t+1$ and $t'$.

Among all the time steps between $t+1$ and $t'$ at which agent $i$ loses a chore, let $\tau$ be the last one. Let $i_\tau$ be the reference agent at time step $\tau$. Since the utility of the reference agent is non-increasing with time (\Cref{lem:Reference_Utility_Nonincreasing}), we have that
\begin{align}
w_{i_\tau}(A_{i_\tau}^\tau) \leq w_i(A_i^t).
\label{eqn:LowerBoundOnIncreaseInUtility_1}
\end{align}
Let $c$ denote the chore that agent $i$ loses at time step $\tau$. An agent that loses a chore must be an $\eps$-path violator (with respect to an alternating path involving that chore). Therefore,
\begin{align}
w_i(A_i^\tau \setminus \{c\}) < (1+\eps) w_{i_\tau}(A_{i_\tau}^\tau).
\label{eqn:LowerBoundOnIncreaseInUtility_2}
\end{align}
Since $i$ does not lose any chore between $\tau$ and $t'$, we have
\begin{align}
w_i(A_i^{t'}) \leq w_i(A_i^{\tau+1}) = w_i(A_i^\tau \setminus \{c\}).
\label{eqn:LowerBoundOnIncreaseInUtility_3}
\end{align}
Combining \Cref{eqn:LowerBoundOnIncreaseInUtility_1,eqn:LowerBoundOnIncreaseInUtility_2,eqn:LowerBoundOnIncreaseInUtility_3} gives 
\begin{align*}
w_i(A_i^{t'}) < (1+\eps) w_i(A_i^t),
\end{align*}
 as desired.
\end{proof}

\begin{restatable}{lemma}{BoundOnIdentityChanges}
 \label{lem:Bound_On_Identity_Changes}
 There can be at most $\O( \poly(m,n,\nicefrac{1}{\eps}) \ln m |v_{\min}| )$ changes in the identity of the reference agent before a Phase 3 step occurs.
\end{restatable}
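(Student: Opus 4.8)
The plan is to bound the total number of identity changes by $\sum_{i \in [n]}(\text{number of times } i \text{ becomes the reference agent})$ and to bound each agent's count separately. Fix an agent $i$ and work within the current run of consecutive Phase~2 steps (i.e.\ before the next Phase~3 step). Let $p_1 < p_2 < \cdots$ be the steps at which $i$ becomes the reference agent, and let $d_k$ be the step at which $i$ is subsequently demoted, so that $i$ is the reference agent throughout $[p_k,d_k]$. Writing $u_k \coloneqq w_i(A_i^{d_k})$ for the utility of $i$ at the moment of its $k$-th demotion, the whole argument rests on understanding how $u_k$ and the demotion bundle $A_i^{d_k}$ evolve with $k$.

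First I would record two monotonicity facts. Since a reference agent is never an $\eps$-path-violator and hence never \emph{gives} away a chore, $i$'s bundle can only grow while $i$ is the reference agent, so $A_i^{p_k} \subseteq A_i^{d_k}$. Combining this with \Cref{lem:Lower_Bound_On_Increase_In_Utility} (applied with $t=d_k$, $t'=p_{k+1}$) yields a dichotomy for each transition $k \to k+1$: either (Case~S) $A_i^{d_k} \subsetneq A_i^{p_{k+1}} \subseteq A_i^{d_{k+1}}$, so the demotion bundle strictly grows, or (Case~G) $w_i(A_i^{p_{k+1}}) < (1+\eps)\,w_i(A_i^{d_k})$, so the demotion utility drops geometrically. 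Using that all rounded valuations are negative together with \Cref{lem:Reference_Utility_Nonincreasing} (which gives $u_{k+1} \le w_i(A_i^{p_{k+1}})$), one checks that $u_{k+1} < u_k$ in both cases; thus $(u_k)_k$ is strictly decreasing, and in Case~G one in fact gets $|u_{k+1}| > (1+\eps)|u_k|$. Moreover each $u_k$ lies in $\{0\}\cup[-W,-1]$ with $W = \O(m|v_{\min}|)$, and strict monotonicity confines the degenerate value $u_k=0$ (an empty-bundle reference agent) to at most one index.

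Then I would bound the two kinds of transitions separately. The Case~G transitions each multiply $|u_k|$ by a factor at least $(1+\eps)$; since $|u_k|$ never decreases and stays within $[1,W]$, there are at most $\log_{1+\eps}W = \O\!\big(\tfrac{1}{\eps}\ln(m|v_{\min}|)\big)$ of them. For the Case~S transitions, the key observation is that along any maximal run of consecutive Case~S transitions the demotion bundles form a strictly increasing chain $A_i^{d_k} \subsetneq A_i^{d_{k+1}} \subsetneq \cdots$ of subsets of $[m]$, so each such run has length at most $m$. As the Case~S runs are separated by Case~G transitions, the number of Case~S transitions is at most $m\cdot(\text{number of Case~G transitions}+1)$. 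Hence agent $i$ becomes the reference agent at most $\O\!\big(\tfrac{m}{\eps}\ln(m|v_{\min}|)\big)$ times, and summing over the $n$ agents gives the claimed bound $\O\!\big(\poly(m,n,\tfrac{1}{\eps})\ln(m|v_{\min}|)\big)$.

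I expect the main obstacle to be the set-growth case (Case~S): a naive accounting bounds it only by the utility range $W$, which is pseudopolynomial and too weak for the target. The decisive step that avoids this is the chain argument above---that consecutive set-growth transitions strictly enlarge the demotion bundle and are therefore capped at $m$ per run---which itself relies on the fact that the reference agent never loses a chore (so that $A_i^{p_k}\subseteq A_i^{d_k}$ and Case~S propagates to $A_i^{d_k}\subsetneq A_i^{d_{k+1}}$). The geometric decrease of $(u_k)_k$ then supplies the only logarithmic-in-$|v_{\min}|$ factor, and the $u_k=0$ degeneracy costs only an additive constant.
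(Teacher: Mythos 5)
Your proof is correct, and it is built from the same two ingredients as the paper's: the growth-or-drop dichotomy of \Cref{lem:Lower_Bound_On_Increase_In_Utility} and the fact that a reference agent never loses a chore (\Cref{lem:Reference_Utility_Nonincreasing}); in both arguments the factor $m$ comes from the bundle-size cap and the logarithmic factor from the range of utilities. What differs is the bookkeeping. The paper argues globally: by pigeonhole, every window of $mn$ consecutive identity changes contains about $m$ cycle-backs of a \emph{single} agent, so---since demotion bundles grow strictly across consecutive growth transitions and cannot exceed $m$ chores---at least one of those transitions must be a multiplicative drop, and the monotonicity of the \emph{global} reference utility (\Cref{lem:Reference_Utility_Nonincreasing}) lets drops incurred by different agents accumulate, yielding $mn \log_{1+\eps}(m|w_{\min}|)$. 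You instead account per agent: each agent's own demotion utility $u_k$ is monotone, so its Case-G transitions are bounded by $\log_{1+\eps} W$ without any appeal to cross-agent monotonicity, and your run/chain argument ($\#S \leq m(\#G+1)$) makes explicit the interleaving issue---a utility drop may shrink the bundle and reset the growth counter, which is exactly why growth transitions cannot simply be capped at $m$ overall---that the paper's single pigeonhole sentence glosses over. Both routes give $\O\bigl(\tfrac{mn}{\eps}\ln(m|v_{\min}|)\bigr)$, matching the claimed bound; yours is longer but more self-contained and more careful precisely where the paper is terse, while the paper's window argument is shorter and directly bounds identity changes over the whole execution at once.
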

\begin{proof}
From \Cref{lem:Lower_Bound_On_Increase_In_Utility}, we know that each time the algorithm cycles back to a some agent $i$ as the reference agent, either the allocation of agent $i$ grows strictly by at least one chore, or its utility decreases by at least a multiplicative factor of $(1+\eps)$. By pigeonhole principle, after every $n$ consecutive changes in the identity of the reference agent, the algorithm must cycle back to some agent as the reference. Along with the fact that the utility of the reference agent is non-increasing with time (\Cref{lem:Reference_Utility_Nonincreasing}), we get that after every $mn$ consecutive identity changes, the utility of the reference agent must decrease multiplicatively by a factor of $(1+\eps)$. Since there are $m$ chores overall, the utility of any agent can never be less than $m w_{\min}$. Hence, there can be at most $mn \log_{1+\eps} m |w_{\min}|$ changes in the identity of the reference agent during the execution of the algorithm. The stated bound now follows from $\eps$-roundedness and the fact that $\frac{1}{\ln (1+\eps)} \leq \frac{2}{\eps}$ for every $\eps \in (0,1)$.
\end{proof}

We are now ready to prove \Cref{lem:RunningTime_Phase2}.
\RunningTimePhaseTwo*

\begin{proof}
From \Cref{lem:Bound_On_Identity_Changes}, we know that there can be at most $\O( \poly(m,n,\nicefrac{1}{\eps}) \ln m |v_{\min}| )$ changes in the identity of the reference agent (in Phase 2) before a Phase 3 step occurs. Furthermore, \Cref{lem:Bound_On_Consecutive_Swaps} implies that there can be at most $\O( \poly(m,n) )$ swap operations between two consecutive identity changes or an identity change and a Phase 3 step. Combining these implications gives the desired bound. 
\end{proof}

\subsection{Proof of Lemma~\ref{lem:RunningTime_Phase3}}
\label{subsec:Proof_RunningTime_Phase3}

The proof of \Cref{lem:RunningTime_Phase3} relies on several intermediate results (\Cref{lem:Set_Of_Violators_Cannot_Grow,lem:Allocation_Of_Violators_Cannot_Grow,lem:Violators_Are_Not_Reachable,lem:MBB_Lower_Bound_Violator_Good,cor:MBB_Lower_Bound_Implication}) that are stated and proved below. It will be useful to define the set $E_t$ of all $\eps$-violators at time step $t$. That is,
$$E_t \coloneqq \{k \in [n] : w_k(A^t_k \setminus \{j\}) < (1+\eps) w_{i_t}(A^t_{i_t}) \, \forall j \in A^t_k \},$$
where $i_t$ is the reference agent at time step $t$.

Some of our proofs will require the following assumption:
\begin{assumption}
	At the end of Phase 1 of \AlgEQonePO{}, every agent is assigned at least one chore.
\label{assumption:Each_agent_gets_a_chore}
\end{assumption}
This assumption can be ensured via efficient preprocessing techniques similar to those used by \citet{BKV18Finding}. We refer the reader to Section B.1 of their paper for details.

\begin{restatable}{lemma}{SetOfViolatorsCannotGrow}
 \label{lem:Set_Of_Violators_Cannot_Grow}
 Let $t$ and $t'$ be two Phase 3 time steps such that $t < t'$. Then, $E_{t'} \subseteq E_t$.
\end{restatable}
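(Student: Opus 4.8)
The plan is to reduce the statement to two ingredients. First I would record a structural observation: the violator set $E_t$ is a function only of the current allocation and the identity of the reference agent, since no price appears in its definition; in particular a Phase~3 price drop, which leaves the allocation untouched, alters $E$ only through the Phase~2 swaps it subsequently enables. I would also use the Phase~3 \emph{entry condition}: the algorithm reaches Line~\ref{algline:Start_Of_Phase_3} only when the while-loop has scanned the whole reachability set $\R_{i_t}$ without finding an $\eps$-path-violator. Since any reachable $\eps$-violator is automatically an $\eps$-path-violator along every alternating path reaching it (its incoming path-chore is, like all its chores, ``removable but still bad''), it follows that at a Phase~3 step \emph{every} $\eps$-violator is unreachable from $i_t$.

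Now fix Phase~3 steps $t<t'$ and an agent $k\in E_{t'}$; the goal is $k\in E_t$. The two ingredients are: (A) the bundle of $k$ only shrank, i.e. $A^{t'}_k\subseteq A^t_k$; and (B) threshold monotonicity. For (B) I invoke \Cref{lem:Reference_Utility_Nonincreasing} to get $w_{i_{t'}}(A^{t'}_{i_{t'}})\le w_{i_t}(A^t_{i_t})$, hence $(1+\eps)w_{i_{t'}}(A^{t'}_{i_{t'}})\le(1+\eps)w_{i_t}(A^t_{i_t})$. Granting (A), the best single-chore-removal utility can only \emph{decrease} when chores are added, because all valuations are negative: writing $A^t_k=A^{t'}_k\cup R$ and comparing $w_k(A^t_k)-\min_{j\in A^t_k}w_{k,j}$ with $w_k(A^{t'}_k)-\min_{j\in A^{t'}_k}w_{k,j}$, a short case analysis on whether the most-costly chore of $A^t_k$ lies in $R$ or in $A^{t'}_k$ gives the left inequality below. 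Chaining,
\[
\max_{j\in A^t_k}w_k(A^t_k\setminus\{j\}) \;\le\; \max_{j\in A^{t'}_k}w_k(A^{t'}_k\setminus\{j\}) \;<\; (1+\eps)w_{i_{t'}}(A^{t'}_{i_{t'}}) \;\le\; (1+\eps)w_{i_t}(A^t_{i_t}),
\]
where the strict middle inequality is exactly the statement $k\in E_{t'}$. The outer two bounds then certify $k\in E_t$, as desired.

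The heart of the proof is ingredient~(A), and I expect it to be the main obstacle. The clean half is easy: in a Phase~2 swap along $P=(i,j_1,h_1,\dots,j_\ell,h)$ the unique recipient is the agent $h_{\ell-1}$ sitting one level below the violator $h$, and because the search proceeds level by level and swaps only at the lowest-level $\eps$-path-violator, every agent at a level below $\ell$—in particular $h_{\ell-1}$—was already certified \emph{not} to be an $\eps$-path-violator, hence (by the observation of the first paragraph) not an $\eps$-violator, at the moment of the swap. Thus no $\eps$-violator is ever a recipient, so a \emph{persistent} violator's bundle can only shrink. The delicate point is the temporal bookkeeping: I must rule out that $k$, while a violator at $t'$, transiently left $E$, received a chore, and relapsed before $t'$. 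Pinning this down requires the companion invariant that violators are never reachable (\Cref{lem:Violators_Are_Not_Reachable}) together with monotonicity of the reachable set across Phase~3 steps, which I would establish by a joint induction over time steps that simultaneously maintains (A) and the inclusion $E_{t'}\subseteq E_t$. It is precisely this interlocking of reachability, bundle-monotonicity, and the violator set that carries the real difficulty.
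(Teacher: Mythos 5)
Your reduction is sound as far as it goes: granting (A) (bundle shrinkage for agents in $E_{t'}$, which is \Cref{lem:Allocation_Of_Violators_Cannot_Grow} in the paper) and (B) (\Cref{lem:Reference_Utility_Nonincreasing}), your chain of inequalities does yield $E_{t'} \subseteq E_t$. But there is a genuine gap exactly where you flag it: ingredient (A) is never proved, and your sketch of it does not close. The observation that ``no $\eps$-violator is ever a recipient'' only controls agents that are violators \emph{at the moment of a swap}; it cannot rule out the scenario in which $k$ is a non-violator at the Phase 3 step $t$, receives a chore during the intervening Phase 2 block, thereby \emph{becomes} an $\eps$-violator, and remains one until $t'$. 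Your proposed fix---a joint induction over Phase 3 steps using \Cref{lem:Violators_Are_Not_Reachable} and monotonicity of the reachable set---does not address this, for two reasons. First, the problematic event happens strictly \emph{inside} a Phase 2 block, so an induction whose steps are Phase 3 instants cannot see it; second, the reachable set is not monotone in any usable sense (it is recomputed from scratch at every reference-agent change, and \Cref{lem:Violators_Are_Not_Reachable} is itself a statement only about Phase 3 instants). Note also that the paper's logical order is the reverse of yours: \Cref{lem:Allocation_Of_Violators_Cannot_Grow} is proved \emph{from} the argument of \Cref{lem:Set_Of_Violators_Cannot_Grow}, so deriving the latter from the former merely relocates the entire difficulty into (A).

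What the paper supplies, and what is missing from your proposal, is an ``immediate self-correction'' argument. Suppose a swap at time $\tau$ hands chore $c$ to agent $k$ at level $\ell$ and turns $k$ into an $\eps$-violator. Because the level-by-level search only performs a swap at level $\ell+1$ after certifying that no agent at levels $1,\dots,\ell$ is an $\eps$-path-violator, there must exist a chore $c'$ on an alternating path to $k$ with $(1+\eps)\, w_{i_\tau}(A^\tau_{i_\tau}) \leq w_k(A^\tau_k \setminus \{c'\})$. The swap of $c$ leaves that path intact, so $k$ is now the unique $\eps$-path-violator at level $\leq \ell$, and the very next swap (at time $\tau+1$) removes precisely $c'$ from $k$, producing the bundle $A^\tau_k \cup \{c\} \setminus \{c'\}$; removing $c$ from this bundle recovers $A^\tau_k \setminus \{c'\}$, whose utility meets the threshold above, so $k$ is instantly a non-violator again. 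This mechanism is what rules out a ``relapse'' persisting to $t'$; without it, neither (A) nor the lemma follows from the ingredients you assemble.
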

\begin{proof}
It suffices to consider consecutive Phase 3 steps $t$ and $t'$ such that all intermediate time steps $t+1, t+2, \dots, t'-1$ occur in Phase 2. Suppose, for contradiction, that there exists some agent $k \in E_{t'} \setminus E_t$. Observe that a non-$\eps$-violator cannot turn into an $\eps$-violator in Phase 3 as the allocation of the chores remains fixed during price-drop. Therefore, the only way in which $k$ can turn into an $\eps$-violator is via a swap operation in Phase 2. In the rest of the proof, we will argue that if there is a swap operation at time step $\tau$ (where $t < \tau < t'$) that turns $k$ into an $\eps$-violator, then there is a subsequent swap operation at time step $\tau + 1$ that turns it back into a non-$\eps$-violator. This will provide the desired contradiction.

Suppose that agent $k$ is at level $\ell$ in the reachability set when it receives a chore $c$ that turns it into an $\eps$-violator. Recall that a swap operation involves transferring a chore from an agent at a higher level $\ell + 1$ to one at a lower level $\ell$. Furthermore, a swap involving an agent at level $\ell + 1$ happens only when no agent in the levels $1,2,\dots,\ell$ is an $\eps$-path violator. Therefore, agent $k$ cannot be an $\eps$-path violator just before the time step $\tau$. In other words, there must exist a chore $c'$ on an alternating path from the reference agent $i_\tau$ to agent $k$ such that
\begin{align}
(1+\eps) w_{i_\tau}(A^\tau_{i_\tau}) \leq w_k(A^\tau_k \setminus \{c'\}).
\label{eqn:Set_Of_Violators_Cannot_Grow_1}
\end{align}
Since agent $k$ becomes an $\eps$-violator (and hence an $\eps$-path violator) after receiving the chore $c$, we have
\begin{align*}
w_k(A^\tau_k \cup \{c\} \setminus \{c'\}) & < (1+\eps) w_{i_{\tau+1}}(A^{\tau+1}_{i_{\tau+1}}) \\
& = (1+\eps) w_{i_\tau}(A^\tau_{i_\tau}),
\end{align*}
where the equality follows from the observation that neither the identity nor the allocation of the reference agent changes during the above swap. Note that the swap involving $c$ does not affect the alternating path to agent $k$ that includes the chore $c'$. This means that agent $k$ now becomes the \emph{only} $\eps$-path-violator at level $\ell$ or below. Therefore, in a subsequent swap operation at time step $\tau + 1$, the algorithm will take $c'$ away from agent $k$, resulting in a new bundle $A_k^{\tau+1} = A^\tau_k \cup \{c\} \setminus \{c'\}$. From \Cref{eqn:Set_Of_Violators_Cannot_Grow_1}, we get that agent $k$ is a non-$\eps$-violator up to the removal of the chore $c$, as desired.
\end{proof}

\begin{restatable}{lemma}{AllocationOfViolatorsCannotGrow}
 \label{lem:Allocation_Of_Violators_Cannot_Grow}
 Let $t$ and $t'$ be two Phase 3 time steps such that $t < t'$. Then, for any $k \in E_{t'}$, $A^{t'}_k \subseteq A^t_k$.
\end{restatable}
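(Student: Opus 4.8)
The plan is to fix two \emph{consecutive} Phase~3 steps $t < t'$ (the general case follows by transitivity of $\subseteq$: for non-consecutive Phase~3 steps, \Cref{lem:Set_Of_Violators_Cannot_Grow} gives $k \in E_{s}$ for every intervening Phase~3 step $s$, so the consecutive result chains) and to track the bundle of a single agent $k \in E_{t'}$ across the intervening Phase~2 swaps. By \Cref{lem:Set_Of_Violators_Cannot_Grow} we have $E_{t'} \subseteq E_t$, so $k$ is an $\eps$-violator both at step $t$ and at step $t'$. The target inclusion $A^{t'}_k \subseteq A^t_k$ will follow from two facts: (i) an $\eps$-violator never \emph{gains} a chore in a swap, so its bundle can only shrink while it remains a violator; and (ii) $k$ is in fact an $\eps$-violator at the start of \emph{every} time step in $[t,t']$, so its bundle is monotonically non-increasing over the whole window. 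Since Phase~3 steps leave the allocation untouched, chaining (i) across the intervening Phase~2 steps then gives the claim.

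For (i), the key structural observation is that the agent which \emph{receives} the swapped chore is never an $\eps$-violator at the moment of the swap. Recall that a swap transfers a chore from an $\eps$-path-violator $h \in \R_i^{\ell}$ to its predecessor $h_{\ell-1}$ along the alternating path $P$, and that the algorithm acts on the \emph{smallest} level $\ell$ containing an $\eps$-path-violator. The predecessor $h_{\ell-1}$ sits at a strictly smaller level than $h$ and is reached by the prefix of $P$, which ends in a chore of its own bundle; hence if $h_{\ell-1}$ were an $\eps$-violator it would itself be an $\eps$-path-violator at a level below $\ell$, contradicting the minimality of $\ell$ (for $\ell = 1$ the receiver is the reference agent, which is never a violator since removing a negatively valued chore only raises its utility). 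Consequently, if $k$ is an $\eps$-violator at the start of a swap step it is never the receiver, so it either relinquishes the swapped chore (when it is the path-violator) or is untouched; in both cases its bundle does not grow.

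The main work is (ii): ruling out that $k$ is transiently a non-violator somewhere inside $(t,t')$. Suppose not, and let $s$ be the last step in $(t,t')$ at which $k$ starts as a non-violator; since $k$ is a violator at $t$ and at $t'$, such an $s$ must be followed by $k$ turning into a violator. By fact (i), the only way $k$ can become a violator is by receiving a chore $c$ during the swap at $s$. Here I would reuse the mechanism from the proof of \Cref{lem:Set_Of_Violators_Cannot_Grow}: immediately after receiving $c$, agent $k$ becomes the unique $\eps$-path-violator at its level or below, and crucially the swap that created this violation leaves intact an alternating path to $k$ through some other chore $c'$ whose removal keeps $k$ above the $(1+\eps)$-threshold. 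Therefore $k$ remains reachable and is flagged as an $\eps$-path-violator, so the algorithm performs a corrective swap at step $s+1$ removing $c'$ and restoring $k$ to non-violator status at step $s+2$. This produces a later non-violator step, contradicting the maximality of $s$ unless $s+1 = t'$; but then the algorithm would have to enter Phase~3 at $t'$ precisely when a reachable $\eps$-path-violator (namely $k$) still exists, which is impossible since Phase~3 is triggered only after the level search exhausts all levels without finding a reachable $\eps$-path-violator. Hence no such $s$ exists, $k$ is an $\eps$-violator at the start of every step of $[t,t']$, and (i) yields $A^{t'}_k \subseteq A^t_k$.

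I expect the delicate point to be exactly this last paragraph: making rigorous that a \emph{freshly created} violator is always caught by the immediately following corrective swap rather than by a premature transition to Phase~3. This hinges on the reachability of $k$ through the untouched $c'$-path being preserved across the swap, which is precisely the invariant already extracted in the proof of \Cref{lem:Set_Of_Violators_Cannot_Grow}; care is needed because the reference agent and the reachability set are recomputed after every swap, so one must verify that the relevant alternating path survives this recomputation.
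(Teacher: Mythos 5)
Your proposal is correct and takes essentially the same route as the paper: both proofs combine the observation that the recipient of a swap is never an $\eps$-violator (by minimality of the swap level, with the reference agent handled separately) with the correction mechanism from the proof of \Cref{lem:Set_Of_Violators_Cannot_Grow}, which propagates non-violator status forward to the next Phase 3 step. The paper merely phrases the argument contrapositively---a chore $c \in A^{t'}_k \setminus A^t_k$ must have been received at some intermediate swap, so $k$ was a non-violator at that moment and hence cannot lie in $E_{t'}$---whereas you argue forward that $k$ remains a violator on all of $[t,t']$ and thus never gains a chore; the only blemish is that your appeal to ``fact (i)'' for the step ``$k$ can become a violator only by receiving a chore'' is a miscitation (what is actually needed is that the threshold is non-increasing, via \Cref{lem:Reference_Utility_Nonincreasing}, and that losing a chore preserves non-violator status), an implicit step that the paper's own proof of \Cref{lem:Set_Of_Violators_Cannot_Grow} glosses over in exactly the same way.
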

\begin{proof}(Sketch.)
Suppose, for contradiction, that there exists a chore $c \in A^{t'}_k \setminus A^t_k$. The only way in which agent $k$ could have acquired the chore $c$ is via a swap operation at time step $\tau$ for some $t < \tau < t'$. Thus, agent $k$ cannot be an $\eps$-path-violator just before the time step $\tau$, and therefore also cannot be an $\eps$-violator. By an argument similar to that in the proof of \Cref{lem:Set_Of_Violators_Cannot_Grow}, it follows that agent $k$ cannot be an $\eps$-violator at time step $t'$, giving us the desired contradiction.
\end{proof}

\begin{restatable}{lemma}{ViolatorsAreNotReachable}
 \label{lem:Violators_Are_Not_Reachable}
 For any Phase 3 time step $t$, $E_t \cap \R_{i_t} = \emptyset$.
\end{restatable}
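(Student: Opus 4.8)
The plan is to read the claim directly off the condition under which the algorithm enters Phase~3. First I would pin down \emph{why} Phase~3 is reached: the \texttt{while} loop of Phase~2 (Line~\ref{algline:WhileLoop_Phase2}) can terminate only when $\R_{i_t}^{\ell}$ becomes empty \emph{and} $A$ is not $\eps$-\EQ{1}---for had $A$ been $\eps$-\EQ{1}, the algorithm would have returned at Line~\ref{algline:TerminatePhase2} rather than proceeding to a price drop. Since the level counter $\ell$ is incremented only in the \texttt{else} branch, exiting with an empty $\R_{i_t}^{\ell}$ means that every non-empty level was scanned by the test on Line~\ref{algline:IfCondition_Phase2} and that \emph{no} reachable agent (at any level $\geq 1$) was found to be an $\eps$-path-violator along its breadth-first alternating path. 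This single structural fact is what the whole argument rests on.

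Next I would convert ``not an $\eps$-path-violator'' into ``not an $\eps$-violator'', i.e.\ into $k \notin E_t$. Fix any reachable agent $k \in \R_{i_t}$ at level $\ell \geq 1$ and let $P = (i_t, j_1, h_1, \dots, j_{\ell}, k)$ be its defining alternating path. The terminal edge is an allocation edge, so $j_{\ell} \in A^t_k$ is a bona fide chore in $k$'s bundle. Because $k$ is not an $\eps$-path-violator along $P$, the definition gives $\tfrac{1}{1+\eps}\, w_k(A^t_k \setminus \{j_{\ell}\}) \geq w_{i_t}(A^t_{i_t})$. This single chore $j_{\ell}$ therefore witnesses the failure of the $\eps$-violator condition, which demands the strict reverse inequality for \emph{every} chore of $A^t_k$; hence $k \notin E_t$.

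Finally I would dispose of the reference agent $i_t$ itself, the unique level-$0$ member of $\R_{i_t}$. A reference (highest-utility) agent can never be an $\eps$-violator: since valuations are non-positive and $1+\eps > 1$, removing a chore from $A^t_{i_t}$ cannot decrease its utility, so $\tfrac{1}{1+\eps}\, w_{i_t}(A^t_{i_t} \setminus \{j\}) \geq \tfrac{1}{1+\eps}\, w_{i_t}(A^t_{i_t}) \geq w_{i_t}(A^t_{i_t})$ for every $j \in A^t_{i_t}$, which is the opposite of the $\eps$-violator inequality. Thus $i_t \notin E_t$, and together with the previous paragraph this yields $E_t \cap \R_{i_t} = \emptyset$.

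I do not anticipate a substantive obstacle: the lemma is essentially a restatement of the loop's termination criterion. The two points I would take care to verify are (i) that the path examined when the loop exits genuinely ends in an allocation edge $j_{\ell} \in A^t_k$, so that $j_{\ell}$ is an admissible witness chore, and (ii) that the level-by-level search really inspects \emph{every} reachable agent before $\R_{i_t}^{\ell}$ empties out, so that none is silently skipped; both follow from the definition of an alternating path and the monotone level structure of the breadth-first search. A minor bookkeeping check is that a reference agent holding an empty bundle is handled by reading the violator definition (consistently with the \EQ{1} equivalence) as applying only to agents with non-empty bundles, so the argument should go through cleanly.
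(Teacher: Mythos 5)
Your proof is correct and takes essentially the same route as the paper's: both hinge on the single observation that an $\eps$-violator is automatically an $\eps$-path-violator along any alternating path ending at it (you state the contrapositive), so that Phase 2's level-by-level scan exiting without a swap certifies no reachable agent lies in $E_t$. The paper compresses this into a two-line contradiction and leaves the reference-agent and empty-bundle cases implicit, which you spell out explicitly; the difference is purely presentational.
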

\begin{proof} 
Suppose, for contradiction, that there exists some $k \in E_t \cap \R_{i_t}$ at time step $t$, i.e., $k$ is an $\eps$-violator that is reachable (via some alternating path). Then, agent $k$ must also be an $\eps$-path violator, implying that the algorithm continues to be in Phase 2 at time step $t$ and therefore cannot enter Phase 3.
\end{proof}

\begin{restatable}{lemma}{MBBLowerBound}
 \label{lem:MBB_Lower_Bound_Violator_Good}
 Let $t$ be a Phase 3 time step. Then, there exists an $\eps$-violator $k \in E_t$ and a chore $j \in A_k^t$ such that for every agent $i \in [n]$, $\beta_i^t \geq \nicefrac{w_{i,j}}{|w_{k,j}|}$, where $\beta_i^t$ is the \MBB{} ratio of agent $i$ at time step $t$.
\end{restatable}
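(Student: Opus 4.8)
The plan is to reduce the whole statement to the single price identity $p_j^t = |w_{k,j}|$ for a suitable violator $k$ and chore $j\in A_k^t$, and then to establish that identity from the monotonicity properties of violators. First I would pin down the violator. Since the algorithm is executing Phase~3 at step $t$, the current allocation is not $\eps$-\EQ{1} (otherwise it would have returned in Line~\ref{algline:TerminatePhase2}), so, reading the failure of $\eps$-\EQ{1} through its definition (which quantifies only over agents with a non-empty bundle), some agent with a non-empty bundle witnesses a violation against some other agent. Because the reference agent $i_t$ has the largest utility, a violation against any agent is also a violation against $i_t$; hence this witness lies in $E_t$ and has a non-empty bundle. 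I take it to be $k$ and pick any $j\in A_k^t$. By definition of the \MBB{} ratio, $\beta_i^t=\max_{j'} w_{i,j'}/p_{j'}^t \ge w_{i,j}/p_j^t$ for every agent $i$, so the entire lemma follows once I show $p_j^t=|w_{k,j}|$, since then $\beta_i^t \ge w_{i,j}/p_j^t = w_{i,j}/|w_{k,j}|$.

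The identity $p_j^t=|w_{k,j}|$ splits into two halves. The first half is that the price of $j$ has never been lowered, i.e. $p_j^t$ still equals its Phase~1 value $\min_{k'}|w_{k',j}|$. To see this I would run the monotonicity lemmas backwards over Phase~3 steps: for every earlier Phase~3 step $s<t$, \Cref{lem:Set_Of_Violators_Cannot_Grow} gives $k\in E_s$ and \Cref{lem:Allocation_Of_Violators_Cannot_Grow} gives $j\in A_k^t\subseteq A_k^s$, while \Cref{lem:Violators_Are_Not_Reachable} gives $k\notin\R_{i_s}$. Thus at step $s$ the chore $j$ is owned by a non-reachable agent, so $j\notin A_{\R_{i_s}}$ and its price is untouched by the price-drop in Line~\ref{algline:PriceDrop_Phase3}. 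Since prices change only during Phase~3, $p_j^t$ equals the value set in Line~\ref{algline:Prices_Phase1}, namely $\min_{k'}|w_{k',j}|$.

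The second half is that $k$ is a cheapest agent for $j$, i.e. $|w_{k,j}|=\min_{k'}|w_{k',j}|$ (equivalently $\beta_k^t=-1$). Here I would use the structure of the algorithm before the first price-drop. Let $t_0$ be the first Phase~3 step. Before $t_0$ all prices equal their Phase~1 values, and by \Cref{assumption:Each_agent_gets_a_chore} every agent is assigned, hence is a cheapest agent for, some chore in Phase~1; therefore every \MBB{} ratio equals $-1$ and each agent's \MBB{} set is exactly the set of chores for which it is a cheapest agent. Since a Phase~2 swap moves a chore only into the \MBB{} set of its new owner, every chore owned before the first price-drop is owned by one of its cheapest agents. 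As \Cref{lem:Allocation_Of_Violators_Cannot_Grow} gives $j\in A_k^t\subseteq A_k^{t_0}$ and the chores of $A_k^{t_0}$ were all acquired before $t_0$, the agent $k$ is a cheapest agent for $j$. Combining the two halves yields $p_j^t=|w_{k,j}|$, which finishes the proof.

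I expect the second half---showing $\beta_k^t=-1$, i.e. that a violator never holds a chore it acquired at a discounted price---to be the main obstacle, since it is the step that genuinely couples the freezing of violators' bundles (\Cref{lem:Allocation_Of_Violators_Cannot_Grow}), their exclusion from the reachability set (\Cref{lem:Violators_Are_Not_Reachable}), and the \MBB{}-consistency of Phase~2 transfers; the reliance on \Cref{assumption:Each_agent_gets_a_chore} to guarantee that all \MBB{} ratios start at $-1$ is essential here. A secondary subtlety, handled in the first paragraph, is ensuring that the chosen violator has a non-empty bundle so that a chore $j\in A_k^t$ exists at all.
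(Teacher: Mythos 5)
Your proposal is correct and takes essentially the same route as the paper's own proof: both arguments reduce the lemma to the price identity $p_j^t = |w_{k,j}|$ for a violator $k$ and a chore $j \in A_k^t$, and both establish it by combining \Cref{lem:Set_Of_Violators_Cannot_Grow,lem:Allocation_Of_Violators_Cannot_Grow,lem:Violators_Are_Not_Reachable} (so that neither $k$ nor $j$ is ever touched by a price-drop) with the Phase~1 price initialization, \Cref{assumption:Each_agent_gets_a_chore}, and \MBB{}-consistency. Your ``cheapest agent for $j$'' formulation of the second half is just the paper's statement that $\beta_k$ remains equal to $-1$ up to time $t$ combined with \MBB{}-consistency, expressed in different words.
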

\begin{proof}
Note that the algorithm enters Phase 3 at time step $t$ only if the current allocation $A^t$ is not $\eps$-\EQ{1}. Thus, there must exist an $\eps$-violator agent $k \in E_t$. Fix any chore $j \in A^t_k$ (this is well-defined since $A^t_k \neq \emptyset$). From \Cref{lem:Set_Of_Violators_Cannot_Grow,lem:Allocation_Of_Violators_Cannot_Grow}, we know that $k \in E_{\tau}$ and $j \in A^{\tau}_k$ for all Phase 3 time steps $\tau < t$. Additionally, for every Phase 3 time step $\tau$ preceding the time step $t$, we know from \Cref{lem:Violators_Are_Not_Reachable} that $k \notin R_{i_{\tau}}$. In other words, the agent $k$ never experiences a price-drop between the start of the algorithm and the time step $t$. As a result, the \MBB{} ratio of agent $k$ at time step $t$ is the same as that at the time of the first price-drop, i.e., $\beta^t_k = \beta^{t_1}_k$, where $t_1$ denotes the earliest Phase 3 time step. Furthermore, since the \MBB{} ratios of all agents remain unchanged during Phase 2, we must have that $\beta^{t_1}_k = -1$ (this follows from the way we set the initial prices in Phase 1), and thus also $\beta^t_k = -1$. By a similar argument, the chore $j$ does not experience a price-drop between the start of the algorithm and the time step $t$. Therefore, $p^t_j = p^{t_1}_j$. Since the allocation maintained by the algorithm is always \MBB{}-consistent, we get that $p^{t_1}_j = |w_{k,j}|$. The claim now follows by noticing that each agent's \MBB{} ratio is at least its bang-per-buck ratio for the chore $j$.
\end{proof}

\begin{restatable}{corollary}{MBBLowerBoundImplication}
 \label{cor:MBB_Lower_Bound_Implication}
 Let $t$ be a Phase 3 time step. Then, for every agent $i \in [n]$, we have $\beta_i^t \geq w_{\min}$, where $\beta_i^t$ is the \MBB{} ratio of agent $i$ at time step $t$ and $w_{\min} = \min_{i,j} w_{i,j}$.
\end{restatable}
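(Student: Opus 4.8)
The plan is to derive the corollary directly from \Cref{lem:MBB_Lower_Bound_Violator_Good}, treating it as a routine weakening of the per-chore bound established there into a single uniform lower bound that is independent of the witnessing chore. First I would invoke \Cref{lem:MBB_Lower_Bound_Violator_Good} to obtain an $\eps$-violator $k \in E_t$ and a chore $j \in A_k^t$ such that $\beta_i^t \geq w_{i,j}/|w_{k,j}|$ holds simultaneously for every agent $i \in [n]$. The entire task then reduces to showing that, for this one fixed pair $(k,j)$, the quantity $w_{i,j}/|w_{k,j}|$ is itself bounded below by $w_{\min}$, after which the claim follows by transitivity.

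The key step is to exploit the structure of the $\eps$-rounded instance $\I'$. Since the valuations are strictly negative, every $w_{k,j}$ equals $-(1+\eps)^s$ for some non-negative integer $s$, so $|w_{k,j}| = (1+\eps)^s \geq (1+\eps)^0 = 1$. Because $w_{i,j} < 0$ while the denominator is at least $1$, dividing cannot make the value more negative: one checks $w_{i,j}/|w_{k,j}| - w_{i,j} = w_{i,j}\,(1-|w_{k,j}|)/|w_{k,j}| \geq 0$, since the numerator is a product of the non-positive factors $w_{i,j}$ and $1-|w_{k,j}|$. Hence $w_{i,j}/|w_{k,j}| \geq w_{i,j}$, and chaining this with the definition $w_{i,j} \geq w_{\min} = \min_{i,j} w_{i,j}$ gives $\beta_i^t \geq w_{i,j}/|w_{k,j}| \geq w_{i,j} \geq w_{\min}$ for every agent $i$, which is exactly the assertion.

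I do not anticipate a genuine obstacle, as the corollary is essentially a bookkeeping consequence of the preceding lemma. The only point requiring care is the justification of $|w_{k,j}| \geq 1$, which is precisely where $\eps$-roundedness (rather than mere integrality) is used: it guarantees that no absolute valuation drops below the base case $(1+\eps)^0 = 1$, so that discarding the denominator can only increase the bound. Everything else amounts to sign-tracking of the non-positive \MBB{} ratios, and no additional structural properties of the algorithm's trajectory are needed beyond what \Cref{lem:MBB_Lower_Bound_Violator_Good} already supplies.
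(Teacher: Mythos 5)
Your proof is correct and takes essentially the same route as the paper's: invoke \Cref{lem:MBB_Lower_Bound_Violator_Good} for a single violator--chore pair $(k,j)$ and then chain $\beta_i^t \geq w_{i,j}/|w_{k,j}| \geq w_{i,j} \geq w_{\min}$, the middle step resting on $|w_{k,j}| \geq 1$. The only cosmetic difference is in justifying $|w_{k,j}| \geq 1$: the paper derives it from the integrality and strict negativity of the original instance $\I$ together with the round-down property (so $w_{k,j} \leq v_{k,j} \leq -1$), whereas you read it off directly from the definition of an $\eps$-rounded instance ($w_{k,j} = -(1+\eps)^s$ with $s \geq 0$); both are valid here, and the paper's intermediate chain ($\beta_i^t \geq w_{\min}/|w_{k,j}| \geq w_{\min}$) is just your two monotonicity steps performed in the opposite order.
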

\begin{proof}
Let $k \in E_t$ be an $\eps$-violator at time $t$, and let $j \in A_k^t$ be a chore owned by $k$. From \Cref{lem:MBB_Lower_Bound_Violator_Good}, we know that for every agent $i \in [n]$, $\beta_i^t \geq \nicefrac{w_{i,j}}{|w_{k,j}|}$, where $\beta_i^t$ is the \MBB{} ratio of agent $i$ at time step $t$. Since $w_{i,j} \geq w_{\min}$, we get that $\beta_i^t \geq \nicefrac{w_{\min}}{|w_{k,j}|} = -\nicefrac{|w_{\min}|}{|w_{k,j}|}$. 

By assumption, all valuations in the original instance $\I$ are strictly negative and integral. This means that in the $\eps$-rounded version $\I'$, for every $i \in [n]$ and $j \in [m]$, we have $-1 \geq v_{i,j} \geq w_{i,j}$. Therefore, $|w_{k,j}| \geq 1$, or, equivalently, $\frac{-1}{|w_{k,j}|} \geq -1$. Using this bound in the above inequality, we get that $\beta_i^t \geq -|w_{\min}| = w_{\min}$, as desired.
\end{proof}

We are now ready to prove \Cref{lem:RunningTime_Phase3}.
\RunningTimePhaseThree*

\begin{proof}
The proof uses a potential function argument. For any Phase 3 time step $t$, we define a potential 
\begin{align*}
\Phi^t \coloneqq \sum_{i \in [n]} \log_{1+\eps} |\beta_i^t|,
\end{align*}
where $\beta_i^t \coloneqq \max_{j \in [m]} \nicefrac{w_{i,j}}{p^t_j}$ is the \MBB{} ratio of agent $i$ and $p^t_j$ is the price of chore $j$ at time step $t$. 

In Phase 1, the price of every chore is set to be the absolute value of the highest valuation for that chore. Along with Assumption~\ref{assumption:Each_agent_gets_a_chore}, this implies that at the end of Phase 1, the \MBB{} ratio of every agent equals $-1$. Since Phase 2 does not affect the prices, the \MBB{} ratio of every agent at the time of the earliest price-drop also equals $-1$. Thus, the initial value of the potential $\Phi^1$ is $0$.

We will now argue that each time the algorithm performs a price-drop, the potential must increase by at least $1$ (i.e., for any two Phase 3 steps $t$ and $t'$ such that $t < t'$, $\Phi^{t'} - \Phi^t \geq 1$). Recall that the valuations are strictly negative. Also, the prices are always strictly positive, and are non-increasing with time. Therefore, all bang-per-buck ratios (and hence all \MBB{} ratios) are always strictly negative and are non-increasing with time. Consequently, for any agent $i$, $|\beta_i^t|$ is non-decreasing with time. Thus, $\Phi^t \geq 0$ for all time steps $t \in \{1,2,\dots\}$. In addition, each time the algorithm performs a price-drop, the \MBB{} ratio of some agent \emph{strictly} decreases (because a new chore gets added to the \MBB{} set of some agent).

We will argue that the (multiplicative) drop in \MBB{} ratio is always by a positive integral power of $(1+\eps)$. Indeed, by assumption, all valuations are (negative of) integral powers of $(1+\eps)$. We observed earlier that all \MBB{} ratios at the end of Phase 1 are equal to $-1$, which means that all initial prices must be integral powers of $(1+\eps)$. Furthermore, the price-drop factor $\Delta$ is a ratio of bang-per-buck ratios, and is therefore also an integral power of $(1+\eps)$. So, whenever the \MBB{} ratio of some agent strictly decreases, it must do so by an integral power of $(1+\eps)$. After each price-drop in Phase 3, the potential must therefore increase by at least $1$.

All that remains to be shown is an upper bound on the potential $\Phi^t$. From \Cref{cor:MBB_Lower_Bound_Implication}, we know that for every Phase 3 time step $t$, we have $\beta^t_i \geq w_{\min}$, and consequently, $\Phi^t \leq n  \log_{1+\eps} {|w_{\min}|}$. Since the potential increases by at least $1$ between any consecutive price-drops, the overall number of Phase 3 time steps can be at most $n \log_{1+\eps} |w_{\min}|$. For $\eps$-rounded valuations, we have $|w_{\min}| \leq (1+\eps) |v_{\min}|$, and therefore $n \log_{1+\eps} |w_{\min}| = n + n \log_{1+\eps} |v_{\min}|$. The stated bound now follows by observing that $\frac{1}{\ln (1+\eps)} \leq \frac{2}{\eps}$ for every $\eps \in (0,1)$.
\end{proof}

\subsection{Proof of Lemma~\ref{lem:Correctness_ALG_EQ1+PO_original_instance}}
\label{subsec:Proof_Correctness_ALG_EQ1+PO_original_instance}

The proof of \Cref{lem:Correctness_ALG_EQ1+PO_original_instance} relies on several intermediate results (\Cref{lem:Correctness_ALG_EQ1+fPO_rounded_instance,lem:fPO_Rounded_eps_PO_Original,lem:Bound_On_Eps_For_Exact_PO,lem:epsEQ1_Rounded_3epsEQ1_Original,lem:Bound_On_eps_for_Exact_EQone}) as stated below.

\begin{restatable}{lemma}{CorrectnessALGEQonefPORounded}
 \label{lem:Correctness_ALG_EQ1+fPO_rounded_instance}
 Given as input any $\eps$-rounded instance $\I'$ with strictly negative valuations, the allocation $A$ returned by \AlgEQonePO{} is $\eps$-\EQ{1} and \fPO{} for $\I'$.
\end{restatable}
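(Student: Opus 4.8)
The plan is to establish the two required properties—fractional Pareto optimality and $\eps$-\EQ{1}—separately, both resting on a single structural invariant: at every time step, the allocation-and-price pair maintained by \AlgEQonePO{} is an equilibrium of a Fisher market, and in particular is \MBB{}-consistent. Once this invariant is in hand, \fPO{} follows from the first welfare theorem, and $\eps$-\EQ{1} follows from the form of the termination checks together with the termination guarantee of the running-time lemma.

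For \fPO{}, I would first verify that \MBB{}-consistency is preserved throughout the execution of \Cref{alg:EQ1+PO}. In Phase~1 each price is set to $p_j = |w_{i,j}|$ for the owner $i$, so every allocated chore attains bang-per-buck $-1$, which is the largest (least negative) ratio available to $i$; hence the initial allocation is \MBB{}-consistent. A Phase~2 swap transfers only a chore lying in the \MBB{} set of \emph{both} the giving and receiving agent, so consistency is immediately preserved. The delicate case is Phase~3, where I must check that the uniform price-drop on the reachable chores $A_{\R_i}$ leaves every owner's bundle inside its \MBB{} set. Since a reachable agent owns only reachable chores, all of its prices scale by the same factor $\Delta$, so its bang-per-buck ratios for its own bundle scale uniformly and it continues to attain its \MBB{} ratio; a non-reachable agent owns no reachable chore, so its own prices—and hence its \MBB{} set—are untouched, while the reachable chores only become less attractive to it. Because $\Delta$ is chosen as the smallest factor that creates a new \MBB{} edge, no allocated chore is pushed strictly out of any \MBB{} set at the moment the price-drop stops. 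With \MBB{}-consistency established, market clearing (the allocation is integral and fully assigns every chore) and budget exhaustion (set $e_i = s_i$) hold by construction, so the pair is an equilibrium and \Cref{prop:FirstWelfareTheorem} yields \fPO{}.

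For $\eps$-\EQ{1}, the key observation is that \AlgEQonePO{} returns an allocation only at its two termination checks (one at the end of Phase~1 and one immediately after the Phase~2 while-loop), and each of these explicitly tests whether the current allocation is $\eps$-\EQ{1} for $\I'$. Hence any returned allocation is $\eps$-\EQ{1} by definition. What remains is to confirm that the algorithm actually reaches one of these returns rather than looping forever, which is exactly the content of \Cref{lem:RunningTime_ALG_EQ1+PO}. It is also worth recording why confining the violator search to the reachability set is sound: if a \emph{reachable} agent $k$ were an $\eps$-violator, then in particular removing the last chore on the alternating path to $k$ fails to restore equitability, so $k$ would be an $\eps$-path-violator and Phase~2 could not have exited the while-loop; therefore whenever Phase~2 exits without an $\eps$-\EQ{1} allocation, every $\eps$-violator lies outside $\R_{i}$, and this is precisely the situation that the Phase~3 price-drop remedies by enlarging the reachable set.

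I expect the main obstacle to be the Phase~3 argument that the uniform price-drop preserves \MBB{}-consistency (and hence \fPO{}): one must treat the two cases of reachable versus non-reachable owners carefully and invoke the extremal definition of $\Delta$ to guarantee that no currently allocated chore leaves any agent's \MBB{} set at the instant a new \MBB{} edge appears. By contrast, the $\eps$-\EQ{1} half is essentially immediate once termination is granted by the running-time lemma, since the return conditions enforce it directly.
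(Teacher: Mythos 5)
Your proof is correct and takes essentially the same route as the paper's: $\eps$-\EQ{1} follows from guaranteed termination (\Cref{lem:RunningTime_ALG_EQ1+PO}) combined with the fact that both return statements explicitly test $\eps$-\EQ{1}, and \fPO{} follows from \MBB{}-consistency of the maintained outcome, viewed as a Fisher market equilibrium, via the first welfare theorem (\Cref{prop:FirstWelfareTheorem}). The only difference is one of detail: the paper simply asserts that \AlgEQonePO{} always maintains \MBB{}-consistency, whereas you carry out the phase-by-phase verification (including the Phase~3 case analysis on reachable versus non-reachable owners and the extremal choice of $\Delta$), which is a correct and worthwhile elaboration of the same argument.
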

\begin{proof}
From \Cref{lem:RunningTime_ALG_EQ1+PO}, we know that \AlgEQonePO{} is guaranteed to terminate. Furthermore, the algorithm can only terminate in Lines~\ref{algline:TerminatePhase1} or \ref{algline:TerminatePhase2}. In both cases, the allocation $A$ returned by the algorithm is guaranteed to be $\eps$-\EQ{1} with respect to the input instance $\I'$.

To see why $A$ is \fPO{}, note that \AlgEQonePO{} always maintains an \MBB{}-consistent allocation (with respect to the current prices). Define a Fisher market where each agent is assigned a budget equal to its spending under $A$. Then, the outcome $(A,\p)$ satisfies the equilibrium conditions for this market. Therefore, from \Cref{prop:FirstWelfareTheorem}, $A$ is \fPO{}.
\end{proof}

\begin{restatable}{lemma}{fPOForRoundedIsEpsPOForOriginal}
 \label{lem:fPO_Rounded_eps_PO_Original}
 Let $\I$ be any fair division instance and $\I'$ be its $\eps$-rounded version for any given $\eps > 0$. Then, an allocation $A$ that is \fPO{} for $\I'$ is $\eps$-\PO{} for $\I$.
\end{restatable}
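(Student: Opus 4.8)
The plan is to argue by contraposition: I would assume that $A$ is \emph{not} $\eps$-\PO{} for $\I$ and then exhibit an allocation that Pareto dominates $A$ in the rounded instance $\I'$, contradicting the assumption that $A$ is \fPO{} for $\I'$. Concretely, if $A$ fails to be $\eps$-\PO{} for $\I$, then by definition there is an allocation $B$ with $v_k(B_k) \geq \frac{1}{1+\eps} v_k(A_k)$ for every agent $k$ and strict inequality for at least one agent. Since $B$ is an integral allocation it is in particular a fractional one, so it suffices to show that $B$ Pareto dominates $A$ with respect to the rounded valuations $w$ in order to contradict \fPO{}.

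The only tool needed is the pointwise rounding relation recorded in the definition of the $\eps$-rounded instance, namely $v_{i,j} \geq w_{i,j} \geq (1+\eps) v_{i,j}$ for all $i,j$; by additivity this lifts to bundles, giving $v_k(S) \geq w_k(S) \geq (1+\eps) v_k(S)$ for every agent $k$ and bundle $S$. First I would rewrite the domination hypothesis by multiplying through by $(1+\eps) > 0$, which preserves the inequality, obtaining $(1+\eps) v_k(B_k) \geq v_k(A_k)$ for all $k$ (strict for some $k$). Then I chain the two sides of the rounding relation:
\begin{align*}
w_k(B_k) \;\geq\; (1+\eps)\, v_k(B_k) \;\geq\; v_k(A_k) \;\geq\; w_k(A_k),
\end{align*}
where the first inequality uses the lower bound $w_k(B_k) \geq (1+\eps) v_k(B_k)$, the middle is the rewritten hypothesis, and the last uses $v_k(A_k) \geq w_k(A_k)$. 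This shows $w_k(B_k) \geq w_k(A_k)$ for every $k$, and for the distinguished agent whose hypothesis inequality was strict the very same chain yields $w_k(B_k) > w_k(A_k)$. Hence $B$ Pareto dominates $A$ in $\I'$, which contradicts the \fPO{}ness of $A$ for $\I'$.

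The step demanding the most care is the bookkeeping of inequality directions, since all valuations are negative: I must make sure that rounding \emph{down} to a more negative integral power of $(1+\eps)$ gives $w \leq v$ componentwise, that multiplying through by the positive factor $(1+\eps)$ does not flip the direction of the domination inequality, and that the strict inequality genuinely survives the chain at the distinguished agent. Beyond these sign checks no estimates are required; the result is a direct consequence of the two-sided rounding bound together with additivity of the valuations.
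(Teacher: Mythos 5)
Your proof is correct and follows essentially the same route as the paper: both argue by contradiction, take the dominating allocation $B$ from the failure of $\eps$-\PO{} in $\I$, and use the two-sided rounding bound $v_{k,j} \geq w_{k,j} \geq (1+\eps)v_{k,j}$ together with additivity to conclude that $B$ Pareto dominates $A$ in $\I'$, contradicting \fPO{}. Your write-up merely makes explicit the chain $w_k(B_k) \geq (1+\eps)v_k(B_k) \geq v_k(A_k) \geq w_k(A_k)$ and the survival of strictness, which the paper leaves implicit.
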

\begin{proof}
Suppose, for contradiction, that $A$ is $\eps$-Pareto dominated in $\I$ by an allocation $B$, i.e., $v_k(B_k) \geq \frac{1}{(1+\eps)} v_k(A_k)$ for every agent $k \in [n]$ and $v_i(B_i) > \frac{1}{(1+\eps)} v_i(A_i)$ for some agent $i \in [n]$. Since $\I'$ is an $\eps$-rounded version of $\I$, we have that $v_{i,j} \geq w_{i,j} \geq (1+\eps)v_{i,j}$ for every agent $i$ and every chore $j$. Using this bound and the additivity of valuations, we get that $w_k(B_k) \geq w_k(A_k)$ for every agent $k \in [n]$ and $w_i(B_i) > w_i(A_i)$ for some agent $i \in [n]$. Thus, $B$ Pareto dominates $A$ in the instance $\I'$, which is a contradiction since $A$ is \fPO{} (hence \PO{}) for $\I'$.
\end{proof}

\begin{restatable}{lemma}{BoundOnMBBAfterTermination}
 \label{lem:Bound_On_MBB_After_Termination}
 Let $\p$ denote the price-vector right after the termination of \AlgEQonePO{}. Let $\beta_k$ denote the \MBB{} ratio of agent $k \in [n]$ in $\I'$ with respect to $\p$. Then, $\beta_k \geq -|w_{\min}|^2$.
\end{restatable}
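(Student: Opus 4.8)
The plan is to show that the MBB ratios at termination coincide with those immediately after the \emph{last} price-drop, and then to bound by $|w_{\min}|$ the price-drop factor $\Delta$ of that final Phase~3 step. Since prices (and hence MBB ratios) are unchanged during Phase~1 and Phase~2, the only events that alter an agent's MBB ratio are the Phase~3 price-drops. If the algorithm never enters Phase~3, then every MBB ratio equals $-1 \geq -|w_{\min}|^2$ (using $|w_{\min}| \geq 1$, which holds since the valuations are negative integers rounded to powers of $(1+\eps)$), and the claim is immediate. Otherwise, let $t$ denote the last Phase~3 step; after the associated price-drop the algorithm returns to Phase~2 and terminates there, so the price vector $\p$ at termination is exactly the one produced at step $t$.

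First I would record the effect of the step-$t$ price-drop on each agent. For a reachable agent $h \in \R_{i_t}$, all of its allocated (hence \MBB{}-consistent, hence reachable) chores have their prices divided by $\Delta$, so its new MBB ratio is exactly $\Delta \cdot \beta_h^t$; the minimality of $\Delta$ guarantees that no \emph{non}-reachable chore overtakes these, i.e.\ $w_{h,j}/p_j \leq \Delta\,\beta_h^t$ for every $j \notin A_{\R_{i_t}}$, so the MBB ratio really is $\Delta\,\beta_h^t$. A non-reachable agent's prices are untouched, so its MBB ratio is unchanged and, by \Cref{cor:MBB_Lower_Bound_Implication}, already at least $w_{\min} \geq -|w_{\min}|^2$. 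Thus it suffices to show $\Delta \leq |w_{\min}|$, since then each reachable agent's new ratio satisfies $\Delta\,\beta_h^t \geq \Delta\, w_{\min} \geq |w_{\min}|\cdot w_{\min} = -|w_{\min}|^2$, again invoking \Cref{cor:MBB_Lower_Bound_Implication}.

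The hard part is the bound $\Delta \leq |w_{\min}|$, and here I would reuse the violator and chore supplied by \Cref{lem:MBB_Lower_Bound_Violator_Good}: there is an $\eps$-violator $k \in E_t$ and a chore $j \in A_k^t$ that never underwent a price-drop, so $p_j = |w_{k,j}|$ and $\beta_{i_t}^t \geq w_{i_t,j}/|w_{k,j}|$. By \Cref{lem:Violators_Are_Not_Reachable}, $k \notin \R_{i_t}$, hence $j \notin A_{\R_{i_t}}$, so the pair $(i_t, j)$ is admissible in the minimization defining $\Delta$. Evaluating that term gives
\begin{align*}
\Delta \;\leq\; \frac{w_{i_t,j}/p_j}{\beta_{i_t}^t} \;=\; \frac{w_{i_t,j}/|w_{k,j}|}{\beta_{i_t}^t},
\end{align*}
and in absolute value this equals $\frac{|w_{i_t,j}|}{|w_{k,j}|\,|\beta_{i_t}^t|} \leq |w_{\min}|$, because $|w_{i_t,j}| \leq |w_{\min}|$, $|w_{k,j}| \geq 1$, and $|\beta_{i_t}^t| \geq 1$ (every MBB ratio starts at $-1$ and only grows in magnitude over time). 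This yields $\Delta \leq |w_{\min}|$ and completes the argument. The main subtleties to verify carefully are that the reachable agents' post-drop MBB ratios are governed by their own (reachable) chores rather than by newly cheap chores owned elsewhere---this is precisely where the minimality of $\Delta$ is used---and that the chosen chore $j$ genuinely avoided all earlier price-drops, which is exactly the content of \Cref{lem:MBB_Lower_Bound_Violator_Good}.
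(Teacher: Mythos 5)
Your proof is correct and follows essentially the same route as the paper's: reduce to the last Phase~3 step, observe that reachable agents' \MBB{} ratios become $\Delta\cdot\beta_h^t$ while non-reachable agents' ratios are unchanged and bounded by \Cref{cor:MBB_Lower_Bound_Implication}, and then bound $\Delta \leq |w_{\min}|$ by plugging a never-price-dropped chore of a (non-reachable, by \Cref{lem:Violators_Are_Not_Reachable}) $\eps$-violator into the minimization defining $\Delta$. Your additions---handling the no-Phase-3 case explicitly, specializing $h = i_t$, and justifying via the minimality of $\Delta$ that the post-drop \MBB{} ratio of a reachable agent is exactly $\Delta\,\beta_h^t$---are careful refinements of steps the paper leaves implicit, not a different argument.
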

\begin{proof}
Let $t_1,\dots,t_N$ denote the Phase 3 time steps during the execution of the algorithm. Let $\beta_k^{t_N}$ denote the \MBB{} ratio of agent $k$ \emph{before} the price-drop at $t_N$ takes place, and let $\Delta_N$ denote the (multiplicative) price-drop factor at time step $t_N$. In addition, let $i$ denote the reference agent at $t_N$. For every agent $k$ that is reachable at $t_N$ (i.e., $k \in \R_i^{t_N}$), we have that $\beta_k = \beta_k^{t_N} \cdot \Delta_N$. Similarly, for every $k \notin \R_i^{t_N}$, $\beta_k = \beta_k^{t_N}$.

We know from \Cref{cor:MBB_Lower_Bound_Implication} that $\beta_k^{t_N} \geq w_{\min}$. Since $\beta_k^{t_N} < 0$, it suffices to prove that $\Delta_N \leq |w_{\min}|$.

By definition, $\Delta_N \leq \frac{w_{h,j}/p_j^{t_N}}{\beta_h^{t_N}}$ for every agent $h \in \R_i$ and every chore $j \in [m] \setminus A_{\R_i}$; here, $A_{\R_i} \coloneqq \cup_{h \in \R_i} A_h$ is the set of reachable chores at $t_N$. Recall from the proof of \Cref{lem:RunningTime_Phase3} that all \MBB{} ratios are initially equal to $-1$ and are non-increasing with time. Thus, $\beta_h^{t_N} \leq -1$. This implies that
\begin{align*}
\Delta_N \leq \frac{w_{h,j}/p_j^{t_N}}{\beta_h^{t_N}} = \frac{|w_{h,j}|}{p_j^{t_N} \cdot |\beta_h^{t_N}|} \leq \frac{|w_{\min}|}{p_j^{t_N}}.
\end{align*}
The above inequality holds for \emph{every} chore $j$ that is not reachable at time step $t_N$. In particular, we can choose $j$ to any chore owned by an $\eps$-violator $k$ at $t_N$. Note that our choice of $j$ is well-defined: Indeed, such an agent $k$ must exist because the allocation is not $\eps$-\EQ{1} when the algorithm enters Phase 3 at $t_N$. Furthermore, since agent $k$ is an $\eps$-violator, it must own at least one chore.

By an argument similar to that in the proof of \Cref{lem:MBB_Lower_Bound_Violator_Good}, we get that $p_j^{t_N} = p_j^{t_1} = |w_{k,j}| \geq 1$, where the inequality follows from the integrality of valuations. Substituting $p_j^{t_N} \geq 1$ gives $\Delta_N \leq |w_{\min}|$, as desired.
\end{proof}

\begin{restatable}{lemma}{BoundOnEpsForExactPO}
 \label{lem:Bound_On_Eps_For_Exact_PO}
 Let $\I$ be any fair division instance and $\I'$ be its $\eps$-rounded version for any $0 < \eps \leq \frac{1}{6m |v_{\min}|^3}$. Let $A$ be the allocation returned by \AlgEQonePO{} for the input instance $\I'$. Then, $A$ is \PO{} for $\I$.
\end{restatable}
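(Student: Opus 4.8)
The plan is to argue by contradiction through the Fisher-market structure that \AlgEQonePO{} maintains, rather than trying to push the already-established $\eps$-\PO{} guarantee (\Cref{lem:fPO_Rounded_eps_PO_Original}) up to exact \PO{}, which $\eps$-\PO{} by itself cannot give. By \Cref{lem:Correctness_ALG_EQ1+fPO_rounded_instance}, the returned allocation $A$ is \fPO{} for the rounded instance $\I'$, so with the terminal price vector $\p$ the pair $(A,\p)$ is an \MBB{}-consistent equilibrium for $\I'$. Exactly as in the proof of \Cref{prop:FirstWelfareTheorem}, this gives the relations $w_i(A_i)=\beta_i\,p(A_i)$ for every $i$ and $w_i(B_i)\le \beta_i\,p(B_i)$ for every allocation $B$ and every $i$, where $\beta_i<0$ is the terminal \MBB{} ratio of agent $i$ in $\I'$.

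Now suppose, for contradiction, that $A$ is \emph{not} \PO{} for $\I$, so some $B$ Pareto-dominates $A$ in $\I$: $v_k(B_k)\ge v_k(A_k)$ for all $k$, with $v_{i^\ast}(B_{i^\ast})>v_{i^\ast}(A_{i^\ast})$ for some $i^\ast$. Since both $A$ and $B$ allocate all of $[m]$ at the fixed prices $\p$, we have $\sum_i p(A_i)=p([m])=\sum_i p(B_i)$. Writing $p(A_i)=w_i(A_i)/\beta_i$ and $p(B_i)\le w_i(B_i)/\beta_i$ (dividing the second relation by $\beta_i<0$ flips the inequality), summing over $i$, and equating the price totals yields $\sum_i \tfrac{1}{|\beta_i|}\bigl(w_i(A_i)-w_i(B_i)\bigr)\ge 0$.

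Next I would translate the hypothesized domination from $v$ to $w$ term by term. Using $v_{i,j}\ge w_{i,j}\ge(1+\eps)v_{i,j}$ and additivity, $w_i(A_i)\le v_i(A_i)$ and $-w_i(B_i)\le -(1+\eps)v_i(B_i)$, so $w_i(A_i)-w_i(B_i)\le \bigl(v_i(A_i)-v_i(B_i)\bigr)+\eps\,|v_i(B_i)|$. Because the valuations of $\I$ are integral and $B$ dominates $A$, the first bracket is $\le 0$ for every agent and $\le -1$ for $i^\ast$. Substituting into the weighted sum, the $v$-difference terms contribute at most $-\tfrac{1}{|\beta_{i^\ast}|}$, and we obtain $0\le \sum_i \tfrac{1}{|\beta_i|}\,\eps\,|v_i(B_i)| - \tfrac{1}{|\beta_{i^\ast}|}$.

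The crux, and where the threshold on $\eps$ is spent, is to show this right-hand side is strictly negative. Two facts control it. First, each $|\beta_i|\ge 1$ (all \MBB{} ratios start at $-1$ and only decrease), so every weight $\tfrac{1}{|\beta_i|}\le 1$; moreover, since each chore is owned by exactly one agent under $B$, the total disutility telescopes over chores, giving $\sum_i |v_i(B_i)|\le m\,|v_{\min}|$, which crucially removes any dependence on $n$. Hence $\sum_i \tfrac{1}{|\beta_i|}\,\eps\,|v_i(B_i)|\le \eps\, m\,|v_{\min}|$. Second, \Cref{lem:Bound_On_MBB_After_Termination} gives $\beta_{i^\ast}\ge -|w_{\min}|^2$, so $\tfrac{1}{|\beta_{i^\ast}|}\ge \tfrac{1}{|w_{\min}|^2}$. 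Thus it suffices that $\eps\, m\,|v_{\min}| < \tfrac{1}{|w_{\min}|^2}$; since $|w_{\min}|\le(1+\eps)|v_{\min}|$ and $\eps\le\frac{1}{6m|v_{\min}|^3}$ keeps $(1+\eps)^2$ a harmless constant, this holds, contradicting the displayed inequality. Hence no dominating $B$ exists and $A$ is \PO{} for $\I$. I expect the main obstacle to be exactly this final estimate: bounding the rounding slack agent-by-agent introduces a spurious factor of $n$, and it is the telescoping of $\sum_i|v_i(B_i)|$ together with the \MBB{}-ratio lower bound of \Cref{lem:Bound_On_MBB_After_Termination} that let a single agent's integral unit of improvement outweigh the aggregate distortion.
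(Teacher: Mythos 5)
Your proof is correct and follows essentially the same route as the paper's: both argue by contradiction via price accounting at the terminal prices, using \MBB{}-consistency of $A$ and the \MBB{} inequality for $B$, the rounding relation $v_{i,j}\ge w_{i,j}\ge(1+\eps)v_{i,j}$, the integral unit of improvement $v_{i^\ast}(B_{i^\ast})\ge v_{i^\ast}(A_{i^\ast})+1$, and \Cref{lem:Bound_On_MBB_After_Termination} to show the $\eps$-rounding slack cannot absorb that unit. The only difference is organizational: you collect everything into the weighted sum $\sum_i \frac{1}{|\beta_i|}\left(w_i(A_i)-w_i(B_i)\right)\ge 0$ and bound the slack by $\eps m|v_{\min}|$ using $|\beta_i|\ge 1$ and the telescoping of $\sum_i|v_i(B_i)|$, whereas the paper chains the same inequalities on $\p([m])$ directly and bounds the slack by $\eps\,\p([m])\le \eps m|w_{\min}|$.
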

\begin{proof}
Suppose, for contradiction, that the allocation $A$ is Pareto dominated by an allocation $B$ in the instance $\I$. That is, $v_k(B_k) \geq v_k(A_k)$ for every agent $k \in [n]$ and $v_i(B_i) > v_i(A_i)$ for some agent $i \in [n]$. Since the valuations in $\I$ are integral, we have $v_i(B_i) \geq v_i(A_i) + 1$. 

Let $\p$ denote the price-vector right after the termination of \AlgEQonePO{}. Let $\alpha_k$ and $\beta_k$ denote the \MBB{} ratios (with respect to $\p$) of agent $k \in [n]$ in $\I$ and $\I'$ respectively. That is,
$\alpha_k = \max_{j \in [m]} v_{k,j}/p_j$ and $\beta_k = \max_{j \in [m]} w_{k,j}/p_j$. Since $\I'$ is an $\eps$-rounded version of $\I$, we have $v_{k,j} \geq w_{k,j} \geq (1+\eps)v_{k,j}$ for every agent $k \in [n]$ and every chore $j \in [m]$. Thus,
$\alpha_k \geq \beta_k \geq (1+\eps) \alpha_k$. Now consider the allocation $B$. By definition of \MBB{} ratio, we have that for every $k \in [n]$, $\beta_k \p(B_k) \geq w_k(B_k)$, or, equivalently, $\p(B_k) \leq \frac{w_k(B_k)}{\beta_k}$ (since $\beta_k < 0$).

The combined spending over all the chores is given by
\begin{align*}
 \p([m]) & = \sum_{k \in [n]} \p(B_k) & \text{(since all the chores are allocated under $B$)}\\
  & = \p(B_i) + \sum_{k \in [n] \setminus \{i\}} \p(B_k) &\\
  & \leq \frac{w_i(B_i)}{\beta_i} + \sum_{k \neq i} \frac{w_k(B_k)}{\beta_k} &\\
  & \leq (1+\eps) \left( \frac{v_i(B_i)}{\beta_i} + \sum_{k \neq i} \frac{v_k(B_k)}{\beta_k} \right) & \text{(using $w_k(A_k) \geq (1+\eps) v_k(A_k)$ and $\beta_k < 0$)}\\
  & \leq (1+\eps) \left( \frac{v_i(A_i)+1}{\beta_i} + \sum_{k \neq i} \frac{v_k(A_k)}{\beta_k} \right) & \text{(using Pareto dominance and $\beta_k < 0$)}\\
  & = (1+\eps) \left( \frac{1}{\beta_i} + \sum_{k} \frac{v_k(A_k)}{\beta_k} \right) &\\
  & \leq (1+\eps) \left( \frac{1}{\beta_i} + \sum_{k} \frac{w_k(A_k)}{\beta_k} \right) & \text{(using $v_k(A_k) \geq w_k(A_k)$ and $\beta_k < 0$)}\\
  & = (1+\eps) \left( \frac{1}{\beta_i} + \p([m]) \right) & \text{(since $A$ is \MBB-consistent in $\I'$)}.
\end{align*}
Simplifying the above relation gives
\begin{alignat*}{3}
&& (1+\eps) \cdot \frac{-1}{\beta_i} & \leq \eps (\p([m]) &\\
\Rightarrow && (1+\eps) \cdot \frac{1}{|w_{\min}|^2} & \leq \eps \cdot m \cdot |w_{\min}| & \text{(using \Cref{lem:Bound_On_MBB_After_Termination})}\\
\Rightarrow && \frac{1}{m \cdot |w_{\min}|^3} & \leq \frac{\eps}{1+\eps} &\\
\Rightarrow && \frac{1}{m \cdot |v_{\min}|^3 \cdot {(1+\eps)}^3} & \leq \frac{\eps}{1+\eps} & \text{(using $\eps$-roundedness)}\\
\Rightarrow && \frac{1}{m \cdot |v_{\min}|^3} & \leq \eps {(1+\eps)}^2 & \\
\Rightarrow && \frac{1}{m \cdot |v_{\min}|^3} & \leq 4\eps & \text{(since $(1+\eps)^2 < 4$ for $\eps \in (0,1)$)},
\end{alignat*}
which contradicts the assumed bound on $\eps$. Thus, $A$ is \PO{} for $\I$.
\end{proof}

\begin{restatable}{lemma}{EpsEQ1RoundedThreeEpsEQ1Original}
 \label{lem:epsEQ1_Rounded_3epsEQ1_Original}
 Let $\I$ be any fair division instance and $\I'$ be its $\eps$-rounded version for any given $\eps > 0$. Then, an allocation $A$ that is $\eps$-\EQ{1} for $\I'$ is $3\eps$-\EQ{1} for $\I$.
\end{restatable}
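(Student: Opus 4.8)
The plan is to transfer the multiplicative $\eps$-\EQ{1} guarantee from the rounded instance $\I'$ to the original instance $\I$ by invoking, for each bundle, the two-sided comparison $v_i(S) \geq w_i(S) \geq (1+\eps) v_i(S)$ (with all three quantities non-positive), which follows from the per-chore bound $v_{i,j} \geq w_{i,j} \geq (1+\eps) v_{i,j}$ recorded in the definition of the $\eps$-rounded version together with additivity. Fix any pair of agents $i,k$ with $A_i \neq \emptyset$. Since $A$ is $\eps$-\EQ{1} for $\I'$, there is a chore $j \in A_i$ with $\frac{1}{1+\eps} w_i(A_i \setminus \{j\}) \geq w_k(A_k)$, and I will show that this same chore $j$ witnesses the $3\eps$-\EQ{1} condition for the pair $(i,k)$ in $\I$, namely $\frac{1}{1+3\eps} v_i(A_i \setminus \{j\}) \geq v_k(A_k)$.

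The core of the argument is a short chain of inequalities that converts $w$-quantities into $v$-quantities. First, from $w_k(A_k) \geq (1+\eps) v_k(A_k)$ I get $v_k(A_k) \leq \frac{1}{1+\eps} w_k(A_k)$, and combining this with the $\eps$-\EQ{1} witness (rewritten as $w_k(A_k) \leq \frac{1}{1+\eps} w_i(A_i \setminus \{j\})$) gives $v_k(A_k) \leq \frac{1}{(1+\eps)^2} w_i(A_i \setminus \{j\})$. Next, using $w_i(A_i \setminus \{j\}) \leq v_i(A_i \setminus \{j\}) \leq 0$ and multiplying through by the positive factor $\frac{1}{(1+\eps)^2}$ yields $v_k(A_k) \leq \frac{1}{(1+\eps)^2} v_i(A_i \setminus \{j\})$. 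Finally, because $(1+\eps)^2 < 1 + 3\eps$ for $\eps \in (0,1)$ — exactly the regime already used elsewhere in the analysis — we have $\frac{1}{(1+\eps)^2} \geq \frac{1}{1+3\eps}$, and since $v_i(A_i \setminus \{j\}) \leq 0$ this last step relaxes the bound in the desired direction to give $v_k(A_k) \leq \frac{1}{1+3\eps} v_i(A_i \setminus \{j\})$, which is precisely the $3\eps$-\EQ{1} condition for $(i,k)$.

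The only real obstacle is bookkeeping of inequality directions: every valuation involved is non-positive, so each time I divide by $(1+\eps)$ or compare $\frac{1}{(1+\eps)^2}$ against $\frac{1}{1+3\eps}$, the sense of the inequality must be tracked carefully, and the final comparison in particular flips because it amounts to multiplication by the non-positive quantity $v_i(A_i \setminus \{j\})$. I would also note at the outset that $\eps \in (0,1)$ may be assumed (it is the relevant regime, and in the reduction establishing \Cref{thm:EQ1+PO_pseudopolynomial} one has $\eps = \frac{1}{6m|v_{\min}|^3} \leq 1$), which is what makes $(1+\eps)^2 < 1+3\eps$ available. Since the pair $(i,k)$ was arbitrary, the $3\eps$-\EQ{1} condition holds for every pair in $\I$, completing the argument.
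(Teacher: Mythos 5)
Your proof is correct and follows essentially the same route as the paper's own argument: both chain the $\eps$-\EQ{1} witness in $\I'$ with the bundle-wise roundedness bounds $v_i(S) \geq w_i(S) \geq (1+\eps)\,v_i(S)$ to obtain a $(1+\eps)^2$ guarantee in $\I$, and then relax $(1+\eps)^2 \leq 1+3\eps$ using $\eps < 1$ together with the non-positivity of the valuations. If anything, your explicit handling of the sign flips and of the assumption $\eps \in (0,1)$ is slightly more careful than the paper, which invokes $0 < \eps < 1$ without comment.
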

\begin{proof}
Since $A$ is $\eps$-\EQ{1} for $\I'$, we have that for every pair of agents $i,k \in [n]$, either $\frac{1}{(1+\eps)} w_k(A_k) \geq w_i(A_i)$ or there exists a chore $j \in A_k$ such that $\frac{1}{(1+\eps)} w_k(A_k \setminus \{j\}) \geq w_i(A_i)$. By $\eps$-roundedness, we have that $v_{i,j} \geq w_{i,j} \geq (1+\eps)v_{i,j}$ for every agent $i$ and every chore $j$. Along with the additivity of valuations (in $\I'$), this implies that for every pair of agents $i,k \in [n]$, either $v_k(A_k) \geq (1+\eps)^2 v_i(A_i)$ or there exists a chore $j \in A_k$ such that $v_k(A_k \setminus \{j\}) \geq (1+\eps)^2 v_i(A_i)$.

Since $0 < \eps < 1$, we have that $\eps^2 < \eps$. Furthermore, since $v_i(A_i) < 0$, we have that $\eps^2 v_i(A_i) > \eps v_i(A_i)$. Substituting this in the above inequalities gives us that for every pair of agents $i,k \in [n]$, either $v_k(A_k) \geq (1+3\eps) v_i(A_i)$ or there exists a chore $j \in A_k$ such that $v_k(A_k \setminus \{j\}) \geq (1+3\eps) v_i(A_i)$, as desired.
\end{proof}

\begin{restatable}{lemma}{BoundOnEpsforExactEQone}
 \label{lem:Bound_On_eps_for_Exact_EQone}
 Given any fair division instance $\I$ and any $0 \leq \eps \leq \frac{1}{6 m |v_{\min}|}$, an allocation $A$ is $3\eps$-\EQ{1} for $\I$ if and only if it is \EQ{1} for $\I$.
\end{restatable}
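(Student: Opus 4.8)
The plan is to prove the two implications separately, with the reverse implication ($3\eps$-\EQ{1} $\Rightarrow$ \EQ{1}) carrying all the content; the forward implication is essentially free. First I would dispose of the easy direction. Suppose $A$ is \EQ{1}, fix any pair $i,k$ with $A_i \neq \emptyset$, and let $j \in A_i$ be the witnessing chore, so that $v_i(A_i \setminus \{j\}) \geq v_k(A_k)$. Since all chore valuations are non-positive, $v_i(A_i \setminus \{j\}) \leq 0$, and dividing a non-positive number by $(1+3\eps) \geq 1$ only makes it larger (closer to zero); hence $\frac{1}{1+3\eps} v_i(A_i \setminus \{j\}) \geq v_i(A_i \setminus \{j\}) \geq v_k(A_k)$, so the same chore $j$ witnesses $3\eps$-\EQ{1}. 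As pairs with $A_i = \emptyset$ are vacuous under both definitions, $A$ is $3\eps$-\EQ{1}.

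For the reverse direction, suppose $A$ is $3\eps$-\EQ{1}. I would again fix a pair $i,k$ with $A_i \neq \emptyset$ and take the chore $j \in A_i$ for which $\frac{1}{1+3\eps} v_i(A_i \setminus \{j\}) \geq v_k(A_k)$. Writing $a \coloneqq v_i(A_i \setminus \{j\})$ and $b \coloneqq v_k(A_k)$ (both non-positive integers), multiplying through by $(1+3\eps) > 0$ gives $a \geq (1+3\eps)b = b + 3\eps b$, so that $a - b \geq 3\eps b = -3\eps |b|$. The crux is to bound $|b|$: since $A_k$ contains at most $m$ chores, each of absolute value at most $|v_{\min}|$, additivity gives $|b| = |v_k(A_k)| \leq m |v_{\min}|$. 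Plugging in the hypothesis $\eps \leq \frac{1}{6 m |v_{\min}|}$ yields $a - b \geq -3\eps\, m |v_{\min}| \geq -\tfrac12$. Finally, integrality of the valuations forces $a$ and $b$ to be integers, and an integer $a$ satisfying $a \geq b - \tfrac12$ must satisfy $a \geq b$; hence $v_i(A_i \setminus \{j\}) \geq v_k(A_k)$ and the same chore witnesses \EQ{1}.

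The only genuinely delicate point is keeping the direction of every inequality straight under the sign conventions: because chore utilities are non-positive, multiplying or dividing by $(1+3\eps)$ flips intuition (it \emph{weakens} rather than strengthens the bound), and the slack term $3\eps b$ is negative. The essential idea, however, is simply that the multiplicative $3\eps$ slack translates into an additive slack of at most $3\eps \cdot m|v_{\min}| \leq \tfrac12$, which is strictly below the minimum gap of $1$ between distinct integer utilities; integrality then collapses the approximate guarantee to the exact one. No appeal to the earlier lemmas is needed, and the case $\eps = 0$ is subsumed since then $3\eps$-\EQ{1} coincides with \EQ{1}.
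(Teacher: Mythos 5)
Your proof is correct and follows essentially the same argument as the paper: translate the multiplicative $(1+3\eps)$ slack into an additive error of at most $3\eps\, m|v_{\min}| \leq \tfrac{1}{2}$ (by bounding the comparison agent's disutility by $m|v_{\min}|$), then invoke integrality of the valuations to collapse the approximate guarantee to the exact one. The only difference is that you also spell out the trivial forward direction (\EQ{1} implies $3\eps$-\EQ{1}), which the paper leaves implicit.
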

\begin{proof}
Since $A$ is $3\eps$-\EQ{1} for $\I$, we have that for every pair of agents $i,k \in [n]$, either $v_k(A_k) \geq (1+3\eps) v_i(A_i)$ or there exists a chore $j \in A_k$ such that $v_k(A_k \setminus \{j\}) \geq (1+3\eps) v_i(A_i)$. Using the bound $\eps \leq \frac{1}{6 m |v_{\min}|}$, we get that either $v_i(A_i) - v_k(A_k) \leq \frac{1}{2}$ or there exists a chore $j \in A_k$ such that $v_i(A_i) - v_k(A_k \setminus \{j\}) \leq \frac{1}{2}$. Since the valuations are integral, this implies that either $v_i(A_i) - v_k(A_k) \leq 0$ or there exists a chore $j \in A_k$ such that $v_i(A_i) - v_k(A_k \setminus \{j\}) \leq 0$, which is the desired \EQ{1} condition.
\end{proof}

We are now ready to prove \Cref{lem:Correctness_ALG_EQ1+PO_original_instance}.
\CorrectnessALGEQonePOOriginal*

\begin{proof} 
The allocation $A$ returned by \AlgEQonePO{} is guaranteed to be $\eps$-\EQ{1} and \fPO{} with respect to the input instance $\I'$ (\Cref{lem:Correctness_ALG_EQ1+fPO_rounded_instance}).  \Cref{lem:epsEQ1_Rounded_3epsEQ1_Original,lem:fPO_Rounded_eps_PO_Original} together imply that $A$ is $3\eps$-\EQ{1} and $\eps$-\PO{} for $\I$. Furthermore, if $\eps \leq \frac{1}{6m |v_{\min}|^3}$, then the bounds in \Cref{lem:Bound_On_Eps_For_Exact_PO,lem:Bound_On_eps_for_Exact_EQone} are satisfied, which implies that $A$ is \EQ{1} and \PO{} for $\I$.
\end{proof}


\subsection{\EQX{} and \DEQX{} without the \texorpdfstring{$v_{i,j}<0$}{Negative Value} Condition}
\label{subsec:Zero-Valued-Chores-Removal-Variants}

In this section, we will consider alternative versions of \EQX{} and \DEQX{} notions wherein the $v_{i,j}<0$ condition is removed. 

Formally, an allocation $A$ is \emph{equitable up to any possibly zero-valued chore} (\EQXzero{}) if for every pair of agents $i,k \in [n]$ such that $A_i \neq \emptyset$ and for every chore $j \in A_i$, we have $v_i(A_i \setminus \{j\}) \geq v_k(A_k)$. Similarly, an allocation $A$ is \emph{equitable up to any possibly zero-valued duplicated chore} (\DEQXzero{}) if for every pair of agents $i,k \in [n]$ such that $A_i \neq \emptyset$ and for every chore $j \in A_i$, we have $v_i(A_i) \geq v_k(A_k \cup \{j\})$. Notice that an \EQXzero{} (respectively, \DEQXzero{}) allocation is also \EQX{} (respectively, \DEQX{}). The converse might not be true in general, but it does hold when agents have strictly negative valuations (i.e., $v_{i,j} < 0$ for all $i \in [n], j \in [m]$).

Recall from \Cref{thm:EQX_PO_Strong_NP-hardness} that checking the existence of \EQX{}+\PO{} is strongly \NPH{} even for strictly negative valuations. It readily follows that the same is true for \EQXzero{}+\PO{} allocations as well.

\begin{restatable}[\textbf{Hardness of \EQXzero{}+\PO{}}]{corollary}{EQXzeroPOStronglyNPHard}
Determining whether a given instance admits an allocation that is equitable up to any possibly zero-valued chore $(\EQXzero{})$ and Pareto optimal $(\PO{})$ is strongly \NPH{}.
\end{restatable}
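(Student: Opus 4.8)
The plan is to derive this hardness result directly from the reduction already constructed in the proof of \Cref{thm:EQX_PO_Strong_NP-hardness}, exploiting the fact that \EQX{} and \EQXzero{} coincide on instances with strictly negative valuations. Recall that every \EQXzero{} allocation is automatically \EQX{}; the only difference between the two notions is that the \EQX{} condition quantifies over chores $j \in A_i$ with $v_{i,j} < 0$, whereas \EQXzero{} quantifies over \emph{all} chores $j \in A_i$. When every valuation satisfies $v_{i,j} < 0$, these two quantifications range over exactly the same set of chores, so the two notions are equivalent on such instances.

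First I would invoke the polynomial-time reduction from \ThreePartition{} used to prove \Cref{thm:EQX_PO_Strong_NP-hardness}. A crucial feature of that reduction is that the instance it produces has \emph{strictly negative} valuations (every entry in \Cref{tab:EQX+PO_reduced_instance} is of the form $-b_j$, $-1$, $-B$, $-L$, or $-L'$, all strictly negative). Consequently, on the constructed instance an allocation is \EQX{} if and only if it is \EQXzero{}. It follows that the constructed instance admits an \EQXzero{}+\PO{} allocation if and only if it admits an \EQX{}+\PO{} allocation, which in turn holds if and only if the underlying \ThreePartition{} instance is a yes-instance, by the correctness of that reduction.

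Since \ThreePartition{} is strongly \NPH{} and the same reduction is being reused verbatim (with the same polynomially bounded magnitudes), this chain of equivalences shows that deciding the existence of an \EQXzero{}+\PO{} allocation is strongly \NPH{}. No new gadget or argument is required beyond the equivalence of the two notions on strictly negative instances, so there is no genuine obstacle: the entire difficulty was already handled in \Cref{thm:EQX_PO_Strong_NP-hardness}. The only point needing (minimal) verification is that the reduced instance is strictly negative so that the equivalence applies, which holds by construction.
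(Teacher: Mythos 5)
Your proposal is correct and matches the paper's own argument: the paper likewise observes that \EQX{} and \EQXzero{} coincide under strictly negative valuations, and since the reduction in \Cref{thm:EQX_PO_Strong_NP-hardness} produces only strictly negative valuations, its hardness transfers verbatim to \EQXzero{}+\PO{}. Your additional verification that every entry in \Cref{tab:EQX+PO_reduced_instance} is strictly negative is exactly the point the paper relies on, so nothing is missing.
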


We know from \Cref{prop:DEQX+PO} that a \DEQX{}+\PO{} allocation is guaranteed to exist. By contrast, a \DEQXzero{}+\PO{} allocation might not always exist, and checking the existence of such allocations is strongly \NPH{}.

\begin{restatable}[\textbf{Hardness of \DEQXzero{}+\PO{}}]{lemma}{DEQXzeroPOStronglyNPHard}
 \label{lem:DEQXzero_PO_strongly_NPHard}
 Determining whether a given instance admits an allocation that is equitable up to any possibly zero-valued duplicated chore $(\DEQXzero{})$ and Pareto optimal $(\PO{})$ is strongly \NPH{}.
\end{restatable}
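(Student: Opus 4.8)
The plan is to reduce from \ThreePartition{}, exploiting the fact that the $v_{i,j}<0$ clause is precisely what separates \DEQXzero{} from \DEQX{}. Since \DEQX{} and \DEQXzero{} coincide under strictly negative valuations and a \DEQX{}+\PO{} allocation always exists (\Cref{prop:DEQX+PO}), any hardness construction \emph{must} route its difficulty through \emph{zero-valued} chores. The key idea is to use zero-valued chores as ``equitability tripwires'' that are invisible to \DEQX{} but active for \DEQXzero{}.

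Concretely, given a \ThreePartition{} instance with $3r$ even integers $b_1,\dots,b_{3r}$ summing to $rB$, I would build an instance with $r$ agents $a_1,\dots,a_r$, $3r$ \emph{main} chores $C_1,\dots,C_{3r}$ valued at $-b_j$ by \emph{every} agent, and $r$ \emph{tripwire} chores $z_1,\dots,z_r$ with $v_{a_i,z_i}=0$ and $v_{a_k,z_i}=-1$ for all $k\neq i$. Because $a_i$ is the unique agent valuing $z_i$ at $0$, the Pareto-optimality fact in \Cref{rem:EQ1_PO_Negative_Vals} forces $z_i\in A_{a_i}$ in every \PO{} allocation, while the main chores are distributed arbitrarily among the (identically valued) agents. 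Since $z_i$ contributes $0$ to $a_i$'s utility and the $b_j$ are even, every agent's utility in any \PO{} allocation is a negative even integer equal to the sum of the main chores it holds.

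For the forward direction, a valid partition yields the allocation giving $a_i$ its tripwire $z_i$ together with the main chores of block $X^i$; each agent then has utility exactly $-B$, so the allocation is perfectly equitable (hence \DEQXzero{}, as equitability makes the condition $v_{a_i}(A_i)\geq v_{a_k}(A_k)+v_{a_k,j}$ trivial) and welfare-maximizing (hence \PO{}). For the backward direction, suppose $A$ is \DEQXzero{}+\PO{}, and fix a least happy agent $a_p$ and a happiest agent $a_q$. Since $z_p\in A_{a_p}$ and $v_{a_q,z_p}=-1$, instantiating \DEQXzero{} for the ordered pair $(a_p,a_q)$ with the duplicated chore $z_p$ gives $v_{a_q}(A_q)-v_{a_p}(A_p)\leq 1$; as both utilities are even, this forces all agents to share a common even utility, which must be $-B$ because the total value of all main chores is $-rB$. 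Reading off, for each $i$ the main chores in $A_{a_i}$ sum to exactly $B$, recovering a \ThreePartition{} solution.

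The main obstacle is conceptual rather than computational: one must recognize that hardness cannot come from the strictly negative regime (where \DEQXzero{} coincides with \DEQX{}, which is always attainable), and then engineer a gadget in which a zero-valued chore is simultaneously (i) \PO{}-pinned to a designated agent as its unique zero-valuer and (ii) counted as a legal duplicated chore by \DEQXzero{} but not by \DEQX{}. The delicate quantitative point is the parity bookkeeping: the tripwire value must be the \emph{odd} number $-1$ (with all other magnitudes even) so that the ``up to any'' bound of $1$ collapses to exact equitability, mirroring the even/odd argument used in \Cref{thm:EQX_PO_Strong_NP-hardness}.
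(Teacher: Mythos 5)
Your proposal is correct and matches the paper's own proof essentially step for step: the paper also reduces from \ThreePartition{} using $r$ agents, $3r$ main chores valued $-b_j$ identically by all agents, and $r$ personal chores $c_{3r+i}$ valued $0$ by agent $a_i$ and $-1$ by everyone else, with \PO{} pinning each personal chore to its unique zero-valuer and the even/odd parity argument collapsing the \DEQXzero{} slack of $1$ to exact equitability. The only cosmetic difference is that the paper argues the backward direction by contradiction (no partition implies a \DEQXzero{} violation) rather than directly deriving equal utilities, which is logically the same argument.
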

\begin{proof}
We will show a reduction from \ThreePartition{}. An instance of \ThreePartition{} consists of a set $X = \{b_1,\dots,b_{3r}\}$ of $3r$ positive integers where $r \in \N$, and the goal is to find a partition of $X$ into $r$ subsets $X^1,\dots,X^r$ such that the sum of numbers in each subset is equal to $B \coloneqq \frac{1}{r} \sum_{b_i \in X} b_i$.\footnote{We do not require the sets $X^1,\dots,X^r$ to be of cardinality three each; \ThreePartition{} remains strongly \NPhard{} even without this constraint.} We will assume, without loss of generality, that for every $i \in [3r]$, $b_i$ is even and $b_i \geq 2$.

We will construct a chore division instance with $r$ agents $a_1,\dots,a_r$ and $4r$ chores $c_1,\dots,c_{4r}$. For every $i \in [r]$ and $j \in [3r]$, agent $a_i$ values chore $c_j$ at $-b_j$. In addition, agent $a_i$ values the chore $c_{3r+i}$ at $0$, and every other chore in $\{c_{3r+1},\dots,c_{4r}\} \setminus \{c_{3r+i}\}$ at $-1$.

($\Rightarrow$) Suppose $X^1,\dots,X^r$ is a solution of \ThreePartition{}. Then, we can construct a perfectly equitable (and therefore also \DEQXzero{}) and Pareto optimal allocation as follows: For every $i \in [r]$ and $j \in [3r]$, $A_i = \{c_j \, : \, b_j \in X^i\} \cup \{c_{3r+i}\}$. 

($\Leftarrow$) Now suppose there exists an allocation $A$ that is \DEQXzero{}+\PO{}, but the instance of \ThreePartition{} does not have a solution. For every $i \in [r]$, the chore $c_{3r+i}$ must be assigned to agent $i$ under any Pareto optimal allocation. Regardless of how the remaining chores $c_1,\dots,c_{3r}$ are allocated, it must be the case that there exist agents $i,k$ with $v_i(A_i) > v_k(A_k)$. All goods of non-zero value owned by an agent are valued at an even number. In particular, the quantities $v_i(A_i)$ and $v_k(A_k)$ are even, and so is their difference. Furthermore, since $v_i(\{c_{3r+k}\}) = -1$, we have that $v_i(A_i \cup \{c_{3r+k}\}) > v_k(A_k)$, implying that $A$ violates \DEQXzero{}---a contradiction.
\end{proof}

We remark that the proof of \Cref{lem:DEQXzero_PO_strongly_NPHard} can also be used to show strong \NPH{}ness of checking the existence of an \EFXzero{}+\PO{} allocation, where \EFXzero{} is the analogue of \EFX{} without the $v_{i,j} < 0$ condition. That is, an allocation $A$ is \EFXzero{} if for every pair of agents $i,k \in [n]$ such that $A_i \neq \emptyset$ and for every chore $j \in A_i$, we have $v_i(A_i \setminus \{j\}) \geq v_i(A_k)$.

In the remainder of this section, we will show that checking the existence of \EQXzero{}+\PO{} allocations remains \NPH{} even for the special case of \emph{binary valuations} (\Cref{thm:EQXzero_Binary_Hardness}). Recall that a chores instance is said to have binary valuations if for every agent $i \in [n]$ and every chore $j \in [m]$, we have $v_{i,j} \in \{-1,0\}$. We will write $\Gamma \coloneqq \{j \in [m] \, : \, v_{i,j} = 0 \text{ for some agent } i \in [n]\}$ to denote the set of chores that are valued at $0$ by one or more agents. It is easy to see that for binary valuations, a necessary and sufficient condition for an allocation to be Pareto optimal is that each chore in $\Gamma$ is assigned to an agent that values it at $0$. As a consequence, for binary valuations, one can check in polynomial time whether a given allocation satisfies Pareto optimality.

\begin{restatable}[\textbf{Hardness of \EQXzero{}+\PO{} for binary valuations}]{theorem}{EQXzeroPOBinaryHardness}
Determining whether a given instance with binary valuations admits an allocation that is equitable up to any possibly zero-valued chore $(\EQXzero{})$ and Pareto optimal $(\PO{})$ is \NPC{}.
\label{thm:EQXzero_Binary_Hardness}
\end{restatable}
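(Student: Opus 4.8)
The plan is to establish both membership in \NP{} and \NP{}-hardness. Membership is immediate: as already observed in the excerpt, for binary valuations Pareto optimality is checkable in polynomial time (each chore in $\Gamma$ must be assigned to an agent valuing it at $0$), and \EQXzero{} is polynomial-time checkable directly from its definition, so the decision problem lies in \NP{}. The first substantive step is to reduce the combined \EQXzero{}+\PO{} requirement to a clean condition on \emph{loads}. Call a chore \emph{rigid} if it lies outside $\Gamma$ (so every agent values it $-1$), let $R$ denote the set of rigid chores, and call the chores in $\Gamma$ \emph{flexible}. In any \PO{} allocation every flexible chore is owned by an agent who values it $0$ and hence contributes nothing to its owner's disutility, so $v_i(A_i) = -n_i$ where $n_i$ is the number of rigid chores assigned to $i$. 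Writing $L \coloneqq \min_k n_k$, I would show that \EQXzero{} is equivalent to the following: every agent that owns a chore it values at $0$ (equivalently, owns a flexible chore) has $n_i = L$, while every other non-empty agent has $n_i \le L+1$. This follows by instantiating the \EQXzero{} inequality with the chore whose removal helps the owner \emph{least} — a $0$-valued chore if the owner has one, and a $-1$-valued chore otherwise — and using $\max_k v_k(A_k) = -L$.

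Next I would give the reduction from \IndSet{}. Given a graph $G=(V,E)$ and an integer $k$ with $1 \le k < |V|$, I create one agent per vertex, one flexible chore $f_e$ per edge $e=\{u,v\}$ valued $0$ by $u$ and $v$ and $-1$ by every other agent, and $|V|+k$ rigid chores valued $-1$ by all agents. Because $|R| = |V|+k > |V|$, I would first rule out the degenerate regime $L=0$: if some agent owned no rigid chore, the load characterization would force $|R| \le |V|$, a contradiction. Hence every agent owns at least one rigid chore, all loads lie in $\{L,L+1\}$ with exactly $|R| - |V|L$ agents at level $L+1$, and a one-line counting argument (solving $|V|(L-1) = k - s$ for $0 \le s \le |V|$) pins down $L = 1$ and exactly $k$ agents of load $2$.

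Finally I would argue the equivalence in both directions. The load-$2$ agents own no $0$-valued chore, since otherwise the load characterization would force them to the minimum load $L=1$; thus they own no flexible chore. As \PO{} forces each $f_e$ onto an endpoint of $e$, every flexible chore is owned by an agent in $V \setminus H$, where $H$ is the set of $k$ load-$2$ agents, so $V\setminus H$ meets every edge and is a vertex cover — equivalently, $H$ is an independent set of size $k$. Conversely, from an independent set $H$ of size $k$ I would build the allocation directly: route each $f_e$ to an endpoint in $V\setminus H$ (possible since $V\setminus H$ is a vertex cover), assign two rigid chores to each agent of $H$ and one to every other agent, and verify that the load condition (and hence \EQXzero{}+\PO{}) holds. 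I expect the main obstacle to be precisely this structural characterization together with the exclusion of the $L=0$ case: the interplay between ``owns a $0$-valued chore $\Rightarrow$ minimum load'' and the near-balance of the remaining loads is what encodes the independent-set constraint, and making the counting (the forced values of $L$ and of the number of top-load agents) line up so that feasibility corresponds to an independent set of size \emph{exactly} $k$ is where the correctness of the reduction rests.
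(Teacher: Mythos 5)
Your proposal is correct, but it takes a genuinely different route from the paper. The paper reduces from \VertexCover{} and argues the backward direction by two ad hoc structural claims: no agent may receive two dummy chores (else its utility is at most $-2$ while some agent sits at $0$), and no dummy-owner may receive an edge chore (else removing that $0$-valued chore still leaves it at $-1$ against a $0$-utility agent); the cover is then read off as the $k$ agents holding the edge chores, with $r-k$ dummy chores forcing utilities into $\{0,-1\}$. You instead first prove a clean, reusable characterization: under \PO{} with binary valuations, an allocation is \EQXzero{} if and only if every agent owning a $0$-valued chore carries exactly the minimum rigid load $L$ and every other nonempty agent carries at most $L+1$. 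You then reduce from \IndSet{} with $|V|+k$ rigid chores, so a counting argument ($|V|(L-1)=k-s$ with $0\le s\le |V|$ and $1\le k<|V|$) pins down $L=1$ with exactly $k$ agents at load $2$, and the characterization forces those $k$ agents to own no edge chores, making their vertex set independent; I verified the characterization, the exclusion of $L=0$, the counting, and both directions of the equivalence, and all are sound (the ``size exactly $k$'' versus ``size at least $k$'' issue is harmless since independence is closed under taking subsets, and the range $1\le k<|V|$ preserves \NP{}-completeness). Since $H$ is independent exactly when $V\setminus H$ is a cover, the two reductions are morally dual---your load-$2$ agents play the role of the paper's dummy-owners---but your version buys a modular lemma that makes transparent \emph{why} hardness arises (near-balanced integer loads encoding a covering constraint), at the cost of the counting machinery, whereas the paper's argument is shorter and entirely self-contained; you also handle \NP{} membership explicitly, which the paper leaves implicit in its remark that \PO{} is polynomial-time checkable for binary valuations.
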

\begin{proof}
We will show a reduction from \VertexCover{}, which is known to be \NPC{}~\citep{GJ79computers}. An instance of \VertexCover{} consists of a graph $G = (V,E)$ and a positive integer $k \leq |V|$. The goal is determine whether $G$ admits a \emph{vertex cover} of size at most $k$ (i.e., a set $V' \subseteq V$ such that $|V'|\leq k$ and for every edge $e \in E$, there exists a vertex $v \in V'$ that is adjacent to $e$ in the graph $G$). We will use $r \coloneqq |V|$ and $s \coloneqq |E|$ to denote the number of vertices and edges in the graph $G$, respectively.

We will construct a fair division instance with $r$ \emph{vertex} agents, denoted by $a_1,\dots,a_r$, and $r+s-k$ chores. The set of chores consists of $s$ \emph{edge} chores $C_1,\dots,C_s$ and $r-k$ \emph{dummy} chores $D_1,\dots,D_{r-k}$. Each dummy chore is valued at $-1$ by all agents. Additionally, for every $i \in [r]$ and every $j \in [s]$, the edge chore $C_j$ is valued at $0$ by the vertex agent $a_i$ if the vertex $v_i$ is adjacent to the edge $e_j \in E$ in the graph $G$ (i.e., $v_i \in e_j$), and at $-1$ otherwise.

($\Rightarrow$) Suppose $V' \subseteq V$ is a vertex cover of size at most $k$. Then, the desired allocation, say $A$, can be constructed as follows: For every $j \in [s]$, the edge chore $C_j$ is assigned to vertex agent $a_i$ if the vertex $v_i$ is in the vertex cover and the edge $e_j$ is adjacent to $v_i$, i.e., $v_i \in V'$ and $v_i \in e_j$.\footnote{If multiple vertex agents fit this description, then pick one arbitrarily.} The dummy chores are assigned uniformly among the $r-k$ agents whose corresponding vertices are not included in the vertex cover.

The allocation constructed above satisfies the sufficient condition for Pareto optimality, so we only need to check for \EQXzero{}. Notice that the utility of an agent in the allocation $A$ is either $0$ (for agents who get the edge chores) or $-1$ (for agents who get the dummy chores). For any pair of agents $a_i,a_k$ such that $v_i(A_i) < v_k(A_k)$, it must be that $v_i(A_i) = -1$ and $v_k(A_k) = 0$. Then, by construction, agent $a_i$ gets exactly one (dummy) chore, i.e., $|A_i| = 1$. In that case, we have that for every chore $c \in A_i$, $v_i(A_i \setminus \{c\}) = 0 \geq v_k(A_k)$, implying that $A$ is \EQXzero{}.

($\Leftarrow$) Now suppose there exists an \EQXzero{}+\PO{} allocation $A$. Then, $A$ must satisfy the necessary condition for Pareto optimality, that is, any chore that is valued at zero by one or more agents must be assigned to some agent that values it at $0$. Thus, any edge chore $C_j$ is assigned to a vertex agent $a_i$ such that $v_i \in e_j$. This, in turn, means that if an agent does not get a dummy chore, then its utility must be $0$.

We claim that no agent can be assigned more than one dummy chore. Suppose, for contradiction, that agent $a_i$ gets two or more dummy chores. Then, $v_i(A_i) \leq -2$. Since the number of dummy chores is strictly smaller than the number of agents, some agent, say $a_k$, must miss out on getting a dummy chore. Then, by the above observation, we must have that $v_k(A_k) = 0$. This, however, creates a violation of \EQXzero{}, since $v_i(A_i \setminus \{c\}) \leq -1$ for every chore $c \in A_i$ because of binary valuations. Therefore, each dummy chore must be assigned to a distinct agent. We will write $N_D$ to denote the set of all agents that are assigned a dummy chore in $A$. Notice that $|N_D| = r-k$.

We will now argue that no agent in $N_D$ is assigned an edge chore. Indeed, if some agent $a_i \in N_D$ gets an edge chore $C_j$, then from \EQXzero{} condition, we have that for any other agent $a_k$, 
\begin{equation}
    v_i(A_i \setminus \{C_j\}) \geq v_k(A_k).
    \label{eqn:EQXzero_proof}
\end{equation}
In particular, \Cref{eqn:EQXzero_proof} should hold for any $a_k \in [n] \setminus N_D$. By the above observation, all edge chores are assigned to agents that value them at $0$, thus $v_i(\{C_j\}) = 0$. Substituting this in \Cref{eqn:EQXzero_proof} gives that $v_i(A_i) \geq v_k(A_k)$, which is a contradiction because $v_i(A_i) = -1$ ($a_i$ gets a dummy chore) and $v_k(A_k) = 0$ ($a_k$ does not get any dummy chore). Thus, all edge chores must be assigned to agents in the set $[n] \setminus N_D$, which consists of $k$ agents.

Define $V' \coloneqq \{v_i \in V : a_i \text{ gets an edge chore in } A\}$ as the set of vertices of graph $G$ whose corresponding vertex agents get one or more edge chores in $A$. By the above argument, we have that $|V'| \leq k$. Furthermore, for every edge $e_j \in E$, if the edge chore $C_j$ is assigned to vertex agent $a_i$, then $v_{i,j} = 0$, meaning that the vertex $v_i$ must be adjacent to edge $e_j$. Thus, $V'$ is a valid vertex cover, as desired.
\end{proof}

\subsection{Additional Experimental Details}
\label{subsec:Experiments_Appendix}

\paragraph{Synthetic dataset}
For synthetic experiments, we work with five agents and twenty chores. Following the experimental setup of \citet{FSV+19equitable}, we generate $1000$ instances with valuations drawn from the Dirichlet distribution, with the concentration parameter set to $10$. For each agent, the sampling process returns $m$ positive real numbers that sum up to $1$. These numbers are multiplied by a common `budget' (equal to $1000$). We then round \emph{up} these numbers (in order to ensure integrality) and flip their sign (for chores). Finally, in order to restore normalization, a good is selected uniformly at random and the agent's valuation for that good is increased by $1$. This process is repeated until the (integral) valuations sum up to the budget (i.e., become normalized).

\paragraph{Spliddit dataset}
As mentioned previously in \Cref{sec:Experiments}, Spliddit allows the users to specify the number of copies of each chore, and express their preferences in the form of multipliers. The website further assumes that each agent's valuation for the grand bundle (i.e., all copies of all chores) is $-100$, and uses this assumption to infer the exact numerical preferences of the agents from the multipliers.

We perform one rounding step and two pruning steps on the dataset obtained from Spliddit, as follows:

\begin{itemize}
    \item We round \emph{down} (i.e., away from zero) each agent's valuation for each copy of every chore to ensure integrality. Note that the rounded valuations might no longer be normalized.

    \item We remove the instances with fewer chores than agents (i.e., $m<n$), since any matching of chores to agents satisfies \EQ{1}/\EQX{} for these instances.

    \item We remove instances with more than $1000$ copies of any chore. This choice is made because the aforementioned rounding step typically results in instances where all copies of all chores are valued at $-1$ for every agent, leading to trivial instances.
\end{itemize}

The aforementioned pruning steps bring down the number of instances from $2889$ to $2613$. As mentioned earlier, in the pruned dataset, the number of agents ranges between $2$ and $15$, and the number of chore ranges between $3$ and $1100$.

\paragraph{Spliddit algorithm}
Spliddit implements the following two-step randomized algorithm:\footnote{http://www.spliddit.org/apps/tasks}

\begin{itemize}
    \item First, a linear program is used to compute an \emph{egalitarian-equivalent} solution~\citep{PS78egalitarian}, i.e., a fractional allocation that is perfectly equitable and minimizes the agents' disutilities. From a generalization of the Birkhoff-von Neumann theorem~\citep{BCK+13designing}, it follows that this fractional solution can be expressed a distribution over integral allocations wherein all copies of a chore are assigned to the same agent. The said fractional allocation is therefore ex ante equitable.

    \item Next, another linear program is used to construct the aforementioned distribution, and subsequently an integral allocation is sampled from this distribution so that the expected number of chores assigned to each agent is preserved.\\
\end{itemize}

\end{document}